\newtheorem{proposition}{Proposition}
\newtheorem{lemma}{Lemma}
\newtheorem{definition}{Definition}
\def\bra#1{\mathinner{\langle{#1}|}}
\def\ket#1{\mathinner{|{#1}\rangle}}
\def\ketbra#1{\mathinner{|{#1}\rangle\!\langle{#1}|}}
  \gdef\Braket#1{\left<\mathcode`\|"8000\let|\BraVert {#1}\right>}}
\def\BraVert{\egroup\,\mid@vertical\,\bgroup}
\newcommand{\oprod}[2]{| #1 \rangle\langle #2 |}
\def\dket#1{\mathinner{|{#1}\rangle\!\rangle}}
\def\dketbra#1{\mathinner{|{#1}\rangle\!\rangle\!\langle\!\langle{#1}|}}
\DeclareMathOperator{\Tr}{Tr}
\renewcommand\L{\mathcal{L}}
\renewcommand\P{\mathcal{P}}
\newcommand{\M}{\mathcal{M}}
\newcommand{\HS}{\mathcal{H}}
\newcommand{\N}{\mathcal{N}}
\newcommand{\A}{\mathcal{A}}
\newcommand{\K}{\mathcal{K}}
\newcommand{\X}{\mathcal{X}}
\newcommand{\W}{\mathcal{W}}
\newcommand{\C}{\mathcal{C}}
\newcommand{\id}{\mathbbm{1}}
\newcommand{\sep}{\text{sep}}
\newenvironment{customthm}[1]
  {\innercustomthm}
  {\endinnercustomthm}
\begin{document}

\title{On the definition and characterisation of multipartite causal (non)separability}

\author{Julian Wechs}
%\email{julian.wechs@neel.cnrs.fr}
\affiliation{Univ.\ Grenoble Alpes, CNRS, Grenoble INP\footnote{Institute of Engineering Univ. Grenoble Alpes}, Institut N\'eel, 38000 Grenoble, France}

\author{Alastair A. Abbott}
%\email{alastair.abbott@neel.cnrs.fr}
\affiliation{Univ.\ Grenoble Alpes, CNRS, Grenoble INP\footnote{Institute of Engineering Univ. Grenoble Alpes}, Institut N\'eel, 38000 Grenoble, France}

\author{Cyril Branciard}
%\email{cyril.branciard@neel.cnrs.fr}
\affiliation{Univ.\ Grenoble Alpes, CNRS, Grenoble INP\footnote{Institute of Engineering Univ. Grenoble Alpes}, Institut N\'eel, 38000 Grenoble, France}

\date{\today}

\begin{abstract}

The concept of causal nonseparability has been recently introduced, in opposition to that of causal separability, to qualify physical processes that locally abide by the laws of quantum theory, but cannot be embedded in a well-defined global causal structure. 
While the definition is unambiguous in the bipartite case, its generalisation to the multipartite case is not so straightforward.
Two seemingly different generalisations have been proposed, one for a restricted tripartite scenario and one for the general multipartite case. Here we compare the two, showing that they are in fact inequivalent. 
We propose our own definition of causal (non)separability for the general case, which---although \emph{a priori} subtly different---turns out to be equivalent to the concept of ``extensible causal (non)separability'' introduced before, and which we argue is a more natural definition for general multipartite scenarios. 
We then derive necessary, as well as sufficient conditions to characterise causally (non)separable processes in practice. 
These allow one to devise practical tests, by generalising the tool of witnesses of causal nonseparability.

\end{abstract}

\maketitle

\section{Introduction}

The notion of a causal order between events is an essential ingredient in our understanding of the world. 
Our conventional view of causality is that events are ordered according to some global time parameter, with past events influencing future events, but not \emph{vice versa}. 
One may however wonder whether this concept is really fundamental, or whether scenarios without such an underlying background causal structure are conceivable. 
The situation is particularly interesting in quantum theory, where the properties of physical systems are not always well-defined, and where the question arises of whether the causal structure itself can be subject to quantum effects in a similar way. 
These questions are of great importance for the foundations of physics~\cite{hardy05,oreshkov12,zych17}, but they are also motivated by a more practical point of view, as new resources for quantum information processing become available when the assumption of a definite causal structure is relaxed~\cite{chiribella13}.
Recent works have demonstrated that, for instance, indefinite causal orders can enable advantages in regard to query complexity~\cite{chiribella12,colnaghi12,araujo14,facchini15}, communication complexity~\cite{feix15,guerin16} and other information processing tasks~\cite{ebler18,salek18,chiribella18}.

A particular model describing causal relations between quantum events is the so-called process matrix formalism~\cite{oreshkov12}. 
In this framework, quantum events are assumed to take place locally, but the causal order between them is not specified \emph{a priori}. The physical resource relating the local events is described by a \emph{process matrix}, which, broadly speaking, is a generalisation of a multipartite density matrix allowing also for the description of signalling scenarios, such as quantum channels.
As it turns out, some scenarios arising within this formalism are indeed incompatible with any definite causal order. The process matrices corresponding to these scenarios are called \emph{causally nonseparable}, while the process matrices describing scenarios compatible with a well-defined causal structure are called \emph{causally separable}. 

The process matrix formalism was initially introduced for two local events. 
In that bipartite case, the notion of causal (non)separability is clearly defined and well understood. 
In particular, the causal (non)separability of any bipartite process matrix can be determined using \emph{witnesses of causal nonseparability}~\cite{araujo15,branciard16a}, similar conceptually to entanglement witnesses. 
In order to comprehensively understand causal indefiniteness from a fundamental perspective, and to explore more deeply the question of how they can be harnessed as a quantum information processing resource, it is essential to clarify how the absence of a causal order can be described, characterised and certified also in multipartite scenarios. 
While the formalism of process matrices generalises rather easily to more parties~\cite{baumeler14,araujo15,oreshkov16}, the notion of causal (non)separability becomes less clear. 
In fact, several different definitions have recently been proposed to generalise the bipartite case~\cite{araujo15,oreshkov16} which, as it turns out, are not equivalent.

In this work, we clarify the definition of causal (non)separability in multipartite scenarios. 
After recalling the framework and definitions in the bipartite case, we compare the generalisations of causal (non)separability that have been proposed so far, before proposing and motivating our own definition for the multipartite case (Definition~\ref{def:our_def-CS}). 
We then provide a characterisation of multipartite causally (non)separable processes via necessary as well as sufficient conditions (Propositions~\ref{prop:CS_charact_3}, \ref{prop:CS_necessary} and~\ref{prop:CS_sufficient}), allowing us to generalise the tool of witnesses of causal nonseparability.

\section{Process matrix formalism: \\ the basics}
\label{sec:W_formalism}

The process matrix formalism is perhaps most easily understood in the bipartite case~\cite{oreshkov12}. 
To begin with, let us briefly recall its framework for this case, before turning to the generalisation to multipartite scenarios.

\subsection{Bipartite process matrices}
\label{subsec:W_formalism_N_2}

The formalism of process matrices was introduced in Ref.~\cite{oreshkov12} to study correlations between events that locally obey the laws of quantum theory, but which are not \emph{a priori} embedded into any global causal order. 
In the bipartite scenario, two parties, who we shall call Alice ($A$) and Bob ($B$), are each associated with closed laboratories.  
The parties perform an experiment during which their interactions with the ``outside world'' (and hence with each other) are restricted to opening their laboratories only once to let an incoming physical system enter, and once to send out an outgoing system.
Alice and Bob may choose local operations to perform within their laboratories, possibly depending on some external (classical) input $x$ or $y$ for $A$ and $B$, and producing (classical) measurement outcomes $a$ and $b$, respectively.
The correlations established between the parties after repeating the experiment many times are described by the conditional probability distribution $P(a,b|x,y)$.

While no assumption is made about the global causal order between the parties, we assume that the local operations performed inside the laboratories are described by standard quantum theory. 
We can therefore assign some ``incoming'' and ``outgoing'' Hilbert spaces to the parties, which we denote $\HS^{A_I}, \HS^{A_O}$ (for Alice) and  $\HS^{B_I}, \HS^{B_O}$ (for Bob), of dimensions $d_{A_I}, d_{A_O}, d_{B_I}$ and $d_{B_O}$, respectively. 
The spaces of Hermitian linear operators over these Hilbert spaces will simply be denoted by $A_I, A_O, B_I$ and $B_O$. 
For convenience we also define $A_{IO} \coloneqq A_I \otimes A_O$, $B_{IO} \coloneqq B_I \otimes B_O$, $d_{A_{IO}} \coloneqq d_{A_I} d_{A_O}$ and $d_{B_{IO}} \coloneqq d_{B_I} d_{B_O}$. 
In this paper, we will only consider finite-dimensional Hilbert spaces; 
for a generalisation of the framework to infinite-dimensional systems, see Ref.~\cite{giacomini16}.

According to quantum theory, Alice and Bob's local operations can most generally be described as quantum instruments~\cite{davies70}---that is, sets of completely positive (CP) maps that sum up to completely positive trace-preserving (CPTP) maps. 
The Choi-Jamio\l{}kowski (CJ) isomorphism~\cite{jamiolkowski72,choi75} allows us to represent these CP maps as positive semidefinite matrices $M_{a|x}^{A_{IO}}, M_{b|y}^{B_{IO}}$, and the CPTP maps as positive semidefinite matrices $M_{x}^{A_{IO}} \coloneqq \sum_a M_{a|x}^{A_{IO}}$, $M_{y}^{B_{IO}} \coloneqq \sum_b M_{b|y}^{B_{IO}}$ that satisfy $\Tr_{A_O} M_{x}^{A_{IO}} = \id^{A_I}$ and $\Tr_{B_O} M_{y}^{B_{IO}} = \id^{B_I}$. 
Here, $\Tr_X$ denotes the partial trace over the system $X$, and $\id^X$ denotes the identity operator in the space $X$ (in general, superscripts on operators, which may be omitted when clear enough, denote the system(s) they apply to).

As shown in Ref.~\cite{oreshkov12}, requiring compatibility with quantum mechanics locally and assuming the non-contextuality of the probabilities imply that the probabilities $P(a,b|x,y)$ must be bilinear in the CP maps associated with the operations of $A$ and $B$---or, equivalently, bilinear in their CJ representations. 
(Throughout this paper we will often refer to CP maps by their equivalent CJ representation and \emph{vice versa}.)
It follows that the overall process can be described by a Hermitian operator, a ``process matrix'' $W \in A_{IO} \otimes B_{IO}$~\cite{oreshkov12}, such that the correlations are obtained via the \emph{generalised Born rule}
\begin{equation} \label{eq:born_rule_bipartite}
P(a,b|x,y) = \Tr\left[ M_{a|x}^{A_{IO}} \otimes M_{b|y}^{B_{IO}} \cdot W\right] 
\end{equation}
(where $\Tr$ is now the full trace).

The framework also permits the parties to share, in addition to the process matrix, some (possibly entangled) ancillary quantum state that can be accessed via their local operations. 
The parties may thus have access also to some extra incoming Hilbert spaces $\HS^{A_{I'}}$ and $\HS^{B_{I'}}$ of arbitrary (finite) dimension, and be able to perform CP maps $M_{a|x}^{A_{II'O}} \in A_{II'O} \coloneqq A_I \otimes A_{I'} \otimes A_O$ and $M_{b|y}^{B_{II'O}} \in B_{II'O} \coloneqq B_I \otimes B_{I'} \otimes B_O$, respectively (where as before, $A_{I'}$ and $B_{I'}$ are the spaces of Hermitian linear operators over $\HS^{A_{I'}}$ and $\HS^{B_{I'}}$).
This implies that any process matrix $W \in A_{IO} \otimes B_{IO}$ can be extended to a process matrix $W \otimes \rho \in A_{II'O} \otimes B_{II'O}$, for any extra incoming spaces $A_{I'}, B_{I'}$ and any $\rho\in A_{I'} \otimes B_{I'}$~\cite{oreshkov12}.

Requiring Eq.~\eqref{eq:born_rule_bipartite} to yield valid (i.e., nonnegative and normalised) probabilities, even when the parties share arbitrary ancillary states, is equivalent to $W$ satisfying the following constraints:
\begin{equation}\label{eq:valid_bipartite}
W \ge 0, \ \ W \in \L^{\{A,B\}}, \ \ \text{and} \ \ \Tr W = d_{A_O} d_{B_O} 
\end{equation}
for some particular linear subspace $\L^{\{A,B\}}$ of $A_{IO} \otimes B_{IO}$; see Sec.~\ref{subsec:valid_W_and_Afirst} and Appendix~\ref{app:charact_valid_Ws} for an explicit characterisation~\cite{oreshkov12,araujo15}. 
In the following we will refer to a matrix satisfying the first two constraints above (i.e., without necessarily imposing the normalisation constraint $\Tr W = d_{A_O} d_{B_O}$) as a \emph{valid} process matrix, and whenever we talk about a process matrix $W$ we always implicitly assume it is valid. 
Hermitian matrices that are not valid process matrices will simply be referred to as ``matrices''.

\subsection{Bipartite causal (non)separability}
\label{sec:bipartite_CS}

One may now consider the question, whether the situation described by a process matrix can be embedded in a well-defined causal structure, with a fixed causal order between the events happening in each party's laboratory, or not.

A process matrix is said to be ``compatible with (the causal order) $A \prec B$'' (sometimes abbreviated to just ``$A \prec B$'', e.g., in superscripts) if all the correlations it generates are compatible with a causal order where $A$ acts before $B$, which is to be understood operationally: such a process matrix $W^{A\prec B}$ does not allow for any signalling from $B$ to $A$. 
More precisely, whatever the CP and CPTP maps $M_{a|x}^{A_{IO}}, M_{y^{(\prime)}}^{B_{IO}}$ of $A$ and $B$, the resulting correlations respect the no-signalling condition $P(a|x,y) = P(a|x,y')$, or $\Tr[ M_{a|x}^{A_{IO}} \otimes M_{y}^{B_{IO}} \cdot W^{A \prec B}] = \Tr[ M_{a|x}^{A_{IO}} \otimes M_{y'}^{B_{IO}} \cdot W^{A \prec B}]$ according to Eq.~\eqref{eq:born_rule_bipartite}.
This constrains $W^{A \prec B}$ to be in a linear subspace $\L^{A \prec B}\subset \L^{\{A,B\}}$ of $A_{IO} \otimes B_{IO}$; see Sec.~\ref{subsec:valid_W_and_Afirst} and Appendix~\ref{app:fixed_order} for an explicit characterisation of $\L^{A \prec B}$. 

Likewise, process matrices that do not allow signalling from $A$ to $B$ are said to be compatible with the causal order $B \prec A$, and will typically be denoted $W^{B \prec A} \in \L^{B \prec A}$. 
One can also conceive of situations where the causal order is not fixed to be the same for all experimental runs, but where there is instead a probabilistic mixture of the two possibilities. 
Such a scenario is described by a convex combination of process matrices compatible with $A \prec B$ and $B \prec A$, respectively.
Process matrices of this form remain compatible with an underlying causal framework and are the subject of the following definition, first introduced by Oreshkov, Costa and Brukner~\cite{oreshkov12}:
\begin{definition}[Bipartite causal (non)separability~\cite{oreshkov12}] \label{def:biCS}
A bipartite process matrix $W$ is said to be \emph{causally separable} if and only if it can be written as a convex combination
\begin{equation}\label{def:caus_sep_bi}
	 W \, = \, q \, W^{A \prec B} \, + \, (1{-}q) \, W^{B \prec A} \, , 
\end{equation}
with $q \in [0,1]$ and where $W^{A \prec B}$ and $W^{B \prec A}$ are two process matrices compatible with the causal orders $A \prec B$ and $B \prec A$, respectively.

A process matrix that cannot be decomposed as above is said to be \emph{causally nonseparable}.
\end{definition}

Causally separable process matrices thus describe the most general bipartite situations where one can identify a definite causal order between the parties, be it fixed for all experimental runs or subject to classical randomness. 
In contrast, if a process matrix is causally nonseparable, it is incompatible with any causal order between $A$ and $B$. 
In the bipartite case, causal (non)separability can be easily and efficiently verified; in particular, any causally nonseparable process can be detected using a \emph{witness of causal nonseparability}~\cite{araujo15,branciard16a} (see Sec.~\ref{subsec:witnesses}).

\subsection{Towards generalising to more parties}
\label{sec:W_def_multipartite}

The process matrix framework itself generalises rather easily to the multipartite case.

Let us first introduce some generalised notations. We shall consider $N$ parties denoted by $A_k$ for $k \in \{ 1, \ldots, N\} \coloneqq \N$, with corresponding inputs and outputs denoted by $x_k,$ and $a_k$, respectively. 
We define the input and output vectors $\vec x \coloneqq (x_1, \ldots, x_N)$ and $\vec a \coloneqq (a_1, \ldots, a_N)$. 
The ``incoming'' and ``outgoing'' Hilbert spaces for each party are denoted by $\HS^{A_I^k}, \HS^{A_O^k}$ (of dimensions $d_{A_I^k}, d_{A_O^k}$, respectively), while the spaces of Hermitian linear operators over these Hilbert spaces are denoted by $A_I^k, A_O^k$. 
We also define $A_{IO}^k \coloneqq A_I^k \otimes A_O^k$, and $d_{A_{IO}^k} \coloneqq d_{A_I^k} d_{A_O^k}$. 

For a subset $\K \subseteq \N$ of parties, we will denote by $\vec x_\K$ and $\vec a_\K$ the vectors of inputs and outputs restricted to the parties in $\K$, and use shorthand notations like $A_{IO}^\K \coloneqq \bigotimes_{k \in \K} A_{IO}^k$ ($= \mathbb{R}$ if $\K = \emptyset$), $\id^\K \coloneqq \bigotimes_{k \in \K} \id^{A_{IO}^k} = \id^{A_{IO}^\K}$, and $\Tr_\K$ for the trace over all (incoming and outgoing) systems of the parties in $\K$---i.e., $\Tr_{A^\K_{IO}}$ or $\Tr_{A^\K_{II'O}}$, as appropriate (see below), and with $\Tr_{\emptyset}$ the identity operation and $\Tr_{\N}$ the full trace.
For notational simplicity, we shall identify the parties' names with their labels, and singletons of parties (e.g., $\{A_k\}$) with the parties themselves (e.g., $A_k$) or the corresponding label, so that $\N = \{ 1, \ldots, N\} \equiv \{ A_1, \ldots, A_N\}$, $\N \backslash \{A_k\} \equiv \N \backslash k$, $\Tr_{\{A_k\}} \equiv \Tr_k$, etc.

The CP maps corresponding to the parties' operations are then denoted by $M_{a_k|x_k}^{A_{IO}^k}$, the corresponding CPTP maps $M_{x_{k}}^{A^{k}_{IO}}\coloneqq \sum_{a_{k}}M_{a_{k}|x_{k}}^{A^{k}_{IO}}$, and the overall process is represented by a process matrix $W \in A_{IO}^\N$.
The resulting correlations are then obtained through a generalised Born rule as before:
\begin{equation} \label{eq:born_rule_Npartite}
P(\vec a | \vec x) = \Tr\left[ M_{a_1|x_1}^{A_{IO}^1} \otimes \cdots \otimes M_{a_N|x_N}^{A_{IO}^N} \cdot W\right]. 
\end{equation}

As in the bipartite case, the parties may also share some ancillary state $\rho$ in some extra incoming spaces $A_{I'}^1 \otimes \cdots \otimes A_{I'}^N = A_{I'}^\N$, and extend their local operations to act on these spaces as well.
Requiring again the nonnegativity and normalisation of all obtainable probabilities, including for arbitrary extensions $W\otimes\rho$ of $W$, imposes validity constraints on $W$. 
In the general multipartite case, they read
\begin{equation}
W \ge 0, \ \ W \in \L^\N, \ \ \text{and} \ \ \Tr W = \prod_{k \in \N} d_{A_O^k} \label{eq:valid_Npartite}
\end{equation}
for some particular linear subspace $\L^\N$ of $A_{IO}^\N$; see Sec.~\ref{subsec:valid_W_and_Afirst} and Appendix~\ref{app:charact_valid_Ws}~\cite{oreshkov12,araujo15}. 
As for the bipartite case, in this paper a matrix will be called a (valid) process matrix whenever it satisfies the first two constraints above, without necessarily requiring that it is correctly normalised.

\medskip

The no-signalling constraints can readily be generalised to the $N$-partite case, allowing the notion of compatibility with a fixed causal order to be extended accordingly.
For instance, a process matrix is said to be compatible with the fixed causal order ${A_1 \prec A_2 \prec \cdots \prec A_N}$ if no party or group of parties can signal to other parties in their causal ``past'' (as defined by the specified causal order)---which translates into the constraint that $P(a_1,\ldots,a_k|\vec x) = P(a_1,\ldots,a_k|x_1,\ldots,x_k)$ for all $k = 1, \ldots, N-1$.
As before, this constrains such a process matrix $W^{A_1 \prec \cdots \prec A_N}$ to be in a linear subspace $\L^{A_1 \prec \cdots \prec A_N}\subset \L^\N$ of $A_{IO}^\N$; see Sec.~\ref{subsec:valid_W_and_Afirst} for an explicit characterisation of $\L^{A_1 \prec \cdots \prec A_N}$ (and Appendices~\ref{app:fixed_order}--\ref{app:allowed_forbidden_terms} for further discussions and characterisations of process matrices compatible with other fixed causal orders).

What is not so straightforward, however, is to generalise the concept of causal (non)separability, which turns out to be much more subtle for more than two parties.
In particular, additional complexity arises in the multipartite case because the causal order can be \emph{dynamical} as well as probabilistic---that is, the causal order of parties in the future can depend on operations of parties in the past~\cite{baumeler14,oreshkov16,abbott16}. 
Simply considering a convex combination of process matrices compatible with different fixed causal orders does not include scenarios with such dynamical causal orders, and is therefore too restrictive to capture all scenarios that should be considered compatible with a well-defined causal order. Perhaps more strikingly, as we shall see the possibility to extend process matrices with ancillary quantum states has nontrivial implications for the definition of causal (non)separability for more than two parties~\cite{oreshkov16}.
The main objectives of this paper are precisely to discuss how the concept of causal (non)separability should properly be generalised to the multipartite case, and to characterise causally separable and causally nonseparable process matrices.

\section{Defining multipartite causal (non)separability}
\label{sec:defs}

\subsection{Ara\'ujo \emph{et al.}'s definition}
\label{subsec:araujo_def}

The multipartite case was first considered in a restricted tripartite situation in which one party has no (or, equivalently, a trivial) outgoing system. 
This particular scenario was studied because of its relevance for a practical protocol where the causal order between two parties $A$ and $B$, which perform some unitary operations $U_A$ and $U_B$ on a target system initialised in a state $\ket{\psi}^t$, is controlled by another (two-dimensional) quantum system. If this control qubit is initialised in the state $\ket{0}^c$, the operation $U_A$ is applied before $U_B$, while for a control qubit in the state $\ket{1}^c$, $U_B$ is applied before $U_A$. If the control qubit is initialised in a superposition state $\ket{+}^c = \frac{1}{\sqrt{2}}(\ket{0}^c + \ket{1}^c)$, the overall transformation on the joint state of the target and control systems is thus
\begin{align}
& \ket{\psi}^t \otimes \ket{+}^c \notag \\
& \to \frac{1}{\sqrt{2}}(U_B U_A \ket{\psi}^t \otimes \ket{0}^c + U_A U_B \ket{\psi}^t \otimes \ket{1}^c) \, ,
\end{align}
i.e., the unitaries are applied in a ``superposition of orders''. The output state is then sent to a third party $C$ (Charlie) who can measure the control qubit, and possibly also the target system.
The protocol just described can straightforwardly be generalised to the case where $A$ and $B$'s operations are general quantum instruments instead of unitaries.
This so-called \emph{quantum switch} can be understood as a quantum supermap~\cite{chiribella08}, or higher order transformation, that maps $A$ and $B$'s local operations to the overall global transformation. It cannot be realised by inserting the local operations into a circuit with a well-defined causal order, and therefore constitutes a new resource for quantum computation that goes beyond causally ordered quantum circuits~\cite{chiribella13}. It has attracted particular interest as a consequence of being readily implementable, and indeed several implementations have been experimentally realised~\cite{procopio15,rubino17,rubino17a,goswami18,wei18}.
Consequent work has sought to clarify whether such implementations can really be seen as genuine realisations of indefinite causal orders, and Ref.~\cite{oreshkov18} gives arguments clarifying why they can be.

The quantum switch can naturally be described in the process matrix formalism~\cite{araujo15,oreshkov16} where it indeed corresponds to a tripartite process matrix for parties $A$, $B$ and $C$, where Charlie has no outgoing system and therefore cannot signal to the other parties. 
The situation is thus relatively similar to the bipartite case, since the only relevant causal orders are those where Charlie acts last, i.e., $A \prec B \prec C$ and $B \prec A \prec C$. 
This observation led Ara\'ujo \emph{et al.} to propose the following definition (as an initial, ``1-step'' generalisation of Definition~\ref{def:biCS}) for this particular scenario:

\begin{definition}[Ara\'ujo \emph{et al.}'s causal separability~\cite{araujo15}] \label{def:AB+-CS}
In a tripartite scenario where party $C$ has no outgoing system, a process matrix $W$ is said to be \emph{causally separable} if and only if it can be written as a convex combination
\begin{equation}
	 W \, = \, q \, W^{A \prec B \prec C} \, + \, (1{-}q) \, W^{B \prec A \prec C} \, , \label{def:caus_sep_tri}
\end{equation}
with $q \in [0,1]$ and where $W^{A \prec B \prec C}$ and $W^{B \prec A \prec C}$ are two process matrices compatible with the causal orders $A \prec B \prec C$ and $B \prec A \prec C$, respectively.
\end{definition}

It was shown that the process matrix describing the quantum switch is causally nonseparable as per Definition~\ref{def:AB+-CS}~\cite{araujo15}, and this definition has subsequently been used e.g.\ in Refs.~\cite{branciard16a,giacomini16,goswami18}.

\subsection{Oreshkov and Giarmatzi's definitions}

While Ara\'ujo \emph{et al.}'s definition recalled above applied only to a particular tripartite situation, Oreshkov and Giarmatzi~(OG) considered in Ref.~\cite{oreshkov16} the general multipartite case---taking into account, in particular, the possibility of dynamical causal orders. 
They defined in fact two possible generalisations of bipartite causal (non)separability, namely what they called the notions of ``causal (non)separability'' and ``extensible causal (non)separability''.

The definition they proposed for causal separability is recursive, in analogy with the definition of multipartite ``causal correlations''~\cite{oreshkov16,abbott16}---correlations that are compatible with a definite causal order. In Refs.~\cite{oreshkov16,abbott16}, these were characterised as those for which it is possible to identify, up to some probability, a party that acts first, and such that, for any behaviour of this first party, the conditional correlations shared by the remaining parties are again causal. 
Oreshkov and Giarmatzi invoked an analogous ``unraveling argument'' for causally separable processes.

More specifically, their definition is based on the concept of a ``conditional (process) matrix'', defined for a given matrix $W$ and a given CP map $M_k \coloneqq M_{a_k|x_k}^{A_{IO}^k}$ applied by a party $A_k$ as
\begin{align}
W_{|M_k} \coloneqq \Tr_k \left[M_k \otimes \id^{\N \backslash k} \, \cdot \, W\right]. \label{eq:conditional_W}
\end{align}
In general, even if $W$ is a valid process matrix, $W_{|M_k}$ thus defined may not be a valid process matrix (in which case we shall just talk about a ``conditional matrix'').
In fact, as we will see in Sec.~\ref{subsec:valid_W_and_Afirst}, a process matrix $W$ is compatible with party $A_k$ acting first (i.e., it does not allow signalling from the other parties to $A_k$) if and only if for any CP map $M_k$ the conditional matrix $W_{|M_k}$, as defined in Eq.~\eqref{eq:conditional_W}, is (up to normalisation%
\footnote{For a properly normalised process matrix $W$ compatible with $A_k$ first (i.e., which always gives $P(a_k|\vec x) = P(a_k|x_k)$) and a trace-non-increasing CP map $M_k = M_{a_k|x_k}$, one has $\Tr W_{|M_k} = P(a_k|x_k) \prod_{j \in \N\backslash k} d_{A_O^j}$, so that $W_{|M_k}$ must be divided by the factor $P(a_k|x_k)$ to also be properly normalised according to Eq.~\eqref{eq:valid_Npartite}. \label{footnote:norm_cond_process}}%
) a valid $(N{-}1)$-partite process matrix for the parties in $\N\setminus k$. 
In that case, the conditional process matrix $W_{|M_k}$ then represents the process shared by these $N{-}1$ parties, conditioned on party $A_k$ performing the CP map $M_k = M_{a_k|x_k}^{A_{IO}^k}$ (i.e., conditioned on both receiving the input $x_k$ and obtaining the outcome $a_k$).

Oreshkov and Giarmatzi then proposed the following (recursive) definition:%
\footnote{More precisely, what we present here as their definition is actually presented in Ref.~\cite{oreshkov16} (in a slightly different, but equivalent way) as a characterisation following from a more fundamental recursive definition of causally separable processes (not necessarily quantum mechanical).}

\begin{definition}[Oreshkov and Giarmatzi's causal separability~\cite{oreshkov16}] \label{def:OG-CS}
For $N=1$, any process matrix is causally separable. For $N \ge 2$, an $N$-partite process matrix $W$ is said to be \emph{causally separable} if and only if it can be decomposed as
\begin{equation}
 W = \sum_{k \in \N} q_k \, W_{(k)} , \label{eq:def:OG-CS}
\end{equation}
with $q_k\ge 0$, $\sum_k q_k = 1$, and where for each $k$, $W_{(k)}$ is a process matrix compatible with party $A_k$ acting first, and is such that for any possible CP map $M_k \in A_{IO}^k$ applied by party $A_k$, the conditional $(N{-}1)$-partite process matrix $(W_{(k)})_{|M_k} \coloneqq \Tr_k [M_k \otimes \id^{\N \backslash k} \, \cdot \, W_{(k)}]$ is itself causally separable.
\end{definition}

As outlined in the previous section, the process matrix framework allows for process matrices to be extended by providing additional ancillary states the the parties. 
Taking this into account, OG introduced a second definition of causal separability for process matrices that are causally separable even under arbitrary such extensions:

\begin{definition}[Oreshkov and Giarmatzi's extensible causal separability~\cite{oreshkov16}] \label{def:OG-ECS}
An $N$-partite process matrix $W$ is said to be \emph{extensibly causally separable} if and only if it is causally separable (as per Definition~\ref{def:OG-CS} above), and it remains so under any extension with incoming systems in an arbitrary joint quantum state---i.e., if and only if for any extension $A_{I'}^\N$ of the parties' incoming systems and any ancillary quantum state $\rho \in A_{I'}^\N$, $W \otimes \rho$ is causally separable.
\end{definition}

It is easy to see that OG's causal separability (CS) and extensible causal separability (ECS) are equivalent in the bipartite case, and, indeed, equivalent to Definition~\ref{def:biCS} given in Sec.~\ref{sec:bipartite_CS}: the process matrix $W \otimes \rho$ obtained by attaching an ancillary state $\rho$ to a causally separable process matrix $W$ of the form of Eq.~\eqref{def:caus_sep_bi} remains of the same form, with $W^{A \prec B} \otimes \rho$ ($W^{B \prec A} \otimes \rho$) compatible with $A$ acting before $B$ ($B$ before $A$), and for both terms $W^{A \prec B} \otimes \rho$ and $W^{B \prec A} \otimes \rho$, whatever operation the first party applies, the resulting conditional process matrix for the other party is single-partite, hence trivially causally separable.

However, OG's CS and ECS are not equivalent in the general multipartite case and thus indeed represent two different possible multipartite generalisations of the same bipartite concept. 
Of course ECS implies CS, but the converse is not true in general---the result of a phenomenon called ``activation of causal nonseparability'' in Ref.~\cite{oreshkov16}. 
An explicit example of a CS process that is not ECS was indeed given in~\cite{oreshkov16}, in a tripartite scenario where one party has no incoming system; we will see another example in the following subsection.

\subsection{Comparison}
\label{subsec:comparison}

We thus now have three potential generalisations of the concept of causal separability to the particular tripartite situation where one party has no outgoing system---namely, the two different definitions of causal separability (Definitions~\ref{def:AB+-CS} and~\ref{def:OG-CS}), and that of extensible causal separability (Definition~\ref{def:OG-ECS}). 
How do they relate to one another? 
Are the two definitions of causal separability indeed equivalent?
These questions are answered by the following result:

\begin{proposition} \label{prop:comp_3_defs}
In a tripartite scenario where party $C$ has no outgoing system, Ara\'ujo \emph{et al.}'s definition of causal separability (Definition~\ref{def:AB+-CS}) is equivalent to Oreshkov and Giarmatzi's definition of extensible causal separability (Definition~\ref{def:OG-ECS}), but nonequivalent to their definition of causal separability (Definition~\ref{def:OG-CS}).
\end{proposition}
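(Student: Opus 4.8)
The plan is to establish the two halves of the proposition separately. The claim splits into (i) Araújo \emph{et al.}'s causal separability (Definition~\ref{def:AB+-CS}) is equivalent to extensible causal separability (ECS, Definition~\ref{def:OG-ECS}) in this restricted scenario, and (ii) it is \emph{not} equivalent to Oreshkov--Giarmatzi causal separability (CS, Definition~\ref{def:OG-CS}). Since CS and ECS are known to differ in general (with ECS $\Rightarrow$ CS but not conversely, due to activation of causal nonseparability), the natural strategy is: prove (i), and for (ii) exhibit an explicit tripartite process in this scenario that is CS but not ECS---which, by (i), is then CS but not Araújo \emph{et al.}-separable.

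For part (i), I would argue both inclusions. First, suppose $W$ is of the Araújo \emph{et al.} form, $W = q\,W^{A\prec B\prec C} + (1{-}q)\,W^{B\prec A\prec C}$. I would show this form is preserved under arbitrary extension by an ancilla $\rho$: each term tensored with $\rho$ remains compatible with its respective fixed order (attaching incoming ancillary states cannot create signalling into the past), and then check that each extended term is itself OG-causally-separable by unraveling---whatever CP map the acknowledged first party ($A$ or $B$) applies, the conditional matrix is a bipartite process with $C$ acting last, hence compatible with a fixed order and thus trivially causally separable. This gives ECS. Conversely, suppose $W$ is ECS. Here I would use the fact that $C$ has no outgoing system, so $C$ can never be first (except trivially) and the only relevant fixed orders are $A\prec B\prec C$ and $B\prec A\prec C$. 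Applying the recursive CS decomposition of Definition~\ref{def:OG-CS} and the no-signalling structure forced by $C$'s trivial output, the first-party terms must be (mixtures of) processes compatible with $A$ first or $B$ first; conditioning on the first party's operations and invoking separability of the resulting bipartite conditional processes, I would reassemble the decomposition into the two-term convex form of Definition~\ref{def:AB+-CS}. The key leverage throughout is that $C$'s lack of an outgoing system collapses the space of admissible causal orders to exactly the two in Definition~\ref{def:AB+-CS}.

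For part (ii), given (i), it suffices to produce one process $W$ in a scenario where $C$ has no outgoing system that is CS but not ECS. I would look to adapt the activation example of Ref.~\cite{oreshkov16}: one constructs a $W$ that admits a valid OG-causal-separable decomposition (so it passes the recursive test of Definition~\ref{def:OG-CS}), yet for some ancillary extension $W\otimes\rho$ no such decomposition survives---the shared ancilla correlates the parties in a way that destroys the ability to identify a causal first party once the extra systems are accounted for. Verifying that such an example can be placed in the $C$-has-no-output setting, and checking explicitly that $W\otimes\rho$ is causally nonseparable (e.g.\ via a witness, per Sec.~\ref{subsec:witnesses}) while $W$ itself is CS, would complete this half.

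I expect the main obstacle to be the converse direction of part (i): showing ECS $\Rightarrow$ Araújo \emph{et al.} form. The forward direction and part (ii) are largely bookkeeping plus invoking an existing example, but closing the converse requires carefully using the recursive CS structure together with the extension-robustness of ECS to rule out any exotic dynamical-order decomposition and force the rigid two-term convex combination. The subtlety is that \emph{a priori} Definition~\ref{def:OG-CS} permits convex mixtures over \emph{all} parties acting first with dynamically-determined subsequent orders; I must show that the constraint ``$C$ has no outgoing system'' together with extensibility eliminates every possibility except $A$-first and $B$-first with $C$ last, and that the conditional bipartite processes then admit fixed orders rather than nontrivial mixtures. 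Pinning this reduction down rigorously---rather than merely plausibly---is where the real work lies.
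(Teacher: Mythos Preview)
Your overall architecture is correct, and the forward direction of (i) is essentially the paper's argument. But the converse direction of (i) has a real gap, and it is tied to exactly the phenomenon that drives (ii).

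The problematic step is your handling of the $C$-first term. You write that ``$C$ can never be first (except trivially)'' and that the first-party terms ``must be (mixtures of) processes compatible with $A$ first or $B$ first.'' That is not so: nothing prevents the ECS decomposition $W\otimes\rho = W_{(A)}^\rho + W_{(B)}^\rho + W_{(C)}^\rho$ from carrying a nonzero $W_{(C)}^\rho$. What $d_{C_O}=1$ buys you is only that $W_{(A)}$ and $W_{(B)}$ are automatically compatible with $A\prec B\prec C$ and $B\prec A\prec C$ respectively. For $W_{(C)}$ you know that every conditional $(W_{(C)})_{|M_C}$ is bipartite causally separable, but the decomposition $(W_{(C)})_{|M_C} = V^{A\prec B}_{M_C} + V^{B\prec A}_{M_C}$ \emph{depends on $M_C$} in general, and there is no direct way to lift an $M_C$-dependent family of bipartite decompositions to a single decomposition of $W_{(C)}$ itself. (This $M_C$-dependence is not hypothetical: it is precisely the feature of the paper's counterexample $W^{\text{act.}}$ that makes it CS but not ECS---see Eq.~\eqref{eq:WMc_decomp}.) The paper closes this gap with a \emph{teleportation technique}: attach a maximally entangled ancilla between $A$ and $C$, and let $C$ apply the specific CP map that projects $C_{II'}$ onto a maximally entangled state. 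The resulting conditional bipartite process is then literally $W_{(C)}$ with $C_I$ relabelled as part of $A$'s input, so its causally separable decomposition (guaranteed by ECS) translates directly back to a decomposition of $W_{(C)}$ compatible with $A\prec B\prec C$ and $B\prec A\prec C$. Without this device, your ``reassembly'' step is a wish, not an argument.

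For (ii), be aware that the activation example in Ref.~\cite{oreshkov16} lives in the \emph{wrong} restricted scenario: there one party has a trivial \emph{incoming} system, not a trivial outgoing one. The paper therefore constructs a new explicit process $W^{\text{act.}}$ in the $d_{C_O}=1$ setting, verifies directly that it is OG-CS (with $C$ first; every conditional is bipartite causally separable via an $M_{\vec c}$-dependent decomposition), and then exhibits a causal witness to certify it is not of the Ara\'ujo \emph{et al.} form. Your plan to ``adapt'' the existing example would amount to doing this construction from scratch.
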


The equivalence between Definitions~\ref{def:AB+-CS} and~\ref{def:OG-ECS} for this particular tripartite scenario is proved explicitly in Appendix~\ref{app:charact_CS_3_dCO1}, which we refer to for more details; we simply summarise the argument here as follows.
Clearly, any process matrix $W$ of the form of Eq.~\eqref{def:caus_sep_tri} is ECS, as any $W \otimes \rho$ is also of that form (and of the form also of Eq.~\eqref{eq:def:OG-CS}), and for any $W^{A \prec B \prec C}$ and any $M_A$, the conditional process $(W^{A \prec B \prec C})_{|M_A}$ is compatible with the order $B \prec C$ (hence it is causally separable; similarly for any $W^{B \prec A \prec C}$ and any $M_B$). 
The proof that an ECS process matrix $W$ necessarily has the form of Eq.~\eqref{def:caus_sep_tri} is based on a ``teleportation technique'' (see Lemma~\ref{lemma:teleportation} in Appendix~\ref{app:characterisation_CS_Ws}), already used in Ref.~\cite{oreshkov16}, that consists in introducing an ancillary system in a maximally entangled state $\rho$ shared by two parties, e.g.\ $A$ and $C$. 
By definition, the global process $W \otimes \rho_{A_{I'}C_{I'}}$ has a decomposition of the form~\eqref{eq:def:OG-CS}. 
It is then easy to see that the terms $W_A$ and $W_B$ compatible with parties $A$ or $B$ acting first are in fact compatible, since $C$ has no outgoing system, with the causal orders $A \prec B \prec C$ and $B \prec A \prec C$, respectively, and thus contribute to the terms $W^{A \prec B \prec C}$ and $W^{B \prec A \prec C}$ in Eq.~\eqref{def:caus_sep_tri}. 
For the term $W_C$ compatible with $C$ acting first, letting $C$ project his systems $C_{II'} \coloneqq C_I \otimes C_{I'}$ onto the maximally entangled state effectively ``teleports'' his system to $A$. 
By definition, the conditional bipartite process then shared by $A$ and $B$ must be causally separable, and must therefore have a decomposition of the form~\eqref{def:caus_sep_bi}, which also leads to a decomposition of the form~\eqref{def:caus_sep_tri} for $W_C$. 

\medskip

In order to prove the nonequivalence between Ara\'ujo \emph{et al.} and OG's definitions of causal separability, we will now show that OG's CS and ECS are nonequivalent---i.e., that there can be ``activation of causal nonseparability'' (according to OG's terminology)---in the scenario where party $C$ has no outgoing system.
Note that this scenario differs from that in which OG already gave an example of activation of causal nonseparability: they indeed considered a tripartite case where $C$ has no \emph{incoming} system, rather than no \emph{outgoing} system.

Consider for that the following process matrix:
\begin{align}
\hspace{-2mm} W^\text{act.} \! &\coloneqq \frac18 \Big[\id(\id\id{-}\hat{\textsc{z}}\hat{\textsc{z}})\id\id + \frac{\sqrt{3}}{4} \, \id(\hat{\textsc{x}}\hat{\textsc{x}}{+}\hat{\textsc{y}}\hat{\textsc{y}})(\hat{\textsc{z}}\id{+}\id \hat{\textsc{z}}) \notag \\
& \qquad + \frac12 \hat{\textsc{z}}(\hat{\textsc{z}}\id{-}\id \hat{\textsc{z}})\id\id + \frac14 \hat{\textsc{x}}(\hat{\textsc{x}}\hat{\textsc{y}}{-}\hat{\textsc{y}}\hat{\textsc{x}})(\hat{\textsc{z}}\id{-}\id \hat{\textsc{z}}) \Big] , \! \! \label{eq:def_W_activ}
\end{align}
where the subsystems are written, for convenience, in the order $C_IA_IB_IA_OB_O$ (i.e., $W^\text{act.} \in C_I \otimes A_I \otimes B_I \otimes A_O \otimes B_O$). 
Here, as in the other examples presented in this paper, $\hat{\textsc{x}}, \hat{\textsc{y}}, \hat{\textsc{z}}$ denote the Pauli matrices, $\id$ denotes the $2 \times 2$ identity matrix and tensor products between all matrices are implicit.

We note first that $W^\text{act.}$ is compatible with Charlie acting first---i.e., with the order $C \prec \{A,B\}$.%
\footnote{As $C$ has no outgoing system, $W^\text{act.}$ is also compatible with $C$ acting last (see Appendix~\ref{app:fixed_order_part_cases}). But to prove that $W^\text{act.}$ is CS (according to OG's definition) as we do below we need to consider $C$ acting first.}
(Indeed, it satisfies Eq.~\eqref{eq:constr_L_k_first} given later, for $A_k=C$.)
Any CP map applied by Charlie---i.e., since $C$ has no outgoing system, any element of a positive-operator valued measure (POVM) in his qubit incoming space $C_I$---can be written as $M_{\vec c} = \id + \vec c \cdot \vec \sigma$, where $\vec \sigma \coloneqq (\hat{\textsc{x}}, \hat{\textsc{y}}, \hat{\textsc{z}})$ and $\vec c \coloneqq (c_{\textsc{x}}, c_{\textsc{y}}, c_{\textsc{z}})$ is a 3-dimensional real vector with $|\vec c| \leq 1$, so that $M_{\vec c} \ge 0$ (and where we ignore the trace-nonincreasing constraint, and indeed the overall normalisation of $M_{\vec c}$, since it is irrelevant for our argument).
The resulting conditional matrix for parties $A$ and $B$ (as defined in Eq.~\eqref{eq:conditional_W}) is then
\begin{align}
(W^\text{act.})_{|M_{\vec c}} \coloneqq & \, \Tr_{C_I} [M_{\vec c}^{C_I} \otimes \id^{A_IB_IA_OB_O} \cdot W^\text{act.}] \notag \\
=& \, \frac12 W_{|M_{\vec c}}^{A \prec B} + \frac12 W_{|M_{\vec c}}^{B \prec A} \label{eq:WMc_decomp}
\end{align}
with (written in the order $A_IB_IA_OB_O$)
\begin{align}
\!\!\!\! W_{|M_{\vec c}}^{A \prec B} & \coloneqq \frac14 \Big[(\id\id{-}\hat{\textsc{z}}\hat{\textsc{z}})\id\id + \frac{\sqrt{3}}{2} (\hat{\textsc{x}}\hat{\textsc{x}}{+}\hat{\textsc{y}}\hat{\textsc{y}})\hat{\textsc{z}}\id \notag \\
& \qquad + \frac{c_{\textsc{z}}}{2} (\hat{\textsc{z}}\id-\id \hat{\textsc{z}})\id\id + \frac{c_{\textsc{x}}}{2} (\hat{\textsc{x}}\hat{\textsc{y}}{-}\hat{\textsc{y}}\hat{\textsc{x}})\hat{\textsc{z}}\id \Big]
\end{align}
and with $W_{|M_{\vec c}}^{B \prec A}$ of a similar form, obtained from $W_{|M_{\vec c}}^{A \prec B}$ by changing $\hat{\textsc{z}}^{A_O}\id^{B_O}$ to $\id^{A_O}\hat{\textsc{z}}^{B_O}$ and $c_{\textsc{x}}$ to $-c_{\textsc{x}}$.

Note that $W_{|M_{\vec c}}^{A \prec B}$ and $W_{|M_{\vec c}}^{B \prec A}$ are valid, causally ordered process matrices, compatible with $A \prec B$ and $B \prec A$, respectively (their eigenvalues are found to be $0, \frac12 \big(1 \pm \sqrt{\frac{3+c_{\textsc{x}}^2+c_{\textsc{z}}^2}{4}}\big) \geq 0$ for $|\vec c| \leq 1$, and they satisfy the appropriate form of Eq.~\eqref{eq:constr_order_A1__AN} given later). 
From Eq.~\eqref{eq:WMc_decomp} and the definition of causal separability in the bipartite case (Definition~\ref{def:biCS}), we conclude that for any CP map (i.e.\ here, any POVM element) $M_{\vec c}$ applied by Charlie, $(W^\text{act.})_{|M_{\vec c}}$ is a (bipartite) causally separable process. 
Therefore, according to OG's Definition~\ref{def:OG-CS}, $W^\text{act.}$ is a tripartite CS process (with a single term in the decomposition~\eqref{eq:def:OG-CS}, corresponding to $C$ first).

A crucial feature of the decomposition~\eqref{eq:WMc_decomp} is that $(W^\text{act.})_{|M_{\vec c}}$, $W_{|M_{\vec c}}^{A \prec B}$ and $W_{|M_{\vec c}}^{B \prec A}$ all depend on Charlie's operation $M_{\vec c}$.
Even though any valid process matrix in $C_IA_IB_IA_OB_O$ (including $W_{|M_{\vec c}}^{A \prec B}$ and $W_{|M_{\vec c}}^{B \prec A}$) is compatible with $C$ acting last (since $C$ has no outgoing system), the decomposition~\eqref{eq:WMc_decomp} still does not allow us to obtain a decomposition of the form $W^\text{act.} = \frac12 W^{A \prec B \prec C} + \frac12 W^{B \prec A \prec C}$ for $W^\text{act.}$ (or even with different weights $q$, $1{-}q$), as in Eq.~\eqref{def:caus_sep_tri}.
Indeed, such a decomposition for $W^\text{act.}$, with $W^{A \prec B \prec C}$ and $W^{B \prec A \prec C}$ valid process matrices compatible with the indicated causal order, does not exist. 
This can be shown using Ara\'ujo \emph{et al.}'s technique of ``witnesses of causal nonseparability''~\cite{araujo15,branciard16a}: one can construct a witness for $W^\text{act.}$, and we give one explicitly in Appendix~\ref{app:witness_Wactiv}.

Since, as stated above, the existence of such a decomposition (as in Definition~\ref{def:AB+-CS}) would be equivalent in the scenario considered here to OG's ECS (Definition~\ref{def:OG-ECS}), this implies that although $W^\text{act.}$ is CS according to OG's Definition (see above), it is not ECS. 
This provides an explicit example of ``activation of causal nonseparability'' in that scenario.

\medskip

Hence, OG's CS does not reduce (contrary to OG's ECS) to Ara\'ujo \emph{et al.}'s definition of causal separability in this particular scenario. 
Definitions~\ref{def:AB+-CS} and~\ref{def:OG-CS} of causal separability are therefore inconsistent. 
Our aim now is to rectify this inconsistency.

\subsection{Our choice of definition}

To fix this, we now propose our own definition of multipartite causal separability, which indeed resolves the inconsistency pointed out above, and which we argue is a more natural definition for general multipartite scenarios.  
Similarly to OG, we choose a recursive definition, based on the concept of a conditional process matrix and very much in the spirit of the recursive definitions that have been given for multipartite causal correlations~\cite{abbott16,oreshkov16}. 
For a process matrix to be compatible with a definite causal order, there should, in any run of the experiment, be a designated party that acts first (which party this is can be determined probabilistically, just like in the bipartite case) and the conditional process matrix for the remaining parties, which depends on the action of the first party, should again be causally separable for any CP map that the first party applies.

For several reasons, we consider it important to allow extensions with extra incoming systems, similar to OG's \emph{extensible} causal separability.
Firstly, the whole process matrix framework is constructed so as to allow for shared ancillary systems between the parties. 
For consistency, we should thus take into account such extensions with shared incoming quantum states when defining causal (non)separability.
Indeed, entanglement is a very different resource from causal nonseparability: entangled systems do not by themselves allow signalling between parties, and should be able to be distributed between parties prior to an experiment without ``activating'' causal nonseparability.
(Note, however, that entanglement can still play a crucial role in causal nonseparability, as e.g., in the quantum switch, where the control and target systems can end up being entangled after the parties' operations.)
While a ``resource theory'' for causal nonseparability has not yet been developed, it is reasonable to expect that providing additional shared (entangled) incoming states should be a free operation in such an approach.
These considerations lead us to propose the following definition.

\begin{definition}[$N$-partite causal separability] \label{def:our_def-CS}
For $N{=}1$, any process matrix is causally separable. 
For $N \ge 2$, an $N$-partite process matrix $W$ is said to be \emph{causally separable} if and only if, for any extension $A_{I'}^\N$ of the parties' incoming systems and any ancillary quantum state $\rho \in A_{I'}^\N$, $W \otimes \rho$ can be decomposed as 
\begin{equation}
 W \otimes \rho = \sum_{k \in \N} q_k \, W_{(k)}^\rho , \label{eq:our_def-CS}
\end{equation}
with $q_k\ge 0$, $\sum_k q_k = 1$, and where for each $k$, $W_{(k)}^\rho \in A_{II'O}^\N$ is a process matrix compatible with party $A_k$ acting first, and is such that for any CP map $M_k \in A_{II'O}^k$ applied by party $A_k$, the conditional $(N{-}1)$-partite process matrix%
\footnote{Note that compared to Eq.~\eqref{eq:conditional_W}, we take here $A_{II'O}^k \coloneqq A_{IO}^k \otimes A_{I'}^k$, $M_k \coloneqq M_{a_k|x_k}^{A_{II'O}^k}$, $\Tr_k \coloneqq \Tr_{A_{II'O}^k}$ and $\id^{\N \backslash k} \coloneqq \bigotimes_{j \in \N \backslash k} \id^{A_{II'O}^j}$ in the definition of the conditional matrix.}
$(W_{(k)}^\rho)_{|M_k} \coloneqq \Tr_k [M_k \otimes \id^{\N \backslash k} \, \cdot \, W_{(k)}^\rho]$ is itself causally separable.
\end{definition}

Note that there is a subtle difference between our definition here and that of OG's ECS (Definition~\ref{def:OG-ECS}).
We indeed require all conditional process matrices appearing at all levels of the recursive decomposition to remain causally separable under extension with arbitrary ancillary states, while OG impose this \emph{a priori} only for the original process matrix. 
In fact, although \emph{prima facie} different, these definitions turn out to be equivalent; the proof of this is given in Appendix~\ref{app:equiv_OGECS_our_def}.

From Definition~\ref{def:our_def-CS} we recover the natural, intuitive definition of Ara\'ujo \emph{et al.}~\cite{araujo15} in the particular tripartite case where one party has a trivial outgoing system---a case of practical relevance, as the quantum switch is the first example of a causally nonseparable process that has been demonstrated and studied in laboratory experiments~\cite{procopio15,rubino17,goswami18}. 
One can also readily verify that process matrices that are causally separable by Definition~\ref{def:our_def-CS} cannot generate noncausal correlations (as defined in Refs.~\cite{oreshkov16,abbott16}); an explicit proof is given in Appendix~\ref{app:causal_correl}.  

From now on, whenever we talk about causal (non)separability we will refer to our Definition~\ref{def:our_def-CS}.

\section{Characterising multipartite causal (non)separability}
\label{sec:charact}

With the definition of causal (non)separability given above, we now turn to addressing the question of how to characterise causally separable process matrices in terms of simple conditions and how to demonstrate multipartite causal nonseparability in practice. 

For that we will start by reviewing the characterisations of valid process matrices and of process matrices compatible with fixed causal orders, before recalling the characterisations of causally separable process matrices in the bipartite and tripartite cases, where we will give conditions for causal separability that are both necessary and sufficient.
We will then present a generalisation to the $N$-partite case which, for $N\ge 4$, gives two conditions, one necessary and one sufficient, whose coincidence remains an open question.

In this section we will not concern ourselves with the normalisation of process matrices (which can always be imposed later).
Our characterisations will then be given in terms of linear subspaces of matrices (e.g., the spaces $\L^\N$ and $\mathcal{L}^{A_1\prec\cdots\prec A_N}$ introduced already in Sec.~\ref{sec:W_formalism});
when adding the requirement of positive semidefiniteness, the corresponding sets of (nonnormalised) process matrices will thus be closed convex cones of positive semidefinite matrices.
This will allow the conditions we give to be checked efficiently with semidefinite programming (SDP) techniques.
In particular, by generalising the techniques used for the bipartite and restricted tripartite cases in Refs.~\cite{araujo15,branciard16a}, we will extend the idea of witnesses of causal nonseparability to the multipartite case and show how multipartite witnesses can be constructed efficiently, allowing this causal nonseparability to be verified experimentally by having each party perform appropriately chosen measurements~\cite{rubino17,goswami18}.

Following Ref.~\cite{araujo15}, we adopt the following notation, which will be used heavily throughout the rest of the paper:
\begin{equation}\label{eq:trade_and_pad}
	\begin{gathered}
		{}_{X}W \coloneqq (\Tr_X W) \otimes \frac{\id^X}{d_X}\,, \quad {}_1W \coloneqq W, \\
		{}_{[\sum_X \alpha_X X]}W \coloneqq \sum_X \alpha_X \, {}_{X\!} W,
	\end{gathered}
\end{equation}
with $d_X$ the dimension of the Hilbert space of system $X$ (note that $W \to {}_{X}W$ defines a CPTP map).
In particular, constraints of the form ${}_{[1-X]}W = 0$ (which will appear regularly) therefore mean that $W$ is of the form $W = \Omega \otimes \frac{\id^X}{d_X}$ (with $\Omega = \Tr_X W$).

\subsection{Valid process matrices and compatibility with a fixed causal order}
\label{subsec:valid_W_and_Afirst}

Recall from Sec.~\ref{sec:W_formalism} that the conditions for a process matrix $W$ to be valid arise from requiring that the generalised Born rule~\eqref{eq:born_rule_Npartite} should give valid probability distributions, even when the parties share arbitrary ancillary systems.
The fact that these probabilities should be nonnegative imposes that $W$ must be positive semidefinite, while the requirement that these probabilities must sum to 1 implies that any valid (but, once again, not necessarily normalised) $W$ must be in a subspace $\L^\N$ of $A_{IO}^\N$~\cite{oreshkov12,araujo15}. 
In Appendix~\ref{app:charact_valid_Ws} we recall the proof (following Ref.~\cite{araujo15}) that this subspace can be characterised as follows:
\begin{align}
& W \in \L^\N \Leftrightarrow \, \forall \ \X \subsetneq \N, \X \neq \emptyset, \ \Tr_{\N \backslash \X}  W \in \L^\X \notag \\
& \hspace{20mm} \text{ and } {}_{ \prod_{i \in \N}[1-A_O^i]} W = 0 \label{eq:constr_LN_rec} \\[1mm]
& \hspace{3mm} \Leftrightarrow \, \forall \ \X \subseteq \N, \X \neq \emptyset, \ {}_{ \prod_{i \in \X}[1-A_O^i]A_{IO}^{\N \backslash \X} } W = 0 \,. \label{eq:constr_LN}
\end{align}
Written in the form of Eq.~\eqref{eq:constr_LN_rec}, the validity constraint for $W$ says that all reduced matrices $\Tr_{\N \backslash \X}W$ shared by the parties of any strict subset $\X$ of $\N$ (obtained after tracing out the parties that are not in $\X$) must be valid, and that $W$ must further satisfy the additional constraint that ${}_{ \prod_{i \in \N}[1-A_O^i]} W = 0$. 
The form of Eq.~\eqref{eq:constr_LN} expresses explicitly all the (linearly independent) constraints that these recursive validity conditions imply on $W$.%
\footnote{Note that the constraint in Eq.~\eqref{eq:constr_LN} can also be written as ${}_{\prod_{i \in \X}[1-A_O^i]} (\Tr_{\N \backslash \X}W) = 0$. 
In this paper we generically use the form of Eq.~\eqref{eq:constr_LN} for ease of notation; it may be useful, however, to keep in mind that this type of constraint is in fact a constraint on the reduced matrix $\Tr_{\N \backslash \X}W$ shared by the parties in $\X$, as written more explicitly in Eq.~\eqref{eq:constr_LN_rec}.}
Denoting by $\P$ the convex cone of positive semidefinite matrices, the set of valid process matrices is then the convex cone
\begin{equation}
\W = {\cal P} \cap \L^\N \,.
\end{equation}

\medskip

In order to discuss the causal separability of process matrices, it is necessary to also characterise the subspaces of such matrices that are compatible with certain fixed causal relations between (subsets of) parties.
Such causal relations, as for the particular cases of fixed causal orders discussed in the previous sections, are understood via the notion of signalling: if a (group of) parties is in the causal future of some others, then there is no way for them to signal to those earlier parties.

We first consider the case of process matrices that are compatible with a given party $A_k$ acting first:%
\footnote{Note that a process matrix can be compatible with several different causal relations between parties. For example, if a matrix $W$ does not allow any party to signal to another, then it is compatible with any party or group of parties acting first.}
regardless of the operation performed by the other parties $A_{k'}$ (for all $k' \neq k$), the marginal probability distribution for $A_k$ obtained from~\eqref{eq:born_rule_Npartite} must not depend on the CPTP maps $M_{x_{k'}}^{A^{k'}_{IO}}$ chosen by those other parties.
As already mentioned in the previous section and shown in Appendix~\ref{app:fixed_order}, a given process matrix $W$ satisfies this condition if and only if, whatever CP map $M_k$ is applied by $A_k$, the conditional process matrix $W_{|M_k}$, as defined in Eq.~\eqref{eq:conditional_W}, is a valid $(N{-}1)$-partite process matrix for the remaining parties in $\N\backslash A_k$.

We can in fact ignore here the assumption that $M_k \ge 0$, and the above constraint is equivalent to imposing that $W_{|M_k} \in \L^{\N \backslash A_k}$ for any $M_k \in A_{IO}^k$.
Such a constraint defines a linear subspace of $A_{IO}^\N$.
Taking its intersection with the subspace $\L^\N$, we denote the linear subspace of valid process matrices compatible with party $A_k$ first by $\L^{A_k \prec (\N \backslash A_k)}$. 
We find, using Eq.~\eqref{eq:constr_LN} above (and after removing redundant constraints; see Eq.~\eqref{eq:constr_L_k_first_app} in Appendix~\ref{app:fixed_order}):
\begin{align}\label{eq:constr_L_k_first}
& W \in \L^{A_k \prec (\N \backslash A_k)} \notag \\[1mm]
& \Leftrightarrow \ W \in \L^\N \quad \text{and} \quad \forall \ M_k \in A_{IO}^k, \ W_{|M_k} \in \L^{\N \backslash A_k} \notag \\[1mm]
& \Leftrightarrow \ {}_{[1-A_O^k]A_{IO}^{\N \backslash k}} W = 0 \quad \text{and}  \notag \\
& \qquad \forall \ \X \subseteq \N \backslash k, \X \neq 0, {}_{\prod_{i \in \X}[1-A_O^i]A_{IO}^{\N \backslash k \backslash \X} } W = 0.
\end{align}

In Appendix~\ref{app:fixed_order} we also derive constraints for more general causal orders of the form $\K_1 \prec \K_2 \prec \cdots \prec \K_K$, for various disjoint subsets $\K_i$ of $\N$. 
Of particular interest is the specific case in which each $\K_i$ is a singleton, which gives constraints on a process matrix $W$ being compatible with a fixed causal order such as $A_1 \prec A_2 \prec \cdots \prec A_N$.
Such a $W$ must be compatible with $A_1$ acting first (and must therefore satisfy Eq.~\eqref{eq:constr_L_k_first} for $k=1$---in particular, the constraint on its third line); 
then, whatever CP map $M_1$ party $A_1$ applies, the resulting conditional process matrix $W_{|M_1}$ must then be a valid $(N{-}1)$-partite process matrix, compatible with party $A_2$ acting first (and must therefore satisfy Eq.~\eqref{eq:constr_L_k_first} for $k=2$---in particular, its third line---with $\N$ replaced by $\N \backslash \{1\}$); etc.
By iterating this argument (up until the party $A_N$), we find that the linear subspace $\L^{A_1 \prec \cdots \prec A_N}$ of process matrices compatible with the causal order $A_1 \prec \cdots \prec A_N$ is characterised by (cf.\ Eq.~\eqref{eq:constr_fixed_order_N})~\cite{gutoski06,chiribella09,araujo15}
\begin{align}
& W \in \L^{A_1 \prec \cdots \prec A_N} \notag \\[1mm]
& \Leftrightarrow \ \forall \, k = 1,\ldots,N, \ {}_{[1-A_O^k] A_{IO}^{(>k)}} W = 0 \,, \label{eq:constr_order_A1__AN}
\end{align}
with $A_{IO}^{(>k)} = A_{IO}^{\{k+1,\ldots,N\}}$ (with $A^{(>N)}_{IO}=A^\emptyset_{IO}=1$).

\subsection{Bipartite and tripartite causally (non)separable process matrices}
\label{subsec:charact_2_3}

In the bipartite scenario, the above characterisation of the subspaces $\L^{A\prec B}$ and $\L^{B\prec A}$ allows us, from Definition~\ref{def:biCS}, to give the following explicit characterisation of causally separable process matrices.

\begin{proposition}[Characterisation of bipartite causally separable process matrices] \label{prop:CS_charact_2}
A matrix $W \in A_{IO} \otimes B_{IO}$ is a valid bipartite causally separable process matrix if and only if it can be decomposed as
\begin{equation}
W = W_{(A,B)} + W_{(B,A)} \label{eq:decomp_CS_W_2}
\end{equation}
where, for each permutation $(X,Y)$ of the two parties $A$ and $B$, $W_{(X,Y)}$ is a positive semidefinite matrix satisfying
\begin{equation}
{}_{[1-X_O]Y_{IO}} W_{(X,Y)} = 0 , \ {}_{[1-Y_O]}W_{(X,Y)} = 0 \label{eq:decomp_CS_W_2_constr}
\end{equation}
(i.e., $W_{(X,Y)}$ is a valid process matrix compatible with the causal order $X \prec Y$).
\end{proposition}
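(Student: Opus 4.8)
The plan is to reduce the statement to two elementary observations. First, that the linear conditions \eqref{eq:decomp_CS_W_2_constr} are exactly those cutting out the subspace $\L^{X \prec Y}$ of matrices compatible with the causal order $X \prec Y$; and second, that passing between the convex-combination form of Definition~\ref{def:biCS} and the summation form \eqref{eq:decomp_CS_W_2} amounts to nothing more than a positive rescaling. Accordingly, I would begin by specialising the characterisation \eqref{eq:constr_L_k_first} of $\L^{A_k \prec (\N\backslash A_k)}$ to $N=2$: taking $\N=\{A,B\}$ and $k=A$, the third line of \eqref{eq:constr_L_k_first} collapses to ${}_{[1-A_O]B_{IO}}W = 0$, and its fourth line (with only $\X=\{B\}$ to consider) to ${}_{[1-B_O]}W = 0$. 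These are precisely \eqref{eq:decomp_CS_W_2_constr} for $(X,Y)=(A,B)$, and symmetrically for $(B,A)$. Hence a positive semidefinite matrix $W_{(X,Y)}$ satisfies \eqref{eq:decomp_CS_W_2_constr} if and only if it is a (not necessarily normalised) valid process matrix lying in $\L^{X\prec Y}$, as the parenthetical remark in the statement asserts. Since $\L^{X\prec Y}\subseteq \L^{\{A,B\}}$ and both the subspaces and the cone $\P$ are closed under addition, any $W$ of the form \eqref{eq:decomp_CS_W_2} is automatically positive semidefinite and in $\L^{\{A,B\}}$, i.e.\ a valid process matrix.

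For the forward implication, suppose $W$ is causally separable as in Definition~\ref{def:biCS}, so $W = q\,W^{A\prec B} + (1-q)\,W^{B\prec A}$ following \eqref{def:caus_sep_bi}. I would simply set $W_{(A,B)} \coloneqq q\,W^{A\prec B}$ and $W_{(B,A)} \coloneqq (1-q)\,W^{B\prec A}$. As $q,1-q\ge 0$ and each $W^{X\prec Y}$ is positive semidefinite, both $W_{(X,Y)}$ are positive semidefinite; and as each $\L^{X\prec Y}$ is a linear subspace, multiplication by a nonnegative scalar keeps $W_{(X,Y)}$ inside it, so \eqref{eq:decomp_CS_W_2_constr} holds. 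This delivers the decomposition \eqref{eq:decomp_CS_W_2} directly.

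The converse is where the only real care is needed, in the form of normalisation bookkeeping. Given $W = W_{(A,B)} + W_{(B,A)}$ with the stated properties, I would read off the weights from the traces: set $q \coloneqq \Tr W_{(A,B)} / (d_{A_O}d_{B_O})$, so that $1-q = \Tr W_{(B,A)}/(d_{A_O}d_{B_O})$ because $\Tr W = \Tr W_{(A,B)} + \Tr W_{(B,A)} = d_{A_O}d_{B_O}$ for the (normalised) process matrix $W$. Both weights are nonnegative, being traces of positive semidefinite matrices, and they sum to $1$. When $0<q<1$ I would rescale to $W^{A\prec B}\coloneqq W_{(A,B)}/q$ and $W^{B\prec A}\coloneqq W_{(B,A)}/(1-q)$; by the first observation these are positive semidefinite elements of $\L^{A\prec B}$ and $\L^{B\prec A}$ with the correct trace $d_{A_O}d_{B_O}$, hence genuine normalised process matrices compatible with the respective orders, and $W = q\,W^{A\prec B} + (1-q)\,W^{B\prec A}$ is exactly \eqref{def:caus_sep_bi}. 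The degenerate cases $q\in\{0,1\}$, where one summand has vanishing trace and is therefore zero (being positive semidefinite) so that $W$ is itself a process matrix compatible with a single order, are handled by taking that value of $q$.

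I expect the only point demanding attention, rather than a genuine obstacle, to be this matching of normalisations. The entire equivalence rests on the fact that $\L^{X\prec Y}$ is a \emph{cone}, namely a linear subspace intersected with $\P$, so that normalised and unnormalised representatives may be freely traded by positive rescaling; one merely has to track the degenerate weights to remain within Definition~\ref{def:biCS} as literally stated, and to invoke the trace-zero-implies-zero property of positive semidefinite matrices in those cases.
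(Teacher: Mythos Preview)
Your proposal is correct and follows essentially the same approach as the paper: the paper does not give a formal proof of this proposition but simply remarks that Eq.~\eqref{eq:decomp_CS_W_2} is Definition~\ref{def:biCS} rewritten with the weights absorbed, $W_{(A,B)} = q\,W^{A\prec B}$ and $W_{(B,A)} = (1-q)\,W^{B\prec A}$, with the constraints~\eqref{eq:decomp_CS_W_2_constr} being the $N=2$ specialisation of the characterisation of $\L^{X\prec Y}$. Your treatment is merely more explicit about the normalisation bookkeeping in the converse direction; note, incidentally, that since the paper works with unnormalised process matrices throughout, one could also dispense with the trace arithmetic entirely and take, e.g., $q=\tfrac12$, $W^{A\prec B}=2W_{(A,B)}$, $W^{B\prec A}=2W_{(B,A)}$.
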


Note that, in contrast to Eq.~\eqref{def:caus_sep_bi} in Definition~\ref{def:biCS}, we did not write the weights $q$ and $1-q$ explicitly in Eq.~\eqref{eq:decomp_CS_W_2}.
Instead, for convenience and consistency with the characterisations of tripartite and $N$-partite causally separable processes which will follow, we decomposed $W$ in terms of nonnormalised process matrices, writing $W_{(A,B)} = q \, W^{A \prec B}$ and $W_{(B,A)} = (1{-}q) \, W^{B \prec A}$.

\medskip

As we discussed in Sec.~\ref{sec:defs}, the tripartite case of causal separability was already studied by Oreshkov and Giarmatzi under the name ``extensible causal separability'' in Ref.~\cite{oreshkov16}.
In their Proposition~3.3 they provided a characterisation of tripartite (extensible) causal separability, albeit describing the constraints in a different way.
In our approach, this characterisation can be expressed as follows:

\begin{widetext}
\begin{proposition}[Characterisation of tripartite causally separable process matrices] \label{prop:CS_charact_3}
A matrix $W \in A_{IO} \otimes B_{IO} \otimes C_{IO}$ is a valid tripartite causally separable process matrix (as per Definition~\ref{def:our_def-CS}) if and only if it can be decomposed as
\begin{equation}
\begin{array}{rcccccc}
W &=& W_{(A)} &+& W_{(B)} &+& W_{(C)} \\[1mm]
&=& \overbrace{W_{(A,B,C)} + W_{(A,C,B)}} &+& \overbrace{W_{(B,A,C)} + W_{(B,C,A)}} &+& \overbrace{W_{(C,A,B)} + W_{(C,B,A)}}
\end{array} \label{eq:decomp_CS_W_3}
\end{equation}
where, for each permutation of the three parties $(X,Y,Z)$, $W_{(X,Y,Z)}$ and $W_{(X)} \coloneqq W_{(X,Y,Z)} + W_{(X,Z,Y)}$ are positive semidefinite matrices satisfying
\begin{gather}
 {}_{[1-X_O]Y_{IO}Z_{IO}} W_{(X)} = 0 \,, \label{eq:decomp_CS_W_3_constrWX} \\[1mm]
 {}_{[1-Y_O]Z_{IO}} W_{(X,Y,Z)} = 0 \,, \quad {}_{[1-Z_O]}W_{(X,Y,Z)} = 0 \,. \label{eq:decomp_CS_W_3_constrWXYZ}
\end{gather}
\end{proposition}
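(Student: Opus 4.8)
The plan is to prove both implications of the stated equivalence, reading the decomposition~\eqref{eq:decomp_CS_W_3} as the assertion that $X$ acts first (carried by $W_{(X)}$) and that, \emph{after} $X$'s operation, the order of the two remaining parties is a possibly dynamical mixture of $Y\prec Z$ and $Z\prec Y$ (carried by $W_{(X,Y,Z)}$ and $W_{(X,Z,Y)}$). Before either direction I would record one preliminary observation: the two constraints in~\eqref{eq:decomp_CS_W_3_constrWXYZ} are \emph{exactly} equivalent to requiring that, for every CP map $M_X\in X_{IO}$, the conditional matrix $(W_{(X,Y,Z)})_{|M_X}$ (defined as in Eq.~\eqref{eq:conditional_W}) be compatible with the order $Y\prec Z$ in the sense of Eq.~\eqref{eq:constr_order_A1__AN}. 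Indeed, the trace-and-replace maps ${}_{[1-Y_O]Z_{IO}}$ and ${}_{[1-Z_O]}$ act only on $Y,Z$, hence commute with the partial trace over $X$ and the insertion of $M_X$, giving ${}_{[1-Y_O]Z_{IO}}(W_{(X,Y,Z)})_{|M_X}=\Tr_X[M_X\cdot {}_{[1-Y_O]Z_{IO}} W_{(X,Y,Z)}]$, which vanishes for all $M_X$ iff ${}_{[1-Y_O]Z_{IO}} W_{(X,Y,Z)}=0$ (as the $M_X$ span $X_{IO}$), and likewise for the second constraint. Using the same commutativity I would also check that~\eqref{eq:decomp_CS_W_3_constrWX} together with~\eqref{eq:decomp_CS_W_3_constrWXYZ} for both orders of $Y,Z$ implies the full set of conditions~\eqref{eq:constr_L_k_first} for $W_{(X)}$ being compatible with $X$ first, so that each $W_{(X)}$, and hence $W=\sum_X W_{(X)}$, is a valid process matrix.

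For sufficiency, assume $W$ has the decomposition~\eqref{eq:decomp_CS_W_3}. Given any extension by $\rho\in A_{I'}^\N$, I would take $W_{(X)}^\rho\coloneqq W_{(X)}\otimes\rho$ as the top-level terms of Definition~\ref{def:our_def-CS}. Attaching incoming ancillas preserves compatibility with a fixed order (as already used in the bipartite case, Sec.~\ref{sec:bipartite_CS}), so each $W_{(X)}^\rho$ remains compatible with $X$ first; and for any $M_X\in X_{II'O}$ the conditional $(W_{(X)}^\rho)_{|M_X}=(W_{(X,Y,Z)}\otimes\rho)_{|M_X}+(W_{(X,Z,Y)}\otimes\rho)_{|M_X}$ is, by the preliminary observation applied to the extended matrices, a sum of a process compatible with $Y\prec Z$ and one compatible with $Z\prec Y$. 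By Proposition~\ref{prop:CS_charact_2} this conditional is bipartite causally separable, and since bipartite causal separability coincides with its extensible version it is causally separable per Definition~\ref{def:our_def-CS}; the recursion then terminates at the single-party level. Hence $W$ is causally separable.

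For necessity, the difficulty is the converse reading of the preliminary observation. Applying Definition~\ref{def:our_def-CS} yields, for a suitable extension $\rho$, a top-level decomposition into first-party terms with bipartite causally separable conditionals, but a priori this only gives, for each fixed $M_X$, \emph{some} decomposition $(W_{(X)})_{|M_X}=V^{Y\prec Z}(M_X)+V^{Z\prec Y}(M_X)$; equivalently, one must split the linear map $M_X\mapsto(W_{(X)})_{|M_X}$ into two completely positive pieces, one per order. These per-$M_X$ decompositions need not be restrictions of a single splitting $W_{(X)}=W_{(X,Y,Z)}+W_{(X,Z,Y)}$. Resolving this \emph{gluing} is precisely where the \emph{extensibility} built into Definition~\ref{def:our_def-CS} is indispensable: without it one only recovers Oreshkov and Giarmatzi's weaker causal separability, for which the gluing genuinely fails, as the activation example $W^\text{act.}$ of Sec.~\ref{subsec:comparison} shows. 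I would carry out the gluing with the teleportation technique of Lemma~\ref{lemma:teleportation}: enlarging the incoming spaces with a maximally entangled ancilla and letting the relevant party project onto it realises all of $X$'s operations ``at once'' (process tomography), so the single bipartite-separable decomposition of the teleported conditional process---guaranteed because $W\otimes\rho$ is causally separable for this $\rho$---pulls back, under the teleportation map, to one fixed splitting of $W_{(X)}$ whose conditionals are compatible with $Y\prec Z$ and $Z\prec Y$ for \emph{every} $M_X$. By the preliminary observation this is exactly~\eqref{eq:decomp_CS_W_3_constrWXYZ}, and positivity of the pieces is inherited from the teleported process.

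I expect this last gluing step to be the main obstacle, with the delicate part being the teleportation bookkeeping of Lemma~\ref{lemma:teleportation}---verifying that the teleportation map sends processes compatible with a fixed order to processes compatible with the corresponding order and commutes appropriately with the conditioning, so that the first-party and order-of-the-remaining-two structures are recovered simultaneously on the original (un-extended) $W$. It is cleanest to isolate this as the separate lemma; alternatively, one may route the whole necessity argument through the equivalence of Definition~\ref{def:our_def-CS} with extensible causal separability (Definition~\ref{def:OG-ECS}, Appendix~\ref{app:equiv_OGECS_our_def}) and the corresponding known tripartite characterisation.
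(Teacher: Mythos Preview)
Your approach matches the paper's: sufficiency by tensoring each piece with $\rho$ and reading off the bipartite decomposition of the conditionals, necessity by the teleportation technique of Lemma~\ref{lemma:teleportation} to turn the per-$M_X$ bipartite decompositions into a single global splitting of $W_{(X)}$.

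There is, however, one genuine gap in your necessity argument that you should not absorb into ``bookkeeping''. When you teleport $X$'s system to $Y_{I'}$ and decompose the resulting bipartite process for $(Y,Z)$, the two pieces you obtain are \emph{not} symmetric in $Y$ and $Z$. For the piece compatible with $Y\prec Z$ you indeed get, after relabelling $Y_{I'}\to X_{IO}$, exactly ${}_{[1-Y_O]Z_{IO}}W_{(X,Y,Z)}=0$ and ${}_{[1-Z_O]}W_{(X,Y,Z)}=0$. But for the piece compatible with $Z\prec Y$, the constraint ${}_{[1-Z_O]Y_{II'O}}(\cdot)=0$ becomes, after relabelling, only ${}_{[1-Z_O]X_{IO}Y_{IO}}W_{(X,Z,Y)}=0$, which is strictly weaker than the required ${}_{[1-Z_O]Y_{IO}}W_{(X,Z,Y)}=0$. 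In other words, teleportation to $Y$ directly gives~\eqref{eq:decomp_CS_W_3_constrWXYZ} for $W_{(X,Y,Z)}$ but not for $W_{(X,Z,Y)}$; your claim that ``by the preliminary observation this is exactly~\eqref{eq:decomp_CS_W_3_constrWXYZ}'' is therefore not yet justified.

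The paper closes this gap with one extra line: since $W_{(X)}$ is compatible with $X$ first, it satisfies ${}_{[1-Z_O]Y_{IO}}W_{(X)}=0$; and from ${}_{[1-Z_O]}W_{(X,Y,Z)}=0$ one also has ${}_{[1-Z_O]Y_{IO}}W_{(X,Y,Z)}=0$; subtracting then gives ${}_{[1-Z_O]Y_{IO}}W_{(X,Z,Y)}={}_{[1-Z_O]Y_{IO}}(W_{(X)}-W_{(X,Y,Z)})=0$. This little subtraction trick is precisely the step that does \emph{not} generalise to $N\geq4$ (where $W_{(X)}$ splits into more than two pieces), which is why the paper's necessary and sufficient conditions are only proved to coincide up to three parties. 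So it is worth isolating explicitly rather than leaving it inside the teleportation lemma.
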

\end{widetext}

The proof of this characterisation was sketched in Ref.~\cite{oreshkov16} using a somewhat different terminology to what we employ; in particular, they express causal constraints in terms of restrictions of what terms are ``allowed'' in a Hilbert-Schmidt basis decomposition of a matrix (see Appendix~\ref{app:allowed_forbidden_terms}).
We give a more detailed proof in Appendix~\ref{app:charact_CS_3_general}, which is again based on a ``teleportation technique'' (cf.\ Lemma~\ref{lemma:teleportation} in Appendix~\ref{app:characterisation_CS_Ws}), similar in spirit to the one briefly sketched in Sec.~\ref{subsec:comparison}. 

Let us break down and analyse the terms appearing in the decomposition~\eqref{eq:decomp_CS_W_3} to understand better this characterisation.

From the constraints in Eq.~\eqref{eq:decomp_CS_W_3_constrWXYZ} it follows that, that for each party $X$, the matrix $W_{(X)} ( = W_{(X,Y,Z)} + W_{(X,Z,Y)})$ satisfies ${}_{[1-Y_O]Z_{IO}}W_{(X)} = {}_{[1-Z_O]Y_{IO}}W_{(X)} = {}_{[1-Y_O][1-Z_O]}W_{(X)} = 0$. 
Together with Eq.~\eqref{eq:decomp_CS_W_3_constrWX} and the fact that $W_{(X)}$ is positive semidefinite, this implies that $W_{(X)}$ is a valid tripartite process matrix compatible with party $X$ acting first (since it satisfies Eq.~\eqref{eq:constr_L_k_first} for $A_k=X$). 
$W$ is thus decomposed in Eq.~\eqref{eq:decomp_CS_W_3} as a sum of 3 valid process matrices, which ensures in particular that it is itself a valid process matrix.

On the other hand, the matrices $W_{(X,Y,Z)}$ in the decomposition~\eqref{eq:decomp_CS_W_3} are not necessarily valid process matrices. 
Nevertheless, the constraints~\eqref{eq:decomp_CS_W_3_constrWXYZ} imply that whatever the CP map $M_X$ applied by the first party $X$, the conditional process matrix $(W_{(X,Y,Z)})_{|M_X} \coloneqq \Tr_X[M_X \otimes \id^{YZ} \cdot W_{(X,Y,Z)}]$ is a valid bipartite process matrix, compatible with the causal order $Y \prec Z$ (indeed, it satisfies Eq.~\eqref{eq:constr_order_A1__AN} for this causal order: 
e.g., ${}_{[1-Y_O]Z_{IO}}[(W_{(X,Y,Z)})_{|M_X}] = ({}_{[1-Y_O]Z_{IO}}W_{(X,Y,Z)})_{|M_X} = 0$).

The fact that the matrices $W_{(X,Y,Z)}$ are not necessarily valid process matrices, and thus that Eq.~\eqref{eq:decomp_CS_W_3} does not simply decompose $W$ into a combination of process matrices compatible with fixed causal orders, is a consequence of the possibility of dynamical (but still well-defined, albeit not \emph{fixed}) causal orders (recall the discussion at the end of Sec.~\ref{sec:W_def_multipartite}).
In Sec.~\ref{subsec:examples} we will consider in more detail a concrete example of a process matrix allowing for such dynamical causal orders.

\subsection{General multipartite causally (non)separable process matrices}
\label{subsec:charact_N}

As we will see below, it is possible to generalise the decomposition of Proposition~\ref{prop:CS_charact_3} to the case of $N$-partite causal separability. While the generalisation clearly provides a sufficient condition for causal separability, it turns out that the proof that it is also a necessary condition does not readily generalise. 
Indeed, the proof for the tripartite case in Appendix~\ref{app:charact_CS_3_general} relies on the fact that each term $W_{(X)}$ in Eq.~\eqref{eq:decomp_CS_W_3} is the sum of only two ``base'' terms, something that is not true in the natural generalisation of this decomposition.
(To understand this better, we encourage the interested reader to look at the subtleties of that proof.)

For the general multipartite case, we therefore provide the following, separate, necessary and sufficient conditions.
Since these arise from different considerations, we will present and discuss these individually. Indeed, although these coincide in the bipartite and tripartite cases, it remains an open question whether this is the case in general (or if one is both necessary and sufficient but not the other, or if neither are).

\subsubsection{Necessary condition}

The necessary condition we present here is based on the teleportation technique and is a generalisation of the use of this approach in the proof of the tripartite characterisation.
The teleportation technique is more formally described in Lemma~\ref{lemma:teleportation} in Appendix~\ref{app:characterisation_CS_Ws}, but we briefly outline how it leads to the necessary condition to help understand the condition itself.
The idea is to consider, in Eq.~\eqref{eq:our_def-CS} of Definition~\ref{def:our_def-CS}, a specific shared incoming ancillary state, as well as specific operations $M_k$ applied by the parties $A_k$, for which there is a straightforward relation between the forms of the respective $N$-partite process matrices in which $A_k$ acts first, and the corresponding $(N{-}1)$-partite conditional process matrices that we obtain after $A_k$ has operated. As the latter are by definition causally separable (and satisfy thus the necessary conditions for $(N{-}1)$-partite causal separability), this allows us to infer necessary conditions for the causal separability of the original $N$-partite process matrix. 

More precisely, we provide, as ancillary incoming systems, a maximally entangled state between every pair of parties, defining an overall ancillary state $\rho$.
If $W$ is a causally separable process matrix, then, by definition, $W \otimes \rho$ can be decomposed into a sum of process matrices $W_{(k)}^\rho$ compatible with a given party $A_k$ acting first (cf.\ Eq.~\eqref{eq:our_def-CS} in Definition~\ref{def:our_def-CS}); furthermore, as $\rho$ is pure, one can write $W_{(k)}^\rho = W_{(k)} \otimes \rho$ with $W_{(k)}$ itself being compatible with $A_k$ first.
For each such process matrix $W_{(k)}$ the party $A_k$ can then ``teleport'' the part of $W_{(k)}$ on their systems $A^k_{IO}$ to another party $A_{k'}$ by applying an appropriate CP map $M_k$.
The effect is that the resulting $(N{-}1)$-partite conditional process matrix $(W_{(k)}^\rho)_{|M_k}$ formally has the same form as $W_{(k)}$ (tensored with what is left over of the, now reduced, ancillary state $\rho$), 
except that the systems $A_{IO}^k$ are instead attributed (``teleported'') to the ancillary incoming system $A_{I'}^{k'}$ of $A_{k'}$.
From the definition of causal separability, $(W_{(k)}^\rho)_{|M_k}$ must itself be causally separable, so the necessary condition can be recursively applied to this $(N{-}1)$-partite process matrix until the base case of $N=3$, given by Proposition~\ref{prop:CS_charact_3}, is reached.

We give the full details of the proof of the necessary condition in Appendix~\ref{app:charact_CS_N_necessary}. 
However, in order to state more formally the condition itself, let us introduce the following notation.
For a given matrix $W \in A_{IO}^\N$, we denote by $W^{A_{IO}^k \to A_{I'}^{k'}} \in A_{IO}^{\N \backslash k} \otimes A_{I'}^{k'}$ the same matrix, where the systems $A_{IO}^k$ are attributed to some other system $A_{I'}^{k'}$ (of the same dimension as $A_{IO}^k$).
More formally, 
\begin{equation}
W^{A_{IO}^k \to A_{I'}^{k'}} \!\coloneqq \sum_{i,j} \Tr_k \! \big[ \!\ket{i}\!\!\bra{j}^{A_{IO}^k} \!\otimes \id^{\N \backslash k} \cdot W \big] \!\otimes \ket{j}\!\!\bra{i}^{A_{I'}^{k'}} \!\!, \label{eq:W_A_telep_to_A'}
\end{equation}
where $\{\ket{i}\}$ is an orthonormal basis of $\HS^{A_I^k} \otimes \HS^{A_O^k}$.

We then obtain the following recursive necessary condition:

\begin{proposition}[Necessary condition for general multipartite causal separability] \label{prop:CS_necessary}
An $N$-partite causally separable process matrix $W \in A_{IO}^\N$ (as per Definition~\ref{def:our_def-CS}) must necessarily have a decomposition of the form
\begin{equation}\label{eq:decomp_CN_W}
	W = \sum_{k \in \N} W_{(k)} 
\end{equation}
where each $W_{(k)}$ is a valid process matrix compatible with party $A_k$ acting first, and such that for each $k' \neq k$, $W_{(k)}^{A_{IO}^k \to A_{I'}^{k'}}$ is an $(N{-}1)$-partite causally separable process matrix.

Hence, any constraints satisfied by $(N{-}1)$-partite causally separable process matrices must also be satisfied by $W_{(k)}$ after re-attributing the system $A_{I'}^{k'}$ back to $A_{IO}^k$---i.e., after formally replacing $A_I^{k'}$ by $A_I^{k'}A_{I'}^{k'}$ and then $A_{I'}^{k'}$ by $A_{IO}^k$ in the constraints written using the notation defined in Eq.~\eqref{eq:trade_and_pad}.
\end{proposition}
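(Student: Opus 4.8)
The plan is to start from Definition~\ref{def:our_def-CS} and exploit the freedom to choose both a convenient ancillary state and convenient local operations, as packaged in the teleportation technique (Lemma~\ref{lemma:teleportation}). I would take as ancillary incoming systems a maximally entangled (hence pure) state $\rho = \ketbra{\Phi}$ shared between every pair of parties, arranging that the half held by $A_{k'}$ in the pair with $A_k$ lives in a system $A_{I'}^{k'}$ of dimension $d_{A_{IO}^k}$. Since $W$ is causally separable, $W \otimes \rho = \sum_{k} q_k\, W_{(k)}^\rho$ as in Eq.~\eqref{eq:our_def-CS}, with each $W_{(k)}^\rho$ compatible with $A_k$ acting first and each conditional process causally separable.

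The first step is to factor out $\rho$. Because $\rho$ is rank one on the ancillary systems and the nonnegative terms $W_{(k)}^\rho$ sum to $W\otimes\rho$, the support of each term is contained in that of $W\otimes\ketbra{\Phi}$, which on the ancillary factor is the one-dimensional span of $\ket{\Phi}$. Hence each term factorises as $W_{(k)}^\rho = W_{(k)}\otimes\rho$ (absorbing $q_k$ into $W_{(k)}$), yielding the decomposition $W = \sum_{k} W_{(k)}$ of Eq.~\eqref{eq:decomp_CN_W}. Compatibility of $W_{(k)}^\rho$ with $A_k$ first then reduces, via the linear characterisation Eq.~\eqref{eq:constr_L_k_first} and the fact that $\rho$ lives only on incoming systems (with no outgoing part to pad over), to compatibility of $W_{(k)}$ with $A_k$ first; in particular $W_{(k)}$ is a valid process matrix. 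This is a routine verification with the trade-and-pad calculus of Eq.~\eqref{eq:trade_and_pad}.

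The central step is the teleportation itself. Fixing $k$ and a target $k' \neq k$, party $A_k$ applies the CP map $M_k$ of Lemma~\ref{lemma:teleportation} that performs a Bell-type projection of $A_{IO}^k$ together with $A_k$'s half of the $A_k$--$A_{k'}$ entangled pair, while discarding the halves of its remaining ancillary pairs. Using that $W_{(k)}$ is compatible with $A_k$ first, the resulting $(N{-}1)$-partite conditional matrix $(W_{(k)}^\rho)_{|M_k}$ is, up to normalisation, exactly the reattributed matrix $W_{(k)}^{A_{IO}^k \to A_{I'}^{k'}}$ of Eq.~\eqref{eq:W_A_telep_to_A'}, tensored with the surviving ancillary states of the remaining parties (maximally mixed states from the pairs broken by $A_k$, and the still-entangled pairs among $\N\setminus k$). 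By Definition~\ref{def:our_def-CS} this conditional process is causally separable; removing the surviving ancillae (discussed below) then gives that $W_{(k)}^{A_{IO}^k \to A_{I'}^{k'}}$ is $(N{-}1)$-partite causally separable. The ``Hence'' clause follows purely formally: any constraint valid for $(N{-}1)$-partite causally separable matrices holds for $W_{(k)}^{A_{IO}^k \to A_{I'}^{k'}}$, and since the reattribution of Eq.~\eqref{eq:W_A_telep_to_A'} merely relabels $A_{IO}^k$ as $A_{I'}^{k'}$, translating such a constraint back (replacing $A_{I'}^{k'}$ by $A_{IO}^k$ in the notation of Eq.~\eqref{eq:trade_and_pad}) yields the asserted constraints on $W_{(k)}$.

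The main obstacle I anticipate is controlling the leftover ancillary state accompanying $W_{(k)}^{A_{IO}^k \to A_{I'}^{k'}}$: to conclude that $W_{(k)}^{A_{IO}^k \to A_{I'}^{k'}}$ is \emph{itself} causally separable I must strip these surviving (in general entangled) incoming ancillae, which requires the auxiliary fact that causal separability is preserved under discarding incoming ancillary systems, i.e.\ that attaching a fixed incoming state cannot create causal order. This, together with the precise verification that the teleported conditional reproduces Eq.~\eqref{eq:W_A_telep_to_A'} exactly, is where the care lies; the remaining steps are bookkeeping with Eqs.~\eqref{eq:trade_and_pad} and~\eqref{eq:constr_L_k_first}. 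Since the recursion bottoms out at $N=3$, where Proposition~\ref{prop:CS_charact_3} supplies the base constraints, this establishes the necessary condition at every level.
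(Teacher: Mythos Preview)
Your proposal is correct and follows essentially the same route as the paper's proof: choose a pure ancillary state consisting of maximally entangled pairs between all parties, use purity to factorise $W_{(k)}^\rho = W_{(k)}\otimes\rho$ (the paper's Proposition~\ref{prop:factor_rho}), apply the teleportation Lemma~\ref{lemma:teleportation} to obtain $W_{(k)}^{A_{IO}^k \to A_{I'}^{k'}}$ tensored with leftover ancillae, and then invoke the fact that tracing out incoming ancillary systems preserves causal separability (the paper's Proposition~\ref{prop:traceout}) to strip those ancillae. The obstacle you flag is precisely the auxiliary result the paper isolates and proves separately; once that is in hand, the remaining steps are indeed bookkeeping.
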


The decomposition of Eq.~\eqref{eq:decomp_CN_W} follows from that of Eq.~\eqref{eq:our_def-CS} in our definition of causal separability, for the appropriate choice of ancillary state and CP maps, as described above (see Appendix~\ref{app:charact_CS_N_necessary}).

To further clarify this condition, let us illustrate, in the fourpartite case (with parties $A,B,C,D$), how one can use it to obtain explicit constraints on causally separable process matrices. 
Proposition~\ref{prop:CS_necessary} implies that a fourpartite causally separable process matrix $W$ must be decomposable as
\begin{equation} \label{eq:decomp_CS_WA_4}
W = W_{(A)} + W_{(B)} + W_{(C)} + W_{(D)}, 
\end{equation}
with each $W_{(X)}$ (for $X = A,B,C,D$) being a valid process matrix compatible with party $X$ acting first---hence satisfying Eq.~\eqref{eq:constr_L_k_first} for $A_k=X$.%
\footnote{Note that the existence, for all $Y$, of a decomposition of the form of Eq.~\eqref{eq:explicit_NC_4partite_2} satisfying Eq.~\eqref{eq:explicit_NC_4partite} implies all the constraints of Eq.~\eqref{eq:constr_L_k_first}, except for the third line (i.e., ${}_{[1-X_O]Y_{IO}Z_{IO}T_{IO}}W=0$).}
For each $X$ and every other party $Y \neq X$, the recursive constraint that $W_{(X)}^{X_{IO} \to Y_{I'}}$ is a tripartite causally separable process matrix further implies, according to Proposition~\ref{prop:CS_charact_3} (for the 3 parties $Y,Z,T \neq X$) and after re-attributing the system $Y_{I'}$ to $X_{IO}$ (i.e., replacing $Y_{IO}$ by $Y_{I'}Y_{IO}$ and then $Y_{I'}$ by $X_{IO}$ in the constraints), that there must exist a decomposition of $W_{(X)}$ of the form%
\footnote{Here the superscripts $[X \to Y]$ are simply labels to indicate that, for each matrix $W_{(X)}$, there are potentially different decompositions of the form~\eqref{eq:explicit_NC_4partite_2} for each $Y \neq X$. (The sufficient condition below will in fact precisely be obtained by assuming that these decompositions do not depend on $Y$.)}
\begin{align}\label{eq:explicit_NC_4partite_2}
	W_{(X)} =&\, W_{(X,Y)}^{^{\scriptscriptstyle [X \to Y]}} + W_{(X,Z)}^{^{\scriptscriptstyle [X \to Y]}} + W_{(X,T)}^{^{\scriptscriptstyle [X \to Y]}} \notag \\
	 =&\, W_{(X,Y,Z,T)}^{^{\scriptscriptstyle [X \to Y]}} + W_{(X,Y,T,Z)}^{^{\scriptscriptstyle [X \to Y]}} + W_{(X,Z,Y,T)}^{^{\scriptscriptstyle [X \to Y]}} \notag \\
	& + W_{(X,Z,T,Y)}^{^{\scriptscriptstyle [X \to Y]}} + W_{(X,T,Y,Z)}^{^{\scriptscriptstyle [X \to Y]}} + W_{(X,T,Z,Y)}^{^{\scriptscriptstyle [X \to Y]}}
\end{align}
where each term appearing in the decomposition is positive semidefinite, $W_{(X,Y)}^{^{\scriptscriptstyle [X \to Y]}} = W_{(X,Y,Z,T)}^{^{\scriptscriptstyle [X \to Y]}} + W_{(X,Y,T,Z)}^{^{\scriptscriptstyle [X \to Y]}}$, etc., and with (for all $X \neq Y \neq Z \neq T$)
\begin{align}\label{eq:explicit_NC_4partite}
	& {}_{[1-Y_O]Z_{IO}T_{IO}}W_{(X,Y)}^{^{\scriptscriptstyle [X \to Y]}} = {}_{[1-Z_O]X_{IO}Y_{IO}T_{IO}}W_{(X,Z)}^{^{\scriptscriptstyle [X \to Y]}} = 0, \notag \\
	& {}_{[1-Z_O]T_{IO}}W_{(X,Y,Z,T)}^{^{\scriptscriptstyle [X \to Y]}} = {}_{[1-T_O]}W_{(X,Y,Z,T)}^{^{\scriptscriptstyle [X \to Y]}} = 0, \notag \\
	& {}_{[1-Y_O]T_{IO}}W_{(X,Z,Y,T)}^{^{\scriptscriptstyle [X \to Y]}} = {}_{[1-T_O]}W_{(X,Z,Y,T)}^{^{\scriptscriptstyle [X \to Y]}} = 0, \notag \\
	& {}_{[1-T_O]X_{IO}Y_{IO}}W_{(X,Z,T,Y)}^{^{\scriptscriptstyle [X \to Y]}} = {}_{[1-Y_O]}W_{(X,Z,T,Y)}^{^{\scriptscriptstyle [X \to Y]}} = 0. 
\end{align}

\medskip

Finally, we remark that the constraints obtained by considering teleporting each party $X$'s system to just a single other party $Y$ (i.e., by just demanding the existence of a decomposition of the above form for some other party $Y$, rather than for \emph{all} other parties $Y \neq X$) yields conditions that are still necessary for the causal separability of $W$, but which are generally weaker than those given in Proposition~\ref{prop:CS_necessary}.
Indeed, in Appendix~\ref{app:gap_nec_suff} we give an example of a fourpartite process matrix which satisfies those weaker conditions but not all of those given above.

\subsubsection{Sufficient condition}

A sufficient condition for causal separability can be obtained by considering a stricter form of the recursive decomposition~\eqref{eq:decomp_CN_W} in Proposition~\ref{prop:CS_necessary}.
In particular, we demand that $W$ has a decomposition into $W_{(k)}$ compatible with $A_k$ acting first and such that each $W_{(k)}$ itself recursively satisfies the sufficient constraints for an $(N{-}1)$-partite process matrix without $A^k_{IO}$ being traced out.
One can easily verify that the decomposition~\eqref{eq:decomp_CS_W_3} in the tripartite case is a generalisation of this kind from the bipartite case.
In the fourpartite case described explicitly above, this means that for each party $X$ there should be a single decomposition of the form~\eqref{eq:explicit_NC_4partite_2} (i.e., no longer dependent on $Y$) such that the constraints~\eqref{eq:explicit_NC_4partite} are satisfied without tracing out $X_{IO}$ on the first and fourth lines.
The fact that, unlike in the necessary conditions, we only consider a single (recursive) decomposition of each $W_{(k)}$ means that we can give a more explicit formulation for the sufficient condition.

Before stating the sufficient condition, let us introduce some more notations.
Let $\Pi$ denote the set of permutations (generically denoted by $\pi$) of $\N$. 
For an ordered subset $(k_1,\ldots,k_n)$ of $\N$ with $n$ elements (with $1 \le n \le N$, $k_i \neq k_j$ for $i \neq j$), let $\Pi_{(k_1,\ldots,k_n)}$ be the set of permutations of $\N$ for which the element $k_1$ is first, $k_2$ is second, $\ldots$, and $k_n$ is $n^{\text{th}}$---i.e., $\Pi_{(k_1,\ldots,k_n)} = \{\pi \in \Pi \mid \pi(1) = k_1, \ldots, \pi(n) = k_n\}$.
With these notations, we have the following sufficient condition, that directly generalises the decomposition of Proposition~\ref{prop:CS_charact_3}.

\begin{proposition}[Sufficient condition for general multipartite causal separability] \label{prop:CS_sufficient}
If a matrix $W \in A_{IO}^\N$ can be decomposed as a sum of $N!$ positive semidefinite operators $W_\pi \ge 0$ in the form
\begin{equation}
W = \sum_{\pi \in \Pi} W_\pi, \label{eq:decomp_CS_W}
\end{equation}
such that for any ordered subset of parties $(k_1, \ldots, k_n)$ of $\N$ (with $1 \le n \le N$, $k_i \neq k_j$ for $i \neq j$), the partial sum
\begin{equation}
W_{(k_1, \ldots, k_n)} \coloneqq \sum_{\pi \in \Pi_{(k_1, \ldots, k_n)}} W_\pi \label{eq:decomp_CS_WK}
\end{equation}
satisfies
\begin{equation}
{}_{[1-A_O^{k_n}]A_{IO}^{\N \backslash \{k_1,\ldots,k_n\}}}W_{(k_1, \ldots, k_n)} = 0, \label{eq:decomp_CS_suff_cond}
\end{equation}
then $W$ is a valid causally separable process matrix (as per Definition~\ref{def:our_def-CS}).
\end{proposition}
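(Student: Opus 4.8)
The plan is to prove the statement by induction on the number of parties $N$, verifying directly the recursive conditions of Definition~\ref{def:our_def-CS}. The base case $N=1$ is immediate, since any single-partite process matrix is causally separable by definition. For the inductive step I assume the proposition holds for every $(N{-}1)$-partite scenario---crucially including scenarios whose parties already carry extra ``ancillary'' incoming spaces $A_{I'}$---and I show that any $W=\sum_{\pi\in\Pi}W_\pi$ satisfying \eqref{eq:decomp_CS_suff_cond} meets all requirements of Definition~\ref{def:our_def-CS}.

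The first step is to note that the hypotheses are stable under extension. Given any ancillary state $\rho\in A_{I'}^\N$, I set $(W\otimes\rho)_\pi\coloneqq W_\pi\otimes\rho$, so that each partial sum becomes $W_{(k_1,\ldots,k_n)}\otimes\rho$. Because the trade-and-pad maps factorise across the tensor product, writing $\K\coloneqq\N\backslash\{k_1,\ldots,k_n\}$ one has ${}_{[1-A_O^{k_n}]A_{II'O}^\K}(W_{(k_1,\ldots,k_n)}\otimes\rho)=\big({}_{[1-A_O^{k_n}]A_{IO}^\K}W_{(k_1,\ldots,k_n)}\big)\otimes\big({}_{A_{I'}^\K}\rho\big)$, which vanishes by \eqref{eq:decomp_CS_suff_cond}. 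Thus $W\otimes\rho=\sum_k W_{(k)}^\rho$ with $W_{(k)}^\rho\coloneqq W_{(k)}\otimes\rho$ satisfies the analogue of the proposition's hypotheses in the extended scenario, and it suffices to work with these extended objects.

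Next I would show that each $W_{(k)}^\rho$ is a valid process matrix compatible with $A_k$ acting first, i.e.\ that it satisfies the third line of Eq.~\eqref{eq:constr_L_k_first}. Positive semidefiniteness is clear since it is a sum of operators $W_\pi\otimes\rho\ge0$, and the constraint ${}_{[1-A_O^k]A_{II'O}^{\N\backslash k}}W_{(k)}^\rho=0$ is exactly the $n=1$ instance of \eqref{eq:decomp_CS_suff_cond}. The remaining ``group'' constraints ${}_{\prod_{i\in\X}[1-A_O^i]A_{II'O}^{(\N\backslash k)\backslash\X}}W_{(k)}^\rho=0$, for every nonempty $\X\subseteq\N\backslash k$, are the crux and generalise the manipulation carried out for the tripartite case below Proposition~\ref{prop:CS_charact_3}. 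The idea is to partition the permutations contributing to $W_{(k)}^\rho$ according to which party $j\in\X$ occurs \emph{last} among those of $\X$, grouping them into partial sums $W_\gamma^\rho$ indexed by the ordered prefix $\gamma=(k,\ldots,j)$ up to that party; for every permutation in such a group, all parties after $j$ lie in $(\N\backslash k)\backslash\X$. Applying the extended form of \eqref{eq:decomp_CS_suff_cond} to the prefix $\gamma$ (with $j$ as its last element, and $G$ the set of parties it contains) gives ${}_{[1-A_O^j]A_{II'O}^{\N\backslash G}}W_\gamma^\rho=0$; since all the trade-and-pad maps involved act on disjoint systems and hence commute, I can peel ${}_{[1-A_O^j]A_{II'O}^{\N\backslash G}}$ out of ${}_{\prod_{i\in\X}[1-A_O^i]A_{II'O}^{(\N\backslash k)\backslash\X}}$ to annihilate each $W_\gamma^\rho$, and summing over the groups yields the claim. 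I expect this partitioning argument to be the main obstacle, as it is exactly the place where the single (rather than $Y$-dependent) decomposition is essential, and it requires careful bookkeeping of prefixes together with a check that the leftover trade-and-pad factors can always be separated off.

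Finally, I would verify that for every CP map $M_k\in A_{II'O}^k$ the conditional $(W_{(k)}^\rho)_{|M_k}$ is causally separable, by exhibiting for it a decomposition of the proposition's form on the $(N{-}1)$ parties $\N\backslash k$ and invoking the induction hypothesis. Writing $k\cdot\pi'$ for the permutation of $\N$ that places $k$ first followed by a permutation $\pi'$ of $\N\backslash k$, I set $Z_{\pi'}\coloneqq(W_{k\cdot\pi'}\otimes\rho)_{|M_k}$: these are positive semidefinite because $M_k\ge0$ and $W_{k\cdot\pi'}\otimes\rho\ge0$, they sum to $(W_{(k)}^\rho)_{|M_k}$, and the relevant partial sum over $\pi'$ starting with $(k_2,\ldots,k_m)$ equals $(W_{(k,k_2,\ldots,k_m)}\otimes\rho)_{|M_k}$. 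Since conditioning on $A_k$ commutes with the trade-and-pad maps acting on the other parties, the required $(N{-}1)$-partite constraint on this partial sum reduces to $\big({}_{[1-A_O^{k_m}]A_{II'O}^{\N\backslash\{k,k_2,\ldots,k_m\}}}(W_{(k,k_2,\ldots,k_m)}\otimes\rho)\big)_{|M_k}$, which vanishes by the factorisation property and \eqref{eq:decomp_CS_suff_cond} for the ordered subset $(k,k_2,\ldots,k_m)$. The induction hypothesis then gives causal separability of $(W_{(k)}^\rho)_{|M_k}$, so all conditions of Definition~\ref{def:our_def-CS} hold. Validity of $W$ itself follows from the case of trivial $\rho$, where $W=\sum_k W_{(k)}$ is a sum of valid process matrices, and the overall normalisation $\Tr W=\prod_k d_{A_O^k}$ can be imposed separately.
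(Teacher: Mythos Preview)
Your proof is correct and follows the same inductive architecture as the paper's proof in Appendix~\ref{app:charact_CS_N_sufficient}: stability under tensoring with $\rho$, validity of each $W_{(k)}$ compatible with $A_k$ first, and passing the decomposition through the conditioning $(\cdot)_{|M_k}$ to invoke the $(N{-}1)$-partite induction hypothesis. The one substantive difference is in how you establish the ``group constraints'' ${}_{\prod_{i\in\X}[1-A_O^i]A_{IO}^{(\N\backslash k)\backslash\X}}W_{(k)}=0$: the paper proves the more general version (their Eq.~\eqref{eq:decomp_CS_suff_cond_2}, for arbitrary prefixes) by a short downward induction on the prefix length $n$, whereas you argue directly by partitioning $\Pi_{(k)}$ according to the prefix $\gamma$ up to the last $\X$-element. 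Your partitioning works---the key point you leave implicit is that any such prefix $\gamma$ necessarily has $\X\subseteq G$, so the group of permutations with that prefix is \emph{exactly} $\Pi_\gamma$, whence $W_\gamma=W_{(k_1,\ldots,k_m)}$ and Eq.~\eqref{eq:decomp_CS_suff_cond} applies verbatim; the paper's downward induction is a bit cleaner and yields the stronger statement for all $n$ at once, but your direct argument is equally valid for the $n=1$ case you need.
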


This decomposition was also suggested independently by Oreshkov as a possible generalisation of Proposition~\ref{prop:CS_charact_3}~\cite{oreshkov16a} (although following the approach of Refs.~\cite{oreshkov12,oreshkov16}, Oreshkov expressed it differently, namely in terms of allowed terms in a Hilbert-Schmidt basis decomposition of the matrices $W_{(k_1, \ldots, k_n)}$; cf.\ Appendix~\ref{app:allowed_forbidden_terms}).
The proof that the condition above is indeed sufficient is given in Appendix~\ref{app:charact_CS_N_sufficient}. 
In order to understand it better, it is nonetheless worth discussing the form of the decomposition and the terms appearing within in a little more detail.

Firstly, note that one can easily show by induction (see Appendix~\ref{app:charact_CS_N_sufficient}), that if Eq.~\eqref{eq:decomp_CS_suff_cond} is satisfied for all $(k_1, \ldots, k_n)$, then one also has, for all $(k_1, \ldots, k_n)$ with $1 \le n < N$, that
\begin{align}
& \forall \ \X \subseteq \N \backslash \{k_1, \ldots, k_n\}, \X \neq \emptyset, \notag \\
& {}_{\prod_{i \in \X}[1-A_O^i]A_{IO}^{\N \backslash \{k_1, \ldots, k_n\} \backslash \X}}W_{(k_1, \ldots, k_n)} = 0. \label{eq:decomp_CS_suff_cond_2}
\end{align}
Note also that since all $W_\pi \ge 0$, all $W_{(k_1, \ldots, k_n)} \ge 0$ as well.

For $n=1$, Eqs.~\eqref{eq:decomp_CS_suff_cond} and~\eqref{eq:decomp_CS_suff_cond_2} imply that each matrix $W_{(k_1)} \, (\ge 0)$ is a valid process matrix compatible with party $A_{k_1}$ acting first; indeed, Eq.~\eqref{eq:constr_L_k_first} is satisfied for $A_k=A_{k_1}$. 
As $W = \sum_{k_1} W_{(k_1)}$ according to Eqs.~\eqref{eq:decomp_CS_W}--\eqref{eq:decomp_CS_WK}, this ensures in particular that $W$ is indeed a valid process matrix.

Note, however, that in general the matrices $W_{(k_1, \ldots, k_n)}$ for $n>1$ are \emph{not} valid processes matrices compatible with the causal order $A_{k_1} \prec \cdots \prec A_{k_n}$.
Indeed, as we already observed in the tripartite case, $W_{(k_1, \ldots, k_n)}$ may not generally be a valid process matrix at all.
Nevertheless, comparing with Eq.~\eqref{eq:constr_L_k_first}, one can see that Eqs.~\eqref{eq:decomp_CS_suff_cond} and~\eqref{eq:decomp_CS_suff_cond_2} imply that whatever the CP maps $M_{k_1}, \ldots, M_{k_{n-1}}$ applied by the $n{-}1$ parties $A_{k_1}, \ldots, A_{k_{n-1}}$, the conditional matrix $(W_{(k_1, \ldots, k_n)})_{|M_{k_1} \otimes \cdots \otimes M_{k_{n-1}}} \coloneqq \Tr_{k_1, \ldots, k_{n-1}} [M_{k_1} \otimes \cdots \otimes M_{k_{n-1}} \otimes \id^{\N \backslash \{k_1,\ldots,k_{n-1}\}} \cdot W_{(k_1, \ldots, k_n)}]$ is a valid $(N{-}n{+}1)$-partite process matrix, compatible with party $A_{k_n}$ acting first.

\medskip

As we have noted already, the condition of Proposition~\ref{prop:CS_sufficient} coincides, in the bipartite and tripartite cases, with those given in Propositions~\ref{prop:CS_charact_2} and~\ref{prop:CS_charact_3}, respectively.
Indeed, for these cases, the necessary and sufficient conditions given here coincided.
For four-or-more parties it remains an open question whether this is also the case.
We performed several numerical searches for process matrices satisfying the necessary but not sufficient conditions (see Appendix~\ref{app:numerical_search}) and failed to find any such examples, although the complexity of the numerical searches means that we caution against interpreting this as evidence that the conditions coincide in general.
In Appendix~\ref{app:particular_4partite}, however, we show that they do coincide in the specific fourpartite case with $d_{D_O} = 1$.
This is a rather restricted scenario (where any process matrix is compatible with $D$ acting last), but nonetheless includes cases of interest such as the fourpartite variant of the quantum switch we discuss at the end of this section.

Finally, we note that the decomposition in Proposition~\ref{prop:CS_sufficient} has consequences beyond the definition of causal separability meriting additional interest: as we show elsewhere~\cite{wechs18}, it characterises precisely (i.e., providing a necessary and sufficient condition for) quantum circuits with classical control of causal order.

\subsection{Witnesses of causal nonseparability}
\label{subsec:witnesses}

While the previous characterisations provide mathematical descriptions of causally (non)separable process matrices, an important problem is the ability to detect and certify causal nonseparability in practice.
One approach that has been explored extensively is to show the violation of causal inequalities~\cite{oreshkov12,baumeler14,branciard16,abbott16,abbott17}, which is indeed only possible (within the process matrix formalism) with causally nonseparable process matrices (see Appendix~\ref{app:causal_correl}), and provides a device-independent certificate of noncausality.
However, certain causally nonseparable process matrices are known not to violate any such inequalities---this is, e.g., the case for the quantum switch~\cite{araujo15,oreshkov16}.

Another approach, first introduced in Ref.~\cite{araujo15} for the bipartite and restricted tripartite scenarios, and further studied in Ref.~\cite{branciard16a}, is to construct \emph{witnesses of causal nonseparability}---or ``\emph{causal witnesses}'' for short.
Here, we outline this approach before describing how the conditions given in the previous subsections allow us to construct causal witnesses for general multipartite scenarios. 
This will permit a full analysis of the examples in the following section, as well as the verification of certain results already claimed in previous sections.
While the overall approach of causal witnesses---and their formulation as efficiently solvable semidefinite programming (SDP) problems---in the general case mirrors that of the specific scenarios previously studied~\cite{araujo15,branciard16a}, the validity of the generalisation rests on certain technical details which we prove in Appendix~\ref{app:witnesses_SDP}.

A causal witness is defined as a Hermitian operator $S$ such that
\begin{equation}\label{eq:witnessDef}
	\Tr[S\cdot W^\sep]\ge 0
\end{equation}
for all $W^\sep\in\W^\sep$, where $\W^\sep\subset \W$ is the set of causally separable process matrices.
For any causally nonseparable $W^\text{ns}$, it is known that there exists a causal witness $S$ such that $\Tr[S\cdot W^\text{ns}]<0$~\cite{araujo15,branciard16a}.
Given a process, a causal witness $S$ can be ``measured'' by having each party implement suitably chosen operations or measurements, providing a now device-dependent test of causal nonseparability.
This approach has been used, e.g., to verify experimentally the causal nonseparability of two different implementations of the quantum switch~\cite{rubino17,goswami18}.

Propositions~\ref{prop:CS_charact_2}, \ref{prop:CS_charact_3}, \ref{prop:CS_necessary} and~\ref{prop:CS_sufficient} allow for the characterisation of the convex cone $\W^\text{sep}$ of causally separable processes---or, for the latter two propositions, outer and inner approximations $\W_+^\text{sep}$ and $\W_-^\text{sep}$ thereof---in terms of Minkowski sums and intersections of linear subspaces and of the cone of positive semidefinite operators ${\cal P}$.
The set of causal witnesses is then precisely the dual cone of $\W^\text{sep}$, $\mathcal{S} = (\W^\text{sep})^*$~\cite{araujo15,branciard16a}.
A characterisation of $\mathcal{S}$ can, in general, be obtained from the description of $\W^\text{sep}$ by using the following duality relations for any two nonempty closed convex cones $\C_1$ and $\C_2$~\cite{rockafellar70}:
\begin{equation}\label{eq:duality_relations}
(\C_1 + \C_2)^* = \C_1^* \cap \C_2^*, \quad (\C_1 \cap \C_2)^* = \C_1^* + \C_2^*
\end{equation}
(where $\C_1 + \C_2 = \{c_1+c_2 \mid c_1 \in \C_1, c_2 \in \C_2\}$ is the Minkowski sum of the two cones $\C_1$ and $\C_2$; note that all the cones we shall consider will be nonempty, closed and convex).

Since these cones are convex, the construction of causal witnesses (or of explicit decompositions of causally separable process matrices) can be efficiently performed with SDP, as first described in Ref.~\cite{araujo15}; we will follow here the slightly different approach of~\cite{branciard16a}.
The question of whether a given $W$ is causally separable can be reformulated as the optimisation problem of how much white noise can be added to a process matrix before it becomes causally separable.
Let $\id^\circ=\id^{A_{IO}^\N}/\prod_{k \in \N} d_{A_I^k}$ be the ``white noise'' process matrix (which corresponds to each each party just receiving a fully mixed state $\id^{A_I^k}/d_{A_I^k}$, and is causally separable), and consider the noisy process matrix
\begin{equation}
	W(r)=\frac{1}{1+r}(W+r \id^\circ).
\end{equation}
Since the normalisation is irrelevant to membership of $\mathcal{W}^\text{sep}$, determining whether $W$ is causally separable can be thus phrased as the SDP optimisation problem
\begin{equation}\label{eq:primalSDP}
	\begin{split}
		& \qquad\quad \min\ r\\
		& \text{s.t.}\quad W+r\id^\circ\in\mathcal{W}^\text{sep},
	\end{split}
\end{equation}
which can be efficiently solved using standard software by writing $\mathcal{W}^\text{sep}$ in terms of SDP constraints (see~\cite{branciard16a}, the examples below and Appendix~\ref{app:witnesses_SDP} for further details).
The solution to this problem, $r^*$, gives the \emph{random robustness} $\max(r^*,0)$ of $W$, and a value $r^*>0$ implies that $W$ is causally nonseparable~\cite{araujo15,branciard16a}.

Eq.~\eqref{eq:primalSDP} is known as the primal problem, and is related to the dual problem
\begin{equation}\label{eq:dualSDP}
	\begin{split}
		& \qquad\quad \min\ \Tr[S\cdot W]\\
		& \text{s.t.}\quad S\in \mathcal{S}\quad\text{and}\quad\Tr[S\cdot \id^\circ]=1,
	\end{split}
\end{equation}
defined over the dual cone $\mathcal{S}$ of $\mathcal{W}^\text{sep}$~\cite{araujo15,branciard16a}.
The optimal solution $S^*$ is a witness of the causal nonseparability of $W$ whenever $\Tr[S^*\cdot W]<0$.
The Strong Duality Theorem for SDP problems moreover relates these two problems, stating that their solutions satisfy
\begin{equation}\label{eq:rSoptReln}
	r^* = -\Tr[S^*\cdot W].
\end{equation}
In Appendix~\ref{app:witnesses_SDP} we show that~\eqref{eq:dualSDP} is indeed the dual of~\eqref{eq:primalSDP} and that the Strong Duality Theorem is indeed applicable for arbitrary scenarios, as well as giving some further details.
This implies in particular that the witness $S^*$ thus obtained is optimal when $W$ is subject to white noise, in the sense that it witnesses the causal nonseparability of all noisy process matrices $W(r)$ with $r$ sufficiently small ($r < r^*$) so as for $W(r)$ to remain causally nonseparable.

For more than 3 parties, the witnesses in the set $\mathcal{S}_+ = (\W_+^\text{sep})^*$ obtained from the cone $\W_+^\text{sep} \supseteq \W^\text{sep}$ arising from the necessary condition of Proposition~\ref{prop:CS_necessary} are also valid witnesses of $\W^\text{sep}$ since $\mathcal{S}_+\subseteq \mathcal{S}$.
On the other hand, by solving the primal SDP problem over the cone $\W_-^\text{sep}$ arising from the sufficient condition in Proposition~\ref{prop:CS_sufficient}, one can show the causal separability of any $W\in \W_-^\text{sep}\subseteq W^\text{sep}$ (through the construction of an explicit causally separable decomposition for $W$ of the form given in Proposition~\ref{prop:CS_sufficient}).
Recalling the claim that such process matrices correspond precisely to quantum circuits with classical control of orders~\cite{wechs18}, the dual cone $\mathcal{S}_-$ is thus the set of ``witnesses for no classical control of causal order'' (which can thus be found by solving the dual SDP problem). 

In Appendix~\ref{app:witnesses_SDP} we give some concrete characterisations of the cones $\mathcal{W}^\sep$ and $\mathcal{S}$ for different scenarios in terms of SDP constraints, and which are relevant for the examples that we shall now give in the following section.

\subsection{Examples}
\label{subsec:examples}

In the bipartite scenario and restricted tripartite scenario in which $C$ has no outgoing system, several examples of causally nonseparable process matrices have previously been formulated and studied in detail~\cite{oreshkov12,araujo15,oreshkov16}.
The characterisations of the cones of causal witnesses that we give in Appendix~\ref{app:witness_examples} for these bipartite and restricted tripartite scenarios (see Eqs.~\eqref{eq:Scone_N2} and~\eqref{eq:Scone_N3_no_CO}) are equivalent to those given in Refs.~\cite{araujo15,branciard16a}, and can readily be used to verify the causal nonseparability of these examples, following the approach just outlined.

As mentioned already in Sec.~\ref{subsec:araujo_def}, the quantum switch is a particularly interesting example of a causally nonseparable process matrix in the second of these scenarios.
In that same scenario we have in fact also already looked at another explicit example: the process matrix $W^\text{act.}$~\eqref{eq:def_W_activ} introduced in Sec.~\ref{subsec:comparison} to show the ``activation of causal nonseparability'' under OG's definition of causal separability.
An explicit witness from the cone~\eqref{eq:Scone_N3_no_CO} is given in Appendix~\ref{app:witness_Wactiv}, Eq.~\eqref{eq:def_S_activ}, which could thus have been equally well found with the approach of Refs.~\cite{araujo15,branciard16a}.

\medskip

Another example of ``activation of causal nonseparability'' under OG's terminology was given in Ref.~\cite{oreshkov16} in the different tripartite case in which one party, say now $A$, has only a nontrivial outgoing system, and can thus always be seen as acting first.
A witness for this example can be found by solving the dual SDP problem~\eqref{eq:dualSDP} using now the cone of witnesses~\eqref{eq:Scone_N3_no_AI} corresponding to this restricted tripartite scenario.

\medskip

Of more novel interest is the fourpartite scenario, in which causal separability has not previously been characterised.
A particularly interesting and simple example here is a fourpartite version of the quantum switch, in which a party $A$(lice) has no incoming system ($d_{A_I}=1$) and always acts first, while another party $D$(orothy) has no outgoing system ($d_{D_O}=1$) and always acts last.
Let us describe more precisely this version of the quantum switch.

The switch is composed of two qubits: a control qubit and a target qubit.
Initially, Alice prepares the control qubit in some state of her choosing (in general as a function of her input $x$). (Note that it is here that the fourpartite switch described here differs from the tripartite one, where the control qubit is in a fixed superposition.)
The target qubit, initially prepared (externally to the 4 parties) in some state $\ket{\psi}$, is then sent to Bob and Charlie, who act in an order that depends on the state of the control qubit: if it is $\ket{0}$ then Bob acts before Charlie ($B\prec C$), while if it is $\ket{1}$ then Charlie acts before Bob ($C\prec B$).
If it is in a superposition, then Bob and Charlie can instead be seen to act in a superposition of different orders.
Finally, both qubits are sent to Dorothy who can perform a measurement on them (for simplicity, we will consider that $D$ simply ignores the target qubit and thus will trace it out, as this will not change the discussion that follows).%
\footnote{We note that the quantum switch was also described as a fourpartite process in Ref.~\cite{rubino17}, with one party acting first, and one acting last. 
However, in that reference the first party was controlling the target qubit, rather than the control qubit as we consider here. 
In that case (with the first party controlling the target qubit), the random robustness is increased to $2.767$. 
One could also have here a first party that controls both the target and control qubits (as in Ref.~\cite{oreshkov18}), which further increases the tolerable white noise to $4.686$; for simplicity we do not consider this possibility, as our goal here is just to illustrate the role of the control qubit.
\\
Note also that Rubino \emph{et al.}~\cite{rubino17} used yet another definition of causal nonseparability, different from the ones discussed in Sec.~\ref{sec:defs}, which did not allow for dynamical causal orders. 
As argued before and discussed in Refs.~\cite{oreshkov16,abbott16}, such a definition is however too restrictive to really characterise processes that are compatible with a well-defined causal order, as one would like the notion of causal separability to do. 
Nevertheless, it turns out that the witness constructed and experimentally tested in Ref.~\cite{rubino17} is not only a witness for fixed (nondynamical) causal orders, but also witnesses causal nonseparability as per our Definition~\ref{def:our_def-CS}.
}
Labelling the relevant incoming and outgoing systems (where the superscripts indicate control and target qubits) $A_O^c$, $B_I^t$, $B_O^t$, $C_I^t$, $C_O^t$, $D_I^t$, $D_I^c$, the process matrix for the quantum switch can be written~\cite{araujo15,oreshkov16,oreshkov18}
\begin{align}\label{eq:4partiteswitch}
& W^{\text{switch}} = \Tr_{D_I^t}\ketbra{w} \quad \text{with} \notag \\[1mm]
& \ket{w} = \ket{0}^{A_O^c} \ket{\psi}^{B_I^t} \dket{\id}^{B_O^tC_I^t} \dket{\id}^{C_O^tD_I^t} \ket{0}^{D_I^c} \notag \\
& \hspace{9mm}+ \ket{1}^{A_O^c} \ket{\psi}^{C_I^t} \dket{\id}^{C_O^tB_I^t} \dket{\id}^{B_O^tD_I^t} \ket{1}^{D_I^c},
\end{align}
where $\dket{\id} \coloneqq \ket{0}\!\ket{0} + \ket{1}\!\ket{1}$ is the pure CJ representation (in the computational basis $\{\ket{0}, \ket{1}\}$) of an identity qubit channel.
Note that, while Alice has control over the causal order of the other parties, this switch differs from a classical dynamical control of causal order in that she has coherent quantum control over the control qubit (and thus the causal orders).

In this particular restricted fourpartite scenario, our necessary and sufficient conditions for the causal separability of a process matrix $W$ coincide and reduce to the existence of a decomposition of the form $W = W_{(A,B,C,D)} + W_{(A,C,B,D)}$ with $W_{(A,B,C,D)}, W_{(A,C,B,D)} \ge 0$ (which need not be valid process matrices) satisfying ${}_{[1-B_O]C_{IO}D_{I}}W_{(A,B,C,D)} = {}_{[1-C_O]D_{I}}W_{(A,B,C,D)} = {}_{[1-C_O]B_{IO}D_{I}}W_{(A,C,B,D)} = {}_{[1-B_O]D_{I}}W_{(A,C,B,D)} = 0$ (and with ${}_{[1-A_O]B_{IO}C_{IO}D_{I}}W = 0$ to ensure, with the previous constraints, that $W$ is valid); see Proposition~\ref{prop:CS_charact_4_no_AI_no_DO} in Appendix~\ref{app:particular_4partite}.
These conditions thus characterise precisely the cone $\W^\sep$ in the scenario considered here, and the dual cone of causal witnesses $\mathcal{S}$ is then readily obtainable (see Eq.~\eqref{eq:Scone_N4_no_AI_no_DO} for the explicit characterisation).

The causal nonseparability of $W^\text{switch}$ can thus be verified by solving the dual SDP problem~\eqref{eq:dualSDP} and thereby obtaining a witness of its causal nonseparability.
Doing so, we find that (up to numerical precision) the random robustness of $W^\text{switch}$ of $2.343$ (note that this does not depend on the choice of initial state of the target qubit, so in solving the SDP problem numerically we can take, e.g., $\ket{\psi}=\ket{0}$).
In experimental efforts to measure a witness and verify the causal nonseparability of a process matrix, one may only have access to a restricted set of operations for the parties.
Many natural such constraints can also be imposed as SDP constraints, as described in Ref.~\cite{branciard16a}, allowing one to find implementable causal witnesses.
A particularly natural such constraint is to restrict $B$ and $C$'s operations to unitary operations (as in the experimental implementation of the tripartite switch in Refs.~\cite{procopio15,goswami18}); we find that the tolerable white noise on $W^\text{switch}$ to witness its causal nonseparability is reduced, under such a restriction, to $0.746$.

It is important to note that if we trace out the last party from $W^\text{switch}$ (i.e., $D_I^c$ in addition to $D_I^t$), we obtain
\begin{align}
\Tr_D W^{\text{switch}} &= \ketbra{0}^{A_O^c} \ketbra{\psi}^{B_I^t} \dketbra{\id}^{B_O^tC_I^t} \id^{C_O^t} \notag \\
& \ \ + \ketbra{1}^{A_O^c} \ketbra{\psi}^{C_I^t} \dketbra{\id}^{C_O^tB_I^t} \id^{B_O^t}, \label{eq:TrD_Wswicth}
\end{align}
which is causally separable since it is of the form of Eq.~\eqref{eq:decomp_CS_W_3} with just the first two terms being nonzero: $\Tr_D W^{\text{switch}} = W_{(A)} = W_{(A,B,C)} + W_{(A,C,B)}$, with $W_{(A,B,C)}$, $W_{(A,C,B)}$ (defined as the first and second terms in Eq.~\eqref{eq:TrD_Wswicth} above, respectively) and $W_{(A)}$ satisfying the constraints of Eqs.~\eqref{eq:decomp_CS_W_3_constrWX}--\eqref{eq:decomp_CS_W_3_constrWXYZ}.
This was also the case with the original tripartite version of the quantum switch (in which the control qubit is in the fixed state $\frac{1}{\sqrt{2}}(\ket{0}+\ket{1})$).
There, one is left with a simple probabilistic mixture of channels in two different directions after tracing out the last party~\cite{araujo15,oreshkov16}.
In contrast here, Eq.~\eqref{eq:TrD_Wswicth} is not compatible with any probabilistic mixture of fixed causal orders: indeed, $W_{(A,B,C)}$ and $W_{(A,C,B)}$ are not valid process matrices, as ${}_{[1-A_O]B_{IO}C_{IO}} W_{(A,B,C)} = - \, {}_{[1-A_O]B_{IO}C_{IO}} W_{(A,C,B)} \neq 0$ (these terms cancel in the sum $W_{(A,B,C)} + W_{(A,C,B)}$, so that ${}_{[1-A_O]B_{IO}C_{IO}} W_{(A)} = 0$ as required for $W_{(A)}$ to be a valid process matrix).
Rather, $\Tr_D W^{\text{switch}}$ is a ``classical switch'' in which $A$ can incoherently control the causal order between $B$ and $C$, which thus allows for dynamical causal orders.

\section{Discussion}

In this paper we studied the question of how to generalise the concept of causal (non)separability to the multipartite case. 
We reviewed several definitions that had been proposed for multipartite scenarios in previous works, namely the definition of causal separability introduced by Ara\'ujo \emph{et al.}~\cite{araujo15} for a particular tripartite situation, and Oreshkov and Giarmatzi's definitions of causal separability (CS) and extensible causal separability (ECS)~\cite{oreshkov16} for the general multipartite case. 
We established the equivalence between Ara\'ujo \emph{et al.}'s (restricted) definition of causal separability and Oreshkov and Giarmatzi's definition of ECS in the particular tripartite situation considered by Ara\'ujo \emph{et al.}, thus linking two \emph{a priori} different definitions for that case. 
Moreover, by showing that ECS and CS are different in that scenario, we found that the two definitions of causal separability proposed by Ara\'ujo \emph{et al.}~\cite{araujo15} and by Oreshkov and Giarmatzi~\cite{oreshkov16} were inconsistent, a problem that thus needed to be addressed.

We proposed a new general definition of $N$-partite causal nonseparability, similar in spirit to the recursive definitions that have been proposed for multipartite causal correlations~\cite{oreshkov16,abbott16}, and more consistent with the fact that the process matrix framework always allows for parties to share additional ancillary systems. 
Our definition thus avoids some unwanted features of the definition of CS in Ref.~\cite{oreshkov16}, such as the ``activation'' of causal nonseparability by shared entanglement. 
Moreover, we showed that our definition, although \emph{a priori} different, in fact reduces to the notion of ECS proposed in~\cite{oreshkov16}, which also reduces to the definition of Ara\'ujo \emph{et al.}~\cite{araujo15} in the particular restricted scenario considered there. 

\medskip

We then focused on characterising causally separable process matrices, giving (in the general multipartite case) two conditions---one necessary and one sufficient (Propositions~\ref{prop:CS_necessary} and~\ref{prop:CS_sufficient}, respectively)---for a given process matrix to be causally separable. 
These conditions allowed us to characterise the corresponding sets of process matrices through SDP constraints, and to generalise the tool of witnesses for causal nonseparability to the multipartite case.
In the bipartite and tripartite cases, our necessary and sufficient conditions coincide and reduce to those previously described~\cite{oreshkov12,araujo15,oreshkov16}.
The principal open question raised by this work is whether this also holds in the general $N$-partite case with $N \ge 4$, or whether one of the two is both necessary and sufficient (or if one could derive yet another distinct condition, that would we both necessary and sufficient).

As we show elsewhere, our sufficient condition characterises precisely the processes that can be realised as a quantum circuit with classical control of causal order~\cite{wechs18}. 
If that condition is in fact also necessary, this would thus confirm the conjecture of Oreshkov and Giarmatzi, that causally separable process matrices (or ``extensibly causally separable processes'' using their terminology) are those realisable by such ``classically controlled quantum circuits''~\cite{oreshkov16}. 
This would provide more solid founding for our understanding of the notion of causal separability, which would then indeed correspond to our intuition (quantum circuits with possibly dynamical causal orders that are classically controlled). 
Furthermore, the proof in Ref.~\cite{wechs18} would also provide a general explicit construction to realise any given causally separable process matrix in practice.

However, the forms of our necessary and sufficient condition, and the fact that the proof for the necessity of the conditions in the tripartite case does not generalise straightforwardly to more parties, indeed leave open the possibility that our sufficient condition may turn out to not be necessary. 
If this is the case, it would mean that there exist causally separable process matrices that are not realisable as classically controlled quantum circuits---and which we would not currently know how to realise experimentally. 
It would certainly be interesting to understand what kind of situations such process matrices correspond to---and if (and how) they can be realised quantum mechanically. 
This question is reminiscent of the open problem of whether process matrices that allow for the violation of causal inequalities are realisable with ``standard'' quantum mechanics. 
Here the question would concern even less extreme situations: causally separable process matrices.

\medskip

Another question that arises naturally in the multipartite case is whether a given phenomenon is \emph{genuinely} multipartite, in the sense that its occurrence truly requires the coordinated action of a certain number of parties. 
It would be important for our understanding of multipartite process matrices to define a notion of ``genuinely multipartite causal nonseparability'', similar to the concept of ``genuinely multipartite noncausality'' for correlations~\cite{abbott17} and analogous to the notions of genuinely multipartite entanglement~\cite{Acin01} and nonlocality~\cite{svetlichny87,gallego12,bancal13}. 
It would then also be interesting to study whether the definition can be refined to give a hierarchy of degrees of causal nonseparability, similar to the approach in Ref.~\cite{abbott17} for correlations, and whether the characterisation of the corresponding process matrices and the construction of ``witnesses of  genuinely multipartite causal nonseparability'' are still possible with SDP techniques.
These questions are left for further research.

\medskip

Finally, our clarification of the definition of causal separability in the $N$-partite scenario, as well as characterisations of causally separable process matrices, should be helpful in the study of causal nonseparability as a computational and information processing resource~\cite{chiribella13,chiribella12,colnaghi12,araujo14,facchini15,feix15,guerin16,ebler18,salek18,chiribella18}.
Indeed, since multipartite scenarios offer significantly richer structure, understanding these scenarios is a prerequisite to grasping the full possibilities of causal indefiniteness as a resource, and an important step towards developing a resource theory of noncausality~\cite{Jia18,Guerin18}.

\subsection*{Acknowledgements}

We acknowledge fruitful discussions with F.\ Costa, C.\ Giarmatzi and O.\ Oreshkov, and financial support from the `Retour Post-Doctorants' (ANR-13-PDOC-0026) and the `Investissements d'avenir' (ANR-15-IDEX-02) programs of the French National Research Agency.

\appendix

\section{Characterisation of process matrices}
\label{app:characterisation_Ws}

In this first appendix we show how the valid process matrices, as well as those compatible with a given causal order, can be characterised.
We then discuss some properties of process matrices, and alternative characterisations.

Recall that in the construction of the process matrix framework (as in Sec.~\ref{subsec:W_formalism_N_2}), the Choi-Jamio\l{}kowski (CJ) isomorphism~\cite{jamiolkowski72,choi75} is used to represent the parties' operations. Different versions of the CJ isomorphism exist; following Ref.~\cite{oreshkov12}, one may for instance define the CJ representation of a linear map $\M: A_I \to A_O$ as
\begin{align}
M \coloneqq & \left[{\cal I} \otimes \M (\dketbra{\id})\right]^\text{T} \notag \\[-1mm]
= & \Big[\sum_{m,m'} \oprod{m}{m'} \otimes \M(\oprod{m}{m'})\Big]^\text{T} \in A_I \otimes A_O,
\end{align}
where ${\cal I}$ is the identity channel, $\{\ket{m}\}_m$ is a fixed orthonormal basis of $\HS^{A_I}$, $\dket{\id} \coloneqq \sum_m \ket{m} \otimes \ket{m}$, and $\text{T}$ denotes transposition in the chosen basis $\{\ket{m}\}_m$ of $\HS^{A_I}$ and some fixed basis of $\HS^{A_O}$.
From its CJ representation $M$ it is easy to recover the map $\M$, using $\M (\rho) = \big[\Tr_{A_I} [M \cdot \rho \otimes \id^{A_O}]\big]^\text{T}$; see for instance Appendix~A in Ref.~\cite{araujo15} for more details.
Referring to this isomorphism, in the following we always identify linear maps with their CJ representation; recall in particular that a linear map is completely positive if and only if its CJ representation is positive semidefinite, and it is trace preserving if and only if its CJ representation satisfies $\Tr_{A_O} M = \id^{A_I}$.

\subsection{Valid process matrices}
\label{app:charact_valid_Ws}

A given matrix $W \in A_{IO}^\N$ defines a valid $N$-partite process matrix if and only if it generates nonnegative and normalised probabilities $P(\vec a | \vec x)$ through the generalised Born rule of Eq.~\eqref{eq:born_rule_Npartite}---including in the case where an ancillary quantum state $\rho$ in some extension $A_{I'}^\N$ of the parties' incoming spaces is attached to $W$ (and thus shared among the $N$ parties), and the parties' operations are allowed to act on their joint incoming systems $A_{II'}^k \coloneqq A_I^k \otimes A_{I'}^k$.

The constraint that the probabilities in Eq.~\eqref{eq:born_rule_Npartite} are nonnegative for any set of CP maps---i.e., any positive semidefinite matrices $M_{a_k|x_k}^{A_{IO}^k}$---and that this remains the case when attaching any ancillary quantum state $\rho \in A_{I'}^\N$ (and for any $M_{a_k|x_k}^{A_{II'O}^k} \ge 0$), translates into the constraint that $W$ must be positive semidefinite~\cite{oreshkov12,araujo15}.

As for normalisation, the constraint is that Eq.~\eqref{eq:born_rule_Npartite} must give a total probability equal to $1$ for any set of CPTP maps---i.e., any positive semidefinite matrices $M_{x_k}^{A_{IO}^k}$ satisfying ${}_{A_O^k}M_{x_k}^{A_{IO}^k} = \frac{\id}{d_{A_O^k}}$, using the ``trace-out-and-replace'' notation ${}_X\cdot$ defined in Eq.~\eqref{eq:trade_and_pad}. It is easy to see that the constraint of positive semidefiniteness does not play any role here; furthermore, note that for any matrix $M \in A_{IO}$, $M' \coloneqq \frac{\id}{d_{A_O}} + {}_{[1-A_O]}M$ satisfies ${}_{A_O}M' = \frac{\id}{d_{A_O}}$ and that any $M \in A_{IO}$ satisfying ${}_{A_O}M = \frac{\id}{d_{A_O}}$ is of the form $M = \frac{\id}{d_{A_O}} + {}_{[1-A_O]}M$. The normalisation constraint thus translates into the constraint that
\begin{equation}
\Tr\big[({\textstyle \frac{\id}{d_{A_O^1}}} + {}_{[1-A_O^1]}M_1) \otimes \cdots \otimes ({\textstyle \frac{\id}{d_{A_O^N}}} + {}_{[1-A_O^N]}M_N) \cdot W\big] = 1 \label{eq:constr_valid_W}
\end{equation}
for any set of matrices $M_1 \in A_{IO}^1, \ \ldots, \ M_N \in A_{IO}^N$.
Expanding this constraint, and using the fact that the map ${}_{[1-A_O^k]}\cdot$ is self adjoint with respect to the trace (Hilbert–Schmidt) inner product (i.e., $\Tr[{}_{[1-A_O^k]} M \cdot W] = \Tr[M \cdot {}_{[1-A_O^k]} W]$), one finds that this is equivalent to
\begin{align}
& \Tr W = {\textstyle \prod_{k \in \N}} \, d_{A_O^k} \qquad \textrm{and} \label{eq:constr_norm_W} \\
& \forall \ \X \subseteq \N, \X \neq \emptyset, \ {}_{\prod_{i \in \X}[1-A_O^i]}\Tr_{\N \backslash \X}W = 0. \label{eq:constr_LN_v0}
\end{align}
Note that for simplicity we did not explicitly attach an ancillary state $\rho$ to $W$ here; doing so would have led to the same conclusion. Full details for this whole argument can be found in Appendix~B of Ref.~\cite{araujo15}.

We shall in general ignore the normalisation constraint~\eqref{eq:constr_norm_W} when talking about valid process matrices. The $2^N-1$ linear constraints of Eq.~\eqref{eq:constr_LN_v0} define a linear subspace of $A_{IO}^\N$, which we denote by $\L^\N$, the subspace of valid process matrices: explicitly (noting that the constraints ${}_{\prod_{i \in \X}[1-A_O^i]}\Tr_{\N \backslash \X}W = 0$ are equivalent to ${}_{\prod_{i \in \X}[1-A_O^i]A_{IO}^{\N \backslash \X}}W = 0$),
\begin{align}
& W \in \L^\N \notag \\
& \Leftrightarrow \, \forall \ \X \subseteq \N, \X \neq \emptyset, \ {}_{\prod_{i \in \X}[1-A_O^i]A_{IO}^{\N \backslash \X}}W = 0, \label{eq:constr_LN_app}
\end{align}
as in Eq.~\eqref{eq:constr_LN} of the main text.
It is furthermore straightforward to check that this is equivalent to the following recursive characterisation, as in Eq.~\eqref{eq:constr_LN_rec}:
\begin{align}
& W \in \L^\N \Leftrightarrow \, \forall \ \X \subsetneq \N, \X \neq \emptyset, \ \Tr_{\N \backslash \X}  W \in \L^\X \notag \\
& \hspace{20mm} \text{ and } \ {}_{ \prod_{i \in \N}[1-A_O^i]} W = 0 \,. \label{eq:constr_LN_rec_app}
\end{align}

Summing up, the set of (nonnormalised) valid process matrices is the convex cone $\W = {\cal P} \cap \L^\N$, where ${\cal P}$ is the cone of positive semidefinite matrices.

\subsection{Compatibility with fixed causal orders}
\label{app:fixed_order}

Let us now analyse the constraints imposed on process matrices by requiring that they are compatible with a given causal order.

\subsubsection{Causal order between two subsets of parties}

Consider two nonempty disjoint subsets of parties $\K_1, \K_2 \subsetneq \N$. We say that the correlation $P(\vec a | \vec x)$ is compatible with the causal order $\K_1 \prec \K_2$ if and only if there is no signalling from the parties in $\K_2$ to the parties in $\K_1$---i.e., the marginal probability distribution for the outputs of parties in $\K_1$ does not depend on the inputs of parties in $\K_2$: $P(\vec a_{\K_1} | \vec x) = P(\vec a_{\K_1} | \vec x_{\N \backslash \K_2})$ for all $\vec x, \vec a_{\K_1}$.
We then say that a (valid) process matrix $W$ is compatible with the causal order $\K_1 \prec \K_2$ if and only if it only generates correlations (through the generalised Born rule~\eqref{eq:born_rule_Npartite}, possibly allowing for extensions of $W$ with some ancillary state $\rho$) that are compatible with $\K_1 \prec \K_2$.

Formally, this means (ignoring again for simplicity the possibility of attaching an ancillary state $\rho$; as before, the same reasoning also goes through if we allow for this possibility) that whatever the CP maps $M_{a_{k_1}|x_{k_1}}^{A_{IO}^{k_1}}$ applied by the parties in $\K_1$ and whatever the CPTP maps $M_{x_{k_2}}^{A_{IO}^{k_2}}$ and $M_{x_{k_3}}^{A_{IO}^{k_3}}$ (such that ${}_{A_O^{k_{2/3}}}M_{x_{k_{2/3}}}^{A_{IO}^{k_{2/3}}} = \frac{\id}{d_{A_O^{k_{2/3}}}}$) applied by the parties in $\K_2$ and in $\overline{\K_{12}} \coloneqq \N \backslash (\K_1 \cup \K_2)$ (which may be empty), respectively, one must have
\begin{align}
& \Tr[ \bigotimes_{k_1 \in \K_1} M_{a_{k_1}|x_{k_1}}^{A_{IO}^{k_1}} \bigotimes_{k_2 \in \K_2} M_{x_{k_2}}^{A_{IO}^{k_2}} \bigotimes_{k_3 \in \overline{\K_{12}}} M_{x_{k_3}}^{A_{IO}^{k_3}} \cdot W] \notag \\[1mm]
& = \Tr[ \bigotimes_{k_1 \in \K_1} M_{a_{k_1}|x_{k_1}}^{A_{IO}^{k_1}} \bigotimes_{k_2 \in \K_2} \frac{\id}{d_{A_O^{k_2}}} \bigotimes_{k_3 \in \overline{\K_{12}}} M_{x_{k_3}}^{A_{IO}^{k_3}} \cdot W]
\end{align}
(i.e., the probabilities should be the same if the parties in $\K_2$ apply the CPTP maps $\frac{\id}{d_{A_O^{k_2}}}$ instead of $M_{x_{k_2}}^{A_{IO}^{k_2}}$).
As in the previous subsection, the constraint of positive semidefiniteness of the CJ matrices $M_k$ does not play any role here, and we can equivalently write the above constraint as
\begin{align}
& \Tr[ \bigotimes_{k_1 \in \K_1} M_{k_1} \bigotimes_{k_2 \in \K_2} ({\textstyle \frac{\id}{d_{A_O^{k_2}}}} + {}_{[1-A_O^{k_2}]}M_{k_2}) \notag \\[-2mm]
& \hspace{35mm} \bigotimes_{k_3 \in \overline{\K_{12}}} ({\textstyle \frac{\id}{d_{A_O^{k_3}}}} + {}_{[1-A_O^{k_3}]}M_{k_3}) \cdot W] \notag \\[1mm]
& = \Tr[ \bigotimes_{k_1 \in \K_1} \!\! M_{k_1} \!\bigotimes_{k_2 \in \K_2} \!{\textstyle \frac{\id}{d_{A_O^{k_2}}}} \!\bigotimes_{k_3 \in \overline{\K_{12}}} \!({\textstyle \frac{\id}{d_{A_O^{k_3}}}} + {}_{[1-A_O^{k_3}]}M_{k_3}) \cdot W]
\end{align}
for any matrices $M_k \in A_{IO}^k$.
Expanding this constraint in a similar way as above (or as it was done in more details in Appendix~B of Ref.~\cite{araujo15}), we find that it is equivalent to the following $2^{N-|\K_1|}-2^{N-|\K_1|-|\K_2|}$ linear constraints:
\begin{align}
& \forall \ \X_2 \subseteq \N \backslash \K_1, \X_2 \cap \K_2 \neq \emptyset, \notag \\
& \qquad {}_{\prod_{i \in \X_2}[1-A_O^i]A_{IO}^{\N \backslash \K_1 \backslash \X_2}}W = 0. \label{eq:constr_W_K1_K2_K12}
\end{align}
Combining these conditions with those from Eq.~\eqref{eq:constr_LN_app} to ensure that $W$ is a valid process matrix, and removing redundant constraints,%
\footnote{One can easily see that the constraints from Eq.~\eqref{eq:constr_LN_app} for which $\X \cap \K_2 \neq \emptyset$ are already implied by those of Eq.~\eqref{eq:constr_W_K1_K2_K12}: indeed, defining $\X_1 \coloneqq \X \cap \K_1$ and $\X_2 \coloneqq \X \cap (\N \backslash \K_1) \subseteq \N \backslash \K_1$, in such a case one has $\X_2 \cap \K_2 \neq \emptyset$ and ${}_{\prod_{i \in \X}[1-A_O^i]A_{IO}^{\N \backslash \X}}W = {}_{\prod_{i \in \X_1}[1-A_O^i]A_{IO}^{\K_1 \backslash \X_1}}\big(\,{}_{\prod_{i \in \X_2}[1-A_O^i]A_{IO}^{\N \backslash \K_1 \backslash \X_2}}W\big) = 0$ according to Eq.~\eqref{eq:constr_W_K1_K2_K12}.
Only the constraints from Eq.~\eqref{eq:constr_LN_app} for which $\X \cap \K_2 = \emptyset$, i.e., $\X \subseteq \N \backslash \K_2$ as in the second line of Eq.~\eqref{eq:constr_LK1K2_K12} (where $\X$ was renamed $\X_1$), are nonredundant. \label{footnote:remove_ctsr}}
one can then characterise the subspace $\L^{\K_1 \prec \K_2}$ of (valid) process matrices compatible with the causal order $\K_1 \prec \K_2$ through the following $2^{N-|\K_2|}-1+2^{N-|\K_1|}-2^{N-|\K_1|-|\K_2|}$ constraints:
\begin{align}
& W \in \L^{\K_1 \prec \K_2} \notag \\[1mm]
& \Leftrightarrow \ \forall \ \X_1 \subseteq \N \backslash \K_2, \X_1 \neq \emptyset, \ {}_{\prod_{i \in \X_1}[1-A_O^i]A_{IO}^{\N \backslash \X_1}}W = 0 \notag \\
& \qquad \text{and} \ \forall \ \X_2 \subseteq \N \backslash \K_1, \X_2 \cap \K_2 \neq \emptyset, \notag \\[-1mm]
& \hspace{32mm} {}_{\prod_{i \in \X_2}[1-A_O^i]A_{IO}^{\N \backslash \K_1 \backslash \X_2}}W = 0. \label{eq:constr_LK1K2_K12}
\end{align}

\medskip

Let us assume now that $\K_1$ and $\K_2$ cover the full set $\N$, so that $\K_1 \cup \K_2 = \N$. The characterisation above then simplifies to the following $2^{|\K_1|}+2^{|\K_2|}-2$ constraints:
\begin{align}
& \hspace{-3mm} [\text{For }\K_1 \cup \K_2 = \N:] \notag \\
& W \in \L^{\K_1 \prec \K_2} \notag \\[1mm]
& \Leftrightarrow \ \forall \, \X_1 \subseteq \K_1, \X_1 \neq \emptyset, {}_{\prod_{i \in \X_1}[1-A_O^i]A_{IO}^{\K_1 \backslash \X_1}A_{IO}^{\K_2}}W = 0 \notag \\
& \quad \ \ \text{and} \ \forall \, \X_2 \subseteq \K_2, \X_2 \neq \emptyset, {}_{\prod_{i \in \X_2}[1-A_O^i]A_{IO}^{\K_2 \backslash \X_2}}W = 0. \label{eq:constr_LK1K2}
\end{align}
Comparing these constraints with Eq.~\eqref{eq:constr_LN_app}, one can see that the third line of Eq.~\eqref{eq:constr_LK1K2} is equivalent to imposing that the reduced process $\Tr_{\K_2} W$ is in $\L^{\K_1}$, the subspace of valid $|\K_1|$-partite process matrices for parties in $\K_1$; while the fourth line is equivalent (using the fact that $W=0$ if and only if $\Tr_{\K_1}[M_{\K_1} \otimes \id^{\K_2} \cdot W] = 0$ for all $M_{\K_1}$) to imposing that whatever CP maps $M_{\K_1} \in A_{IO}^{\K_1}$ applied by the parties in $\K_1$, the conditional matrix $W_{|M_{\K_1}} \coloneqq \Tr_{\K_1} [M_{\K_1} \otimes \id^{\K_2} \cdot W]$ must be in the subspace $\L^{\K_2}$ of valid process matrices for the parties in $\K_2$ (note that $M_{\K_1}$ may or may not be of a product form $\bigotimes_{k_1 \in \K_1} M_{k_1}$ here, and that its complete positiveness is in fact irrelevant).%
\footnote{Although the constraints in the fourth line of Eq.~\eqref{eq:constr_LK1K2} are written exactly as those that would define $\L^{\K_2}$, we emphasise that they apply here to some matrix $W \in A_{IO}^\N$, rather than to $W \in A_{IO}^{\K_2}$ as in the definition of $\L^{\K_2}$. This is why they must of course not directly be interpreted as implying that $W \in \L^{\K_2}$, but $W_{|M_{\K_1}} \in \L^{\K_2}$ for all $M_{\K_1}$, as in Eq.~\eqref{eq:constr_LK1K2_v2}.}
We thus equivalently have the following characterisation:
\begin{align}
& \hspace{-2mm} [\text{For }\K_1 \cup \K_2 = \N:] \notag \\
& W \in \L^{\K_1 \prec \K_2} \, \Leftrightarrow \, \Tr_{\K_2}W \in \L^{\K_1} \notag \\ & \hspace{25mm} \text{and} \ \forall \, M_{\K_1} \!\in\! A_{IO}^{\K_1}, \, W_{|M_{\K_1}} \!\in\! \L^{\K_2}. \label{eq:constr_LK1K2_v2}
\end{align}
These constraints are indeed quite intuitive: they simply correspond to the fact that for a process matrix correlation $P(\vec a|\vec x)$ to be compatible with the causal order $\K_1 \prec \K_2$ (with $\K_1 \cup \K_2 = \N$), the probability distributions $P(\vec a_{\K_1}|\vec x_{\K_1})$ and $P(\vec a_{\K_2}|\vec x,\vec a_{\K_1}) \coloneqq P(\vec a|\vec x)/P(\vec a_{\K_1}|\vec x_{\K_1})$ can be calculated from $\Tr_{\K_2}W$ and $W_{|M_{\K_1}}$, and must be well-defined.

In particular, for $\K_1 = \{A_k\}$ (a singleton of just one party coming first) and $\K_2 = \N \backslash A_k$, Eq.~\eqref{eq:constr_LK1K2} becomes
\begin{align}
& W \in \L^{A_k \prec (\N \backslash A_k)} \notag \\[1mm]
& \Leftrightarrow \ {}_{[1-A_O^k]A_{IO}^{\N \backslash k}} W = 0 \quad \text{and} \notag \\
& \qquad \forall \ \X \subseteq \N \backslash k, \X \neq 0, {}_{\prod_{i \in \X}[1-A_O^i]A_{IO}^{\N \backslash k \backslash \X} } W = 0, \label{eq:constr_L_k_first_app}
\end{align}
as in Eq.~\eqref{eq:constr_L_k_first} of the main text.
For $\K_1 = \N \backslash A_k$ and $\K_2 = \{A_k\}$ (a singleton of just one party coming last), Eq.~\eqref{eq:constr_LK1K2} becomes
\begin{align}
& W \in \L^{(\N \backslash A_k) \prec A_k} \notag \\[1mm]
& \Leftrightarrow \ \forall \, \X \subseteq \N \backslash k, \X \neq \emptyset, {}_{\prod_{i \in \X}[1-A_O^i]A_{IO}^{\N \backslash k \backslash \X}A_{IO}^k}W = 0 \notag \\
& \qquad \text{and} \ {}_{[1-A_O^{k}]}W = 0. \label{eq:constr_L_Alast}
\end{align}

\subsubsection{Causal order between several subsets of parties}

Consider now $K$ disjoint subsets $\K_i$ of $\N$.
Generalising the idea of causal order between two subsets of parties, we say that the correlation $P(\vec a|\vec x)$ is compatible with the causal order $\K_1 \prec \K_2 \prec \cdots \prec \K_K$ if and only if there is no signalling from ``future parties'' to ``past parties''---i.e., if for any $k = 1, \ldots, K-1$, the outputs of parties in $\K_{(\le k)} \coloneqq \bigcup_{i=1}^k \K_i$ do not depend on the inputs of the parties in $\K_{(> k)} \coloneqq \bigcup_{j=k+1}^K \K_j$: $P(\vec a_{\K_{(\le k)}} | \vec x) = P(\vec a_{\K_{(\le k)}} | \vec x_{\N \backslash \K_{(> k)}})$ for all $\vec x, \vec a_{\K_{(\le k)}}$, or equivalently, the correlation is compatible with the causal order $\K_{(\le k)} \prec \K_{(> k)}$ for all $k = 1, \ldots, K-1$.

As before, we then say that a process matrix $W$ is compatible with the causal order $\K_1 \prec \cdots \prec \K_K$ if and only if it only generates correlations that are compatible with that order. Similarly to $\L^{\K_1 \prec \K_2}$, we define the subspace
\begin{equation}
\L^{\K_1 \prec \cdots \prec \K_K} \coloneqq \bigcap_{k=1}^{K-1} \L^{\K_{(\le k)} \prec \K_{(> k)}}
\end{equation}
of (valid) process matrices compatible with the causal order $\K_1 \prec \cdots \prec \K_K$.

In the case where the subsets $\K_i$ define a full partition of $\N$ (i.e., where $\bigcup_{i=1}^K \K_i = \N$), we easily obtain, from Eqs.~\eqref{eq:constr_LK1K2} and~\eqref{eq:constr_LK1K2_v2} (after removing redundant constraints as in Footnote~\ref{footnote:remove_ctsr}), that
\begin{align}
& \hspace{-3mm} [\text{For }{\textstyle \bigcup_{i=1}^K} \K_i = \N:] \notag \\
& W \in \L^{\K_1 \prec \cdots \prec \K_K} \notag \\
& \Leftrightarrow \ \forall \, k = 1,\ldots,K, \ \forall \, M_{\K_{(< k)}} \in A_{IO}^{\K_{(< k)}}, \notag \\
& \hspace{30mm} \Tr_{\K_{(>k)}}W_{|M_{\K_{(< k)}}} \in \L^{\K_k} \notag \\[1mm]
& \Leftrightarrow \ \forall \, k = 1,\ldots,K, \ \forall \, \X_k \subseteq \K_k, \X_k \neq \emptyset, \notag \\
& \hspace{20mm} {}_{\prod_{i \in \X_k}[1-A_O^i]A_{IO}^{\K_k \backslash \X_k}A_{IO}^{\K_{(>k)}}}W = 0 \label{eq:constr_LK1_KN}
\end{align}
with $\K_{(< k)} \coloneqq \bigcup_{i=1}^{k-1} \K_i$ for $k = 2, \ldots, K$, $\K_{(< 1)} = \K_{(>K)} = \emptyset$, $W_{|M_{\K_{(< 1)}}} = W$.

In particular, in the case where all subsets $\K_k$ are singletons---i.e., $\K_k = \{A_{\pi(k)}\}$ for some permutation $\pi$ of $\N$---we find that the subspace $\L^{A_{\pi(1)} \prec \cdots \prec A_{\pi(N)}}$ of process matrices compatible with the causal order $A_{\pi(1)} \prec A_{\pi(2)} \prec \cdots \prec A_{\pi(N)}$ is characterised by~\cite{gutoski06,chiribella09,araujo15}
\begin{align}
& W \in \L^{A_{\pi(1)} \prec \cdots \prec A_{\pi(N)}} \notag \\
& \Leftrightarrow \ \forall \, k = 1,\ldots,N, \ {}_{[1-A_O^{{\pi(k)}}]A_{IO}^{\pi(>k)}}W = 0 \label{eq:constr_fixed_order_N}
\end{align}
with $\pi(>\!k) = \{\pi(k{+}1), \ldots, \pi(N)\}$ for $k = 1,\ldots,N{-}1$, $\pi(>\!N) = \emptyset$ (as in Eq.~\eqref{eq:constr_order_A1__AN} when $\pi$ is the identity permutation).

\subsubsection{Particular cases with $d_{A_I^f} = 1$ or $d_{A_O^\ell} = 1$}
\label{app:fixed_order_part_cases}

Suppose there exists a party $A_f$ which has a trivial incoming space, i.e., such that $d_{A_I^f} = 1$.
The constraints of Eq.~\eqref{eq:constr_LN_app} can be written, depending on whether $A_f \in \X$ or $A_f \notin \X$ and renaming $\X \backslash A_f \to \X$ in the former case, in the forms ${}_{[1-A_O^f]\prod_{i \in \X}[1-A_O^i]A_{IO}^{\N \backslash A_f \backslash \X}}W = 0$ for all $\X \subseteq \N \backslash A_f$ and ${}_{A_{O}^{A_f}\prod_{i \in \X}[1-A_O^i]A_{IO}^{\N \backslash A_f \backslash \X}}W = 0$ for all $\X \subseteq \N \backslash A_f, \X \neq \emptyset$, respectively.
Summing up these two constraints in the case where $\X \neq \emptyset$ (and keeping the first one for the case where $\X = \emptyset$), we find that $\L^\N$ is characterised by the same constraints as those characterising $\L^{A_f \prec (\N \backslash A_f)}$ in Eq.~\eqref{eq:constr_L_k_first_app}, namely
\begin{align}
& \hspace{-3mm} [\text{For }d_{A_I^f} = 1:] \notag \\
& W \in \L^\N \Leftrightarrow W \in \L^{A_f \prec (\N \backslash A_f)} \notag \\[1mm]
& \Leftrightarrow \, {}_{[1-A_O^f]A_{IO}^{\N \backslash A_f}}W = 0 \ \ \text{and} \notag \\
& \hspace{5mm} \forall \, \X \subseteq \N \backslash A_f, \X \neq \emptyset, \ {}_{\prod_{i \in \X}[1-A_O^i]A_{IO}^{\N \backslash A_f \backslash \X}}W = 0. \label{eq:LN_dAIf_1}
\end{align}
Hence, in that case any valid process matrix is compatible with party $A_f$ acting first. This corresponds indeed to the natural intuition that, because party $A_f$ does not receive any physical system from anyone, they do not need to wait for any other party to act before them.

In the case where several parties in $\A_f \coloneqq \{A_{f_1}, \ldots, A_{f_n}\}$ have trivial incoming spaces (such that $d_{A_I^{f_j}} = 1$ for all $j = 1, \ldots, n$), Eq.~\eqref{eq:LN_dAIf_1} easily generalises to%
\footnote{We use here, in particular, the fact that ${}_{[1-A_O^{f_j}]A_{IO}^{\N \backslash \A_f}}W = 0\ \forall \, j$ is equivalent to ${}_{[1-A_O^{\A_f}]A_{IO}^{\N \backslash \A_f}}W = 0$. Note that if all incoming spaces are trivial, i.e., $\A_f = \N$, then ${}_{[1-A_O^\N]}W = 0$ implies that the only valid process matrices are those proportional to the identity operator $\id^{A_O^\N}$. \label{footnote:all_trivial_AI}}
\begin{align}
& \hspace{-1mm} [\text{For }d_{A_I^{f_j}} = 1, \, \forall \, A_{f_j} \in \A_f:] \notag \\
& W \in \L^\N \Leftrightarrow W \in \L^{A_{f_j} \prec (\N \backslash A_{f_j})} \ \forall \, j \notag \\[1mm]
& \Leftrightarrow \, {}_{[1-A_O^{\A_f}]A_{IO}^{\N \backslash \A_f}}W = 0 \quad \text{and} \notag \\
& \hspace{5mm} \forall \, \X \subseteq \N \backslash \A_f, \X \neq \emptyset, \ {}_{\prod_{i \in \X}[1-A_O^i]A_{IO}^{\N \backslash \A_f \backslash \X}}W = 0. \label{eq:LN_dAIf_1_gen}
\end{align}

\medskip

Instead of $d_{A_I^f} = 1$, suppose now that there exists a party $A_\ell$ which has a trivial outgoing space, i.e., such that $d_{A_O^\ell} = 1$, and consider the causal order $(\N \backslash A_\ell) \prec A_\ell$. Note that in this case the map $W \to {}_{A_O^\ell}W$ reduces to the identity, so that any constraint of the form ${}_{[1-A_O^\ell]\ldots}W = 0$ is trivially satisfied. The nontrivial constraints from Eqs.~\eqref{eq:constr_LN_app} and~\eqref{eq:constr_L_Alast} then reduce to the same set of constraints, namely,
\begin{align}
& [\text{For }d_{A_O^\ell} = 1:] \notag \\
& \ W \in \L^\N \Leftrightarrow W \in \L^{(\N \backslash A_\ell) \prec A_\ell} \notag \\[1mm]
& \, \Leftrightarrow \, \forall \ \X \subseteq \N \backslash A_\ell, \X \neq \emptyset, \ {}_{\prod_{i \in \X}[1-A_O^i]A_{IO}^{\N \backslash A_\ell \backslash \X}A_I^\ell}W = 0.
\end{align}
Hence, similarly to the previous case, here any valid process matrix is compatible with party $A_\ell$ acting last. This is again rather intuitive: as party $A_\ell$ sends no physical system out and cannot signal to anyone, then they can always come last---see, e.g., the motivation for only considering fixed orders with party $C$ coming last in Ara\'ujo \emph{et al.}'s definition of causal separability~\cite{araujo15}.

If now several parties in $\A_\ell \coloneqq \{A_{\ell_1}, \ldots, A_{\ell_n}\}$ have trivial outgoing spaces (such that $d_{A_O^{\ell_j}} = 1$ for all $j = 1, \ldots, n$), then one can easily check that any process matrix is compatible with all those parties acting last, with any causal order among them: for any permutation $\pi$ of $\{1,\ldots,n\}$,
\begin{align}
& [\text{For }d_{A_O^{\ell_j}} = 1, \, \forall \, A_{\ell_j} \in \A_\ell:] \notag \\
& \ W \in \L^\N \Leftrightarrow W \in \L^{(\N \backslash \A_\ell) \prec A_{\ell_{\pi(1)}} \prec \cdots \prec A_{\ell_{\pi(n)}}} \notag \\[1mm]
& \, \Leftrightarrow \, \forall \ \X \subseteq \N \backslash \A_\ell, \X \neq \emptyset, \, {}_{\prod_{i \in \X}[1-A_O^i]A_{IO}^{\N \backslash \A_\ell \backslash \X}A_I^{\A_\ell}}W = 0. \label{eq:LN_dAOl_1_gen}
\end{align}
It is worth emphasising that no similar property holds for several parties in $\A_f = \{A_{f_1}, \ldots, A_{f_n}\}$ having trivial incoming spaces, as considered previously: any process matrix is compatible in that case with any causal order $A_{f_j} \prec (\N \backslash A_{f_j})$ (as in Eq.~\eqref{eq:LN_dAIf_1_gen}), but not necessarily with $A_{f_1} \prec \cdots \prec A_{f_n} \prec (\N \backslash \A_f)$ (or with any other permutation of the first parties).%
\footnote{Note indeed, in a similar fashion, that while compatibility of a probability distribution $P$ with the orders $(\K_1 \cup \K_2) \prec \K_3$ and $(\K_1 \cup \K_3) \prec \K_2$ implies compatibility with $\K_1 \prec (\K_2 \cup \K_3)$, and therefore with both $\K_1 \prec \K_2 \prec \K_3$ and $\K_1 \prec \K_3 \prec \K_2$, it is not the case that compatibility with $\K_1 \prec (\K_2 \cup \K_3)$ and $\K_2 \prec (\K_1 \cup \K_3)$ necessarily implies $(\K_1 \cup \K_2) \prec \K_3$, and it therefore does also not necessarily imply compatibility with $\K_1 \prec \K_2 \prec \K_3$ or $\K_2 \prec \K_1 \prec \K_3$. As a counter-example, one can see for instance that $P(a,b,c|x,y,z) = \frac12 \delta_{a \oplus b,z} \, \delta_{c,0}$ (with binary inputs and outputs, where $\delta$ the Kronecker delta and $\oplus$ denotes addition modulo 2) is compatible with both $A \prec \{B,C\}$ and $B \prec \{A,C\}$, but not with $\{A,B\} \prec C$.}

\medskip

To finish here, note, furthermore, that if a party $A_k$ has both $d_{A_I^k} = d_{A_O^k} = 1$, then clearly one can just ignore it: in such a case, $W \in \L^\N \Leftrightarrow W \in \L^{\N \backslash A_k}$.

\subsubsection{Comment on our use of the notation $\prec$}

Let us comment briefly here on our use of the notation $\prec$. 
Recall that for two disjoint nonempty subsets $\K_1$ and $\K_2$ of $\N$, a probability distribution $P$ is said to be compatible with the causal order $\K_1 \prec \K_2$ if and only if $P(\vec a_{\K_1}|\vec x) = P(\vec a_{\K_1}|\vec x_{\N \backslash \K_2})$ for all $\vec x, \vec a_{\K_1}$. It should be noted that the relation ``compatible with $\K_1 \prec \K_2$'' thus defined is not transitive, and therefore it does not define a partial order between events. For instance, $P(a,b,c|x,y,z) \coloneqq \delta_{a,z} \, \delta_{b,0} \, \delta_{c,0}$ (with $\delta$ the Kronecker delta and $a,z$ taking at least two different values) is compatible with $A \prec B$ and $B \prec C$, but not with $A \prec C$.
This justifies why, considering more subsets, we defined the notation $\K_1 \prec \K_2 \prec \cdots \prec \K_K$ to formally mean $\K_{(\le k)} \prec \K_{(> k)}$---rather than just $\K_k \prec \K_{k+1}$---for all $k = 1, \ldots, K{-}1$.

We note also that the notation $\prec$ was used differently in Ref.~\cite{oreshkov16}, where it denoted a strict partial order (and was hence transitive). Our use of the notation $\prec$ here is consistent e.g.\ with that of Refs.~\cite{araujo15,branciard16,branciard16a,
giacomini16,feix16,abbott16,miklin17,abbott17,goswami18}, and would instead correspond to the notation $\nsucceq$ in Ref.~\cite{oreshkov16} (also used in Ref.~\cite{oreshkov12}).

\subsection{Operations on process matrices}
\label{app:operations_W_matrices}

In this section we clarify how process matrices behave in general, with respect to their validity and their compatibility with fixed causal orders, when tracing out subsystems or attaching extensions, and when tracing out, adding or grouping parties.

\subsubsection{Tracing out subsystems / Attaching extensions}
\label{app:trace_out}

Suppose that the incoming and outgoing spaces of $N$ parties can be decomposed as $A_{IO}^\N \otimes A_{I'O'}^\N$ (possibly with some trivial spaces $A_I^k$, $A_{O}^k$, $A_{I'}^k$ or $A_{O'}^k$), and consider a given matrix $W' \in A_{II'OO'}^\N$. If $W'$ is a valid process matrix, then so is $W \coloneqq \Tr_{A_{I'O'}^\N} W'$; similarly, if $W'$ is compatible with a causal order $\K_1 \prec \K_2$, then so is $W$. Both properties are quite intuitive:%
\footnote{They can be verified straightforwardly using for instance Eqs.~\eqref{eq:constr_LN_app} and~\eqref{eq:constr_LK1K2_K12}, respectively, by writing (in the first case) ${}_{\prod_{i \in \X}[1-A_O^i]A_{IO}^{\N \backslash \X}}W = \Tr_{A_{I'O'}^\N} \big[ {}_{\prod_{i \in \X}[1-A_O^i]A_{IO}^{\N \backslash \X}} W' \big] = \Tr_{A_{I'O'}^\N} \big[ {}_{\prod_{i \in \X}[1-A_{OO'}^i]A_{II'OO'}^{\N \backslash \X}} W' \big] = 0$ (and by noting that $W' \ge 0$ implies $W \ge 0$).}
clearly, ignoring some parts of the incoming and outgoing spaces cannot make a process matrix invalid, and cannot induce some signalling where there was none before.
Note, however, that the converse is in general not true: if $W = \Tr_{A_{I'O'}^\N} W'$ is a valid process matrix (or is compatible with $\K_1 \prec \K_2$), this does not guarantee in general that $W'$ is also a valid process matrix (or is compatible with $\K_1 \prec \K_2$).

There is nevertheless a case, where the validity of a process matrix $W$ ensures that a ``larger'' matrix $W'$ (defined on more subsystems) is valid: namely, when one attaches to $W$ some ancillary state $\rho$.
Indeed, in constructing the framework of process matrices, it is always assumed that one can consider some extensions of the incoming spaces of each party, and distribute (possibly entangled) ancillary states shared by all parties.
Hence, by definition, if a matrix $W \in A_{IO}^\N$ is a valid process matrix, then for any quantum state (i.e., any positive semidefinite matrix, up to normalisation) $\rho$ in any extension $A_{I'}^\N$, the matrix $W' = W \otimes \rho \in A_{II'O}^\N$ defines a valid process matrix. Similarly, if $W$ is compatible with a given causal order $\K_1 \prec \K_2$ between two disjoint subsets of parties, then so is $W \otimes \rho$.

One may then wonder if instead of attaching an ancillary state $\rho \in A_{I'}^\N$ to the incoming spaces, one could attach any other positive semidefinite matrix $W^\text{ext.} \in A_{I'O'}^\N$ in some extension of both incoming and outgoing spaces.
It is clear, from the previous remarks on the partial trace of subsystems, that for a valid (nonzero) process matrix $W$, a necessary condition for $W' \coloneqq W \otimes W^\text{ext.}$ to define a valid process matrix is that $W^\text{ext.}$ itself is also a valid process matrix.%
\footnote{This implies in particular that for an extension of the outgoing systems only, $W \otimes W^\text{ext.}$ with $W^\text{ext.} \in A_{O'}^\N$ is valid only if $W^\text{ext.}$ is proportional to the identity operator: see Footnote~\ref{footnote:all_trivial_AI}.}
For two parties and more, this condition is however not sufficient (as noted also in Refs.~\cite{Jia18,Guerin18}): for instance, $W = \frac12(\id^{A_OB_I} + \hat{\textsc{z}}^{A_O} \hat{\textsc{z}}^{B_I})$ and $W^\text{ext.} = \frac12(\id^{A_{I'}B_{O'}} + \hat{\textsc{z}}^{A_{I'}} \hat{\textsc{z}}^{B_{O'}})$ are both valid, but $W \otimes W^\text{ext.}$ is not (due to the fact that $W$ and $W^\text{ext.}$ allow for some signalling in two conflicting directions).
Similarly, for a (nonzero) process matrix $W$ compatible with $\K_1 \prec \K_2$, a necessary condition for $W' \coloneqq W \otimes W^\text{ext.}$ to be a process matrix compatible with $\K_1 \prec \K_2$ is that $W^\text{ext.}$ is also a process matrix compatible with $\K_1 \prec \K_2$. As before, this condition is however not sufficient for three parties and more.

\subsubsection{Tracing out / Adding / Separating / Grouping parties}

In the previous observations we were keeping the set of parties under consideration $\N$ fixed. Let us now consider how process matrices behave when changing the set of parties.

Consider a nonempty subset $\N_0$ of $\N$. Clearly, if $W \in A_{IO}^\N$ is a valid $N$-partite process matrix, then its restriction to the parties in the subset $\N_0$, defined as $W_0 \coloneqq \Tr_{\N \backslash \N_0} W$, is a valid $|\N_0|$-partite process matrix. Similarly, if $W$ is compatible with a causal order $\K_1 \prec \K_2$, then $W_0$ is compatible with the order $(\K_1 \cap \N_0) \prec (\K_2 \cap \N_0)$.%
\footnote{Again, both properties can easily be checked by using for instance Eqs.~\eqref{eq:constr_LN_app} and~\eqref{eq:constr_LK1K2_K12}, and the fact that $W \ge 0$ implies $W_0 \ge 0$.}

Let $\N_1$ and $\N_2$ be two disjoint sets of parties. If $W_1 \in A_{IO}^{\N_1}$ and $W_2 \in A_{IO}^{\N_2}$ are two valid process matrices for the parties in $\N_1$ and $\N_2$, respectively, then so is $W = W_1 \otimes W_2 \in A_{IO}^{\N_1 \cup \N_2}$ for all parties in $\N_1 \cup \N_2$.
(Note however that if $\N_1$ and $\N_2$ are not disjoint, this may not hold any more, as in the case with $\N_1 = \N_2 = \N$ considered in the previous subsection.)
If say $W_1$ is compatible with a causal order $\K_1 \prec \K_1'$ (with $\K_1, \K_1'$ two disjoint nonempty subsets of $\N_1$), then so is $W$.
Furthermore, for any nonempty subsets $\K_1 \subseteq \N_1$ and $\K_2 \subseteq \N_2$, $W$ is compatible with both orders $\K_1 \prec \K_2$ and $\K_2 \prec \K_1$.

Suppose now that the incoming and outgoing spaces of a party, say $A_N$, can be factorised into $A_{IO}^N = A_{IO}^{N^{(1)}} \otimes A_{IO}^{N^{(2)}}$. One can then virtually ``separate'' $A_N$ into two new parties, $A_{N^{(1)}}$ and $A_{N^{(2)}}$, with incoming and outgoing spaces $A_{IO}^{N^{(1)}}$ and $A_{IO}^{N^{(2)}}$, respectively, and thus consider the new set of $N{+}1$ parties $\N'=\{A_1,\ldots,A_{N-1},A_{N^{(1)}},A_{N^{(2)}}\}$. If $W \in A_{IO}^\N$ is a valid $N$-partite process matrix, one can then verify that when considering the set $\N'$, $W \in A_{IO}^{\N'}$ is also is a valid $(N{+}1)$-partite process matrix, i.e., $W \in \L^{\N'}$. If $W$ is compatible with a causal order $\K_1 \prec \K_2$, then $W$ is also compatible with $\K_1' \prec \K_2'$, where $\K_i'$ is obtained from $\K_i$ (like $\N'$ from $\N$) by replacing $A_N$ by $A_{N^{(1)}},A_{N^{(2)}}$ (when $A_N \in K_i$).%
\footnote{Both properties are straightforward when recalling that the validity and compatibility with a fixed order constraints are obtained by imposing certain conditions for all operations $M \in A_{IO}^N$ of party $A_N$: clearly, these constraints are also satisfied if $A_N$ is separated into two parties $A_{N^{(1)}}$ and $A_{N^{(2)}}$, and $M$ takes the form $M = M^{(1)} \otimes M^{(2)} \in A_{IO}^{N^{(1)}} \otimes A_{IO}^{N^{(2)}}$ (and by noting that if $M^{(1)}$ and $M^{(2)}$ are CPTP, then so is $M$).
These properties can also be verified formally using the characterisations of Eqs.~\eqref{eq:constr_LN_app} or~\eqref{eq:constr_LK1K2_K12}, after noting in particular that ${}_{[1-A_O^{N^{(1)}}A_O^{N^{(2)}}]\cdots}W = 0$ is equivalent to ${}_{[1-A_O^{N^{(1)}}]\cdots}W = {}_{[1-A_O^{N^{(2)}}]\cdots}W = 0$.}

Conversely, for a given set of $N\ge 2$ parties $\N$, let us finally consider a set $\N'$ obtained from $\N$ by now grouping two or more of the parties, e.g.\ $\N' = \{A_1, \ldots, A_{N-2}, \{A_{N-1},A_N\}\}$, where $\{A_{N-1},A_N\}$ is considered to form a new effective party.
Then $W$ is not necessarily a valid ($N{-}1$)-partite process matrix for the parties in $\N'$. The reason for this is that valid process matrices are required to give valid probability distributions only for product operations of the parties; if two parties are grouped together and perform a joint operation, that may no longer yield valid probabilities. An explicit counterexample is for instance $W = \frac12(\id^{A_OB_I} + \hat{\textsc{z}}^{A_O} \hat{\textsc{z}}^{B_I})$, which represents a (dephasing) channel from $A$ to $B$ and is indeed a valid bipartite process matrix, but not a valid single-partite process if $A$ and $B$ are grouped together (as ${}_{[1-A_O]}W \neq 0$; e.g., the joint CPTP map $M = \frac12(\id^{A_OB_I} - \hat{\textsc{z}}^{A_O} \hat{\textsc{z}}^{B_I})$ gives $\Tr[M \cdot W] = 0$).%
\footnote{Nevertheless, from Eqs.~\eqref{eq:LN_dAIf_1_gen} and~\eqref{eq:LN_dAOl_1_gen} one can see that parties who all have trivial incoming spaces, or parties who all have trivial outgoing spaces, can be grouped together without changing the validity of the process matrix in question.}

\subsection{Allowed and forbidden terms in the Hilbert–Schmidt basis decomposition of process matrices}
\label{app:allowed_forbidden_terms}

In Refs.~\cite{oreshkov12,oreshkov16}, the constraints characterising the set of valid process matrices and the set of process matrices compatible with a fixed causal order between two complementary subsets were formulated in a different way, namely by specifying which terms can appear in the decomposition of the corresponding operators in a Hilbert-Schmidt basis. To complete this appendix, we now establish the connection between these two alternative characterisations, and we prove their equivalence.

\medskip

A Hilbert-Schmidt basis of some space of linear operators $X$ (acting on a $d_X$-dimensional Hilbert space) is given by a set of generalised Pauli matrices, i.e., a set of Hermitian operators $\{\sigma_\mu^X \}_{\mu = 0}^{d_X^2-1}$, with $\sigma_0^X = \id^X$, $\Tr[\sigma_\mu^X \sigma_\nu^X] = d_X \delta_{\mu,\nu}$ for all $\mu,\nu = 0,\dots,d_X^2 - 1$, and $\Tr[\sigma_\mu^X] = 0$ for $\mu \ge 1$. In such a basis, a process matrix $W \in A_{IO}^1 \otimes A_{IO}^2 \otimes A_{IO}^3 \otimes \cdots$ can be expanded as  
\begin{align}
W &= \hspace{-2mm} \sum_{\mu_1,\nu_1,\mu_2,\nu_2,\ldots} \hspace{-3mm} w_{\mu_1\nu_1\mu_2\nu_2\mu_3\nu_3\cdots} \, \sigma_{\mu_1}^{A_I^1} \sigma_{\nu_1}^{A_O^1}  \sigma_{\mu_2}^{A_I^2}  \sigma_{\nu_2}^{A_O^2} \sigma_{\mu_3}^{A_I^3}  \sigma_{\nu_3}^{A_O^3} \! \cdots \notag \\
& \text{with} \ w_{\mu_1\nu_1\mu_2\nu_2\mu_3\nu_3\cdots}  \in  \mathbb{R} \quad \forall \ \mu_1,\nu_1,\mu_2,\nu_2,\mu_3,\nu_3,\ldots. \label{eq:HS_decomp}
\end{align}
The approach of Refs.~\cite{oreshkov12,oreshkov16} looks at which terms $\sigma_{\mu_1}^{A_I^1} \sigma_{\nu_1}^{A_O^1}  \sigma_{\mu_2}^{A_I^2}  \sigma_{\nu_2}^{A_O^2} \sigma_{\mu_3}^{A_I^3}  \sigma_{\nu_3}^{A_O^3} \cdots$ can appear with a nonzero coefficient $w_{\mu_1\nu_1\mu_2\nu_2\mu_3\nu_3\cdots}$ (i.e., are ``allowed'') in the above decomposition.
According to Proposition~3.1 of Ref.~\cite{oreshkov16}, a Hermitian operator $W$ is in the linear subspace $\L^\N$ of valid process matrices if and only if, in addition to the identity term $\id^\N$, it contains only terms for which at least one party $A_k$ has a nontrivial operator $\sigma_{\mu_k} \neq \id$ on $A_I^k$ and the identity operator $\id$ on $A_O^k$.%
\footnote{As clarified in Ref.~\cite{milz18}, valid process matrix can indeed only contain terms that, except for the identity, do not appear in the Hilbert-Schmidt decomposition of $\bigotimes_{k \in \N} M_{x_k}^{A_{IO}^k}$, for any CPTP maps $M_{x_k}^{A_{IO}^k}$ (as otherwise it is always possible to find some CPTP maps $M_{x_k}^{A_{IO}^k}$ that give non-normalised probabilities via the generalised Born rule of Eq.~\eqref{eq:born_rule_Npartite}). Given the constraint $\Tr_{A_O^k} M_{x_k}^{A_{IO}^k} = \id^{A_I^k}$ for CPTP maps, one can see that the only forbidden terms in any $M_{x_k}^{A_{IO}^k}$ are of the form $\sigma_{\mu_k}^{A_I^k}\id^{A_O^k}$ with $\sigma_{\mu_k} \neq \id$. Thus, only terms that contain $\sigma_{\mu_k}^{A_I^k}\id^{A_O^k}$ for at least one value of $k$ cannot appear in the Hilbert-Schmidt decomposition of $\bigotimes_{k \in \N} M_{x_k}^{A_{IO}^k}$, and are thus allowed (in addition to the identity) in the decomposition of a process matrix.}

To see that this statement is indeed equivalent to our own characterisation of $\L^\N$, let us first verify that all terms of this kind fulfil all the constraints of Eq.~\eqref{eq:constr_LN_app}. This is clearly the case for the identity $\id^\N$, since ${}_{[1-A_O^i]}\id^\N = 0$ for any party $A_i$. Consider then some generic Hilbert-Schmidt term $T_k$ of the form $\cdots \sigma_{\mu_k}^{A_I^k}\id^{A_O^k} \cdots$ (with $\sigma_{\mu_k} \neq \id$). Such a term satisfies ${}_{[1-A_O^k]}T_k = {}_{A_{IO}^k}T_k = 0$, so that for any $\X \subseteq \N, \X \neq \emptyset$ we have ${}_{\prod_{i \in \X}[1-A_O^i]A_{IO}^{\N \backslash \X}}T_k = 0$, whether $k \in \X$ or $k \in \N \backslash \X$. By linearity, any operator $W$ whose Hilbert-Schmidt decomposition~\eqref{eq:HS_decomp} only contains the identity or such terms $T_k$ thus satisfies all the constraints~\eqref{eq:constr_LN_app}.
Conversely, suppose that the Hilbert-Schmidt decomposition of $W$ contains a term $F$ (with a nonzero weight) that is ``forbidden'' according to Proposition~3.1 of Ref.~\cite{oreshkov16}, that is, a term such that for all parties $A_k$, there is either a nontrivial operator $\sigma_{\nu_k} \neq \id$ on $A_O^k$, or an identity operator on both $A_I^k$ and $A_O^k$ (and where there is at least one party for which the former is true). Consider then the nonempty subset $\X \subseteq \N$ of parties $A_i$ for which $\sigma_{\nu_i}^{A_O^i} \neq \id^{A_O^i}$ in $F$. For $i \in \X$, one thus has ${}_{[1-A_O^i]}F = F$, while for $j \in \N \backslash \X$, ${}_{A_{IO}^j}F = F$; all in all, ${}_{\prod_{i \in \X}[1-A_O^i]A_{IO}^{\N \backslash \X}}F = F$. By the linear independence of all Hilbert-Schmidt terms, $W$ then cannot satisfy ${}_{\prod_{i \in \X}[1-A_O^i]A_{IO}^{\N \backslash \X}}W = 0$, and thus violates the constraints of Eq.~\eqref{eq:constr_LN_app}.

\medskip

The process matrices that are compatible with the causal order $\K_1 \prec \K_2$, with $\K_1 \cup \K_2 = \N$, were likewise characterised in Ref.~\cite{oreshkov16} in terms of allowed terms in a Hilbert-Schmidt basis decomposition. The following terminology was used: the \emph{restriction} of a Hilbert-Schmidt term onto certain subsystems is the part of the term corresponding to the respective subsystems---for example, the restriction of the term $\sigma_{\mu_1}^{A_I^1} \sigma_{\nu_1}^{A_O^1}  \sigma_{\mu_2}^{A_I^2}  \sigma_{\nu_2}^{A_O^2} \sigma_{\mu_3}^{A_I^3}  \sigma_{\nu_3}^{A_O^3} \cdots$ onto the subsystems $A_I^2 \otimes A_O^2$ is just $\sigma_{\mu_2}^{A_I^2}  \sigma_{\nu_2}^{A_O^2}$.
According to Proposition~3.2 in Ref.~\cite{oreshkov16}, the (valid) process matrices that do not allow signalling from $\K_2$ to $\K_1$ are those that contain only Hilbert-Schmidt terms whose restriction to the (incoming and outgoing systems of) parties in $\K_2$ are of the allowed type for a $|\K_2|$-partite process matrix for those parties---that is, terms with either the identity operator $\id^{A_{I}^{k_2}}\id^{A_{O}^{k_2}}$ for all parties in $\K_2$, or for which there is some party $A_{k_2} \in \K_2$ with a nontrivial generalised Pauli operator $\sigma_{\mu_{k_2}} \neq \id$ on $A_I^{k_2}$ and the identity operator on $A_O^{k_2}$.

To see that this proposition is indeed equivalent to our characterisation of $\L^{\K_1 \prec \K_2}$ given by Eqs.~\eqref{eq:constr_LK1K2} or Eq.~\eqref{eq:constr_LK1K2_v2}, note that the restriction of a Hilbert-Schmidt term $T$ to $\K_2$ is precisely obtained, up to a multiplicative factor (which may be $0$), by taking the partial trace $T_{|M_{\K_1}} \coloneqq \Tr_{\K_1} [M_{\K_1} \otimes \id^{\K_2} \cdot T]$, for any $M_{\K_1} \in A_{IO}^{\K_1}$. Hence, imposing that all Hilbert-Schmidt terms $T$ in the decomposition of $W = \sum_T w_T \, T$ have restrictions to $\K_2$ that are in $\L^{\K_2}$, as in the characterisation of Ref.~\cite{oreshkov16} just recalled, is equivalent to imposing that for any $M_{\K_1} \in A_{IO}^{\K_1}$, $W_{|M_{\K_1}} = \sum_T w_T \, T_{|M_{\K_1}}$ only have terms $T_{|M_{\K_1}} \in \L^{\K_2}$, i.e., that $W_{|M_{\K_1}}$ itself is in $\L^{\K_2}$, as imposed in Eq.~\eqref{eq:constr_LK1K2_v2}.
Note that Proposition~3.2 in Ref.~\cite{oreshkov16} pre-supposed that the process matrix under consideration was valid. If this is not pre-supposed, one must in addition impose, according to the previous characterisation, that for Hilbert-Schmidt terms in the decomposition of $W$ whose restriction to $\K_2$ is the identity operator $\id^{\K_2}$, there must either also be the identity operator for all parties in $\K_1$, or there must be some party $A_{k_1} \in \K_1$ with a restriction $\sigma_{\mu_{k_1}}^{A_I^{k_1}}\id^{A_O^{k_1}} \neq \id^{A_I^{k_1}}\id^{A_O^{k_1}}$---in other words, one must impose that $\Tr_{\K_2}W \in \L^{\K_1}$, so that one also recovers the first constraint of Eq.~\eqref{eq:constr_LK1K2_v2}.

\section{Characterisation of causally separable process matrices}
\label{app:characterisation_CS_Ws}

\renewcommand{\thelemma}{B\arabic{lemma}}
\setcounter{lemma}{0}
\renewcommand{\theproposition}{B\arabic{proposition}}
\setcounter{proposition}{0}

In this appendix we prove the propositions that allow us to characterise causally separable process matrices in terms of simple conditions. 
We start by proving the first part of Proposition~\ref{prop:comp_3_defs}, namely that in the particular tripartite scenario with $d_{C_O} = 1$, Ara\'{u}jo \emph{et al.}'s definition of causal separability (Definition~\ref{def:AB+-CS}) is equivalent to OG's notion of extensible causal separability (Definition~\ref{def:OG-ECS}), and thus also to our Definition~\ref{def:our_def-CS}. 
Then we provide the proofs for the characterisation of general tripartite causally separable process matrices (Proposition~\ref{prop:CS_charact_3}) as well as for the necessary condition (Proposition~\ref{prop:CS_necessary}) and the sufficient condition (Proposition~\ref{prop:CS_sufficient}) in the general $N$-partite case.
Note that all the special cases follow from Propositions~\ref{prop:CS_necessary} and~\ref{prop:CS_sufficient}, and we could just give the proofs of those two general propositions. 
However, for pedagogical reasons we start with the simpler proofs, which may entail some repetition in the arguments, but allows for greater clarity in presenting the core ideas.

All of the proofs below (of increasing complexity) make use of the same type of argument to prove the necessity of the respective conditions. This argument is based on the ``teleportation technique'' that follows from the lemma below.
Before stating it, let us introduce some further notation. 
For two Hilbert spaces $\HS^{X}$, $\HS^{X'}$ with the same dimension $d$, and denoting by $\{\ket{i}^{X^{(\prime)}}\}_{i=1}^d$ an orthonormal basis of either $\HS^{X}$ or $\HS^{X'}$, we will consider the maximally entangled state $\ket{\Phi^+}^{X/X'} \coloneqq \frac{1}{\sqrt{d}} \sum_{i} \ket{i}^X \otimes \ket{i}^{X'}$.
We also recall that for a given matrix $W \in A_{IO}^\N$, we denote by $W^{A_{IO}^k \to A_{I'}^{k'}}$ the matrix in $(\bigotimes_{j\in \N \backslash k} A_{IO}^j) \otimes A_{I'}^{k'}$ that has formally the same form as $W$, except that party $A_k$'s system $A_{IO}^k$ is now attributed to an extension $A_{I'}^{k'}$ of party $A_{k'}$'s incoming space.
Formally (recalling Eq.~\eqref{eq:W_A_telep_to_A'}),
\begin{equation}
W^{A_{IO}^k \to A_{I'}^{k'}} \coloneqq \sum_{i,j} \Tr_k \! \Big[ \!\ket{i}\!\!\bra{j}^{A_{IO}^k} \otimes \id^{\N \backslash k} \cdot W \Big] \otimes \ket{j}\!\!\bra{i}^{A_{I'}^{k'}}, \label{eq:W_A_telep_to_A'_app}
\end{equation}
where $\{\ket{i}\}$ is an orthonormal basis of $\HS^{A_I^k} \otimes \HS^{A_O^k}$.

\begin{lemma}[``Teleportation technique''] \label{lemma:teleportation}
Consider a process matrix $W \in \L^\N$, to which one attaches a maximally entangled state $\ketbra{\Phi^+}^{A_{I''}^k/A_{I'}^{k'}}$ shared by parties $A_k$ and $A_{k'}$, with dimensions $d_{A_{I''}^k} = d_{A_{I'}^{k'}} = d_{A_{IO}^k}$, and possibly some other ancillary state $\tilde\rho$ in some further extension $A_{I'}^{\N \backslash k'} \otimes A_{I''}^{k'}$. Consider then the case where party $A_k$ applies the CP map represented by the positive semidefinite CJ matrix $M_k = \ketbra{\Phi^+}^{A_{IO}^k/A_{I''}^k} \otimes \id^{A_{I'}^k}$. The resulting conditional matrix for the other $N$-1 parties is then
\begin{align}
& (W \otimes \ketbra{\Phi^+}^{A_{I''}^k/A_{I'}^{k'}} \otimes \tilde\rho)_{|M_k = \ketbra{\Phi^+}^{A_{IO}^k/A_{I''}^k} \otimes \id^{A_{I'}^k}} \notag \\[1mm]
& \coloneqq \Tr_k \Big[ \ketbra{\Phi^+}^{A_{IO}^k/A_{I''}^k} \otimes \id^{A_{I'}^k} \otimes \id^{\N \backslash k} \notag \\[-3mm]
& \hspace{35mm} \cdot W \otimes \ketbra{\Phi^+}^{A_{I''}^k/A_{I'}^{k'}} \otimes \tilde\rho \Big] \notag \\
& \, = \frac{1}{(d_{A_{IO}^k})^2} \ W^{A_{IO}^k \to A_{I'}^{k'}}  \, \otimes \, \Tr_k[\tilde\rho].
\end{align}
\end{lemma}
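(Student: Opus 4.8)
The plan is to establish this identity by a direct computation, recognising it as nothing more than the Choi-Jamio\l{}kowski image of the standard quantum teleportation protocol. Let me abbreviate $d \coloneqq d_{A_{IO}^k}$, and recall that here $\Tr_k$ traces over \emph{all} of party $A_k$'s systems, namely $A_I^k, A_O^k, A_{I'}^k$ and $A_{I''}^k$. The first step is to peel off the spectator ancilla $\tilde\rho$: since the applied CP map factorises as $M_k = \ketbra{\Phi^+}^{A_{IO}^k/A_{I''}^k} \otimes \id^{A_{I'}^k}$, with the identity acting precisely on the extension $A_{I'}^k$ on which the $A_k$-part of $\tilde\rho$ lives, the trace over $A_{I'}^k$ simply returns $\Tr_{A_{I'}^k}[\tilde\rho] = \Tr_k[\tilde\rho]$. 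This factor decouples cleanly from everything else, so it remains only to evaluate the contraction over the systems $A_{IO}^k$ and $A_{I''}^k$ that links $W$, the ``measurement'' $\ketbra{\Phi^+}^{A_{IO}^k/A_{I''}^k}$, and the shared state $\ketbra{\Phi^+}^{A_{I''}^k/A_{I'}^{k'}}$.

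For this core contraction I would expand all three objects in the fixed orthonormal basis $\{\ket{i}\}$ of $\HS^{A_I^k}\otimes\HS^{A_O^k}$ underlying the definition~\eqref{eq:W_A_telep_to_A'_app}, writing $W = \sum_{i,j} \ket{i}\bra{j}^{A_{IO}^k} \otimes W^{(ij)}$ (with $W^{(ij)}$ supported on the remaining parties), and each maximally entangled projector as $\ketbra{\Phi^+} = \frac1d \sum_{a,a'} \ket{a}\bra{a'}\otimes\ket{a}\bra{a'}$. Carrying out the matrix multiplications on $A_{IO}^k$ and on $A_{I''}^k$ (which produce Kronecker deltas pairing the summation indices) and then taking the partial traces over $A_{IO}^k$ and $A_{I''}^k$ collapses all the indices and leaves, once the two normalisation prefactors combine into $1/d^2$, the operator $\frac{1}{d^2}\sum_{i,j} W^{(ij)} \otimes \ket{i}\bra{j}^{A_{I'}^{k'}}$. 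The final step is to recognise this sum as exactly $\frac{1}{d^2}\, W^{A_{IO}^k \to A_{I'}^{k'}}$: indeed, comparing with the definition~\eqref{eq:W_A_telep_to_A'_app} and using $\Tr_k[\ket{i}\bra{j}^{A_{IO}^k}\otimes\id^{\N\backslash k}\cdot W] = W^{(ji)}$, one sees (after relabelling $i\leftrightarrow j$) that the entanglement-mediated contraction implements precisely the index relabelling, with its characteristic transpose, that reattributes $A_{IO}^k$ to $A_{I'}^{k'}$. Multiplying back in the factor $\Tr_k[\tilde\rho]$ then yields the claimed expression.

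The computation itself is routine linear algebra; the only real care needed is bookkeeping. The main subtlety I anticipate is tracking which of the identical-dimensional spaces ($A_{IO}^k$, $A_{I''}^k$ and $A_{I'}^{k'}$) each index belongs to at every stage, and in particular ensuring that the transpose implicit in the Choi-Jamio\l{}kowski convention is handled consistently, so that the output genuinely matches the definition~\eqref{eq:W_A_telep_to_A'_app} of $W^{A_{IO}^k \to A_{I'}^{k'}}$ rather than its partial transpose. It is worth noting that the positivity of $M_k$ (and hence that it represents a genuine CP map) plays no role in the identity itself---this is a purely algebraic statement---although that positivity is of course what legitimises the lemma as an admissible operation on process matrices in the surrounding arguments.
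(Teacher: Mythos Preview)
Your proposal is correct and follows essentially the same route as the paper: both proofs expand the maximally entangled projectors in the computational basis, perform the partial traces to collapse the indices via Kronecker deltas, and identify the result with the definition~\eqref{eq:W_A_telep_to_A'_app}. The only cosmetic difference is that you first decompose $W$ into blocks $W^{(ij)}$ and factor out $\tilde\rho$ at the outset, whereas the paper carries $W$ and $\tilde\rho$ along in one expanded expression; the underlying index manipulation is identical.
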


\begin{widetext}
\begin{proof}
For clarity, let us write explicitly as superscripts the spaces in which the various operators act. We have:
\begin{align}
& (W^{A_{IO}^\N} \otimes \ketbra{\Phi^+}^{A_{I''}^k/A_{I'}^{k'}} \otimes \tilde\rho^{A_{I'}^{\N \backslash k'}A_{I''}^{k'}})_{|M_k = \ketbra{\Phi^+}^{A_{IO}^k/A_{I''}^k} \otimes \id^{A_{I'}^k}} \notag \\
& \ = \Tr_{A_{II'I''O}^k} \Big[ \ketbra{\Phi^+}^{A_{IO}^k/A_{I''}^k} \otimes \id^{A_{I'}^k} \otimes \id^{A_{II'O}^{\N \backslash k}A_{I''}^{k'}} \cdot W^{A_{IO}^\N} \otimes \ketbra{\Phi^+}^{A_{I''}^k/A_{I'}^{k'}} \otimes \tilde\rho^{A_{I'}^{\N \backslash k'}A_{I''}^{k'}} \Big] \notag \\
& \ = {\textstyle \frac{1}{(d_{A_{IO}^k})^2}} \sum_{i,i',j,j'} \Tr_{A_{II''O}^k} \Big[ \ket{i}\!\!\bra{i'}^{A_{IO}^k} \otimes \ket{i}\!\!\bra{i'}^{A_{I''}^k} \otimes \id^{A_{IO}^{\N \backslash k}} \otimes \id^{A_{I'}^{k'}} \cdot W^{A_{IO}^\N} \otimes \ket{j}\!\!\bra{j'}^{A_{I''}^k} \otimes \ket{j}\!\!\bra{j'}^{A_{I'}^{k'}} \Big] \otimes \Tr_{A_{I'}^k} \Big[  \tilde\rho^{A_{I'}^{\N \backslash k'}A_{I''}^{k'}} \Big] \notag \\
& \ = {\textstyle \frac{1}{(d_{A_{IO}^k})^2}} \sum_{i,j} \Tr_{A_{IO}^k} \Big[ \ket{i}\!\!\bra{j}^{A_{IO}^k} \otimes \id^{A_{IO}^{\N \backslash k}} \cdot W^{A_{IO}^\N} \Big] \otimes \ket{j}\!\!\bra{i}^{A_{I'}^{k'}} \otimes \Tr_{A_{I'}^k} \Big[  \tilde\rho^{A_{I'}^{\N \backslash k'}A_{I''}^{k'}} \Big] \notag \\
& \ = {\textstyle \frac{1}{(d_{A_{IO}^k})^2}} \ W^{A_{IO}^k \to A_{I'}^{k'}} \, \otimes \, \Tr_k[\tilde\rho] \,.
\end{align}
\end{proof}
\end{widetext}

We shall also use the following facts in (some of) the proofs below:

\begin{proposition} \label{prop:factor_rho}
Without loss of generality, each $W_{(k)}^\rho$ in Definition~\eqref{def:our_def-CS} can be taken to be of the form $W_{(k)} \otimes \rho$. Eq.~\eqref{eq:our_def-CS} then implies the direct decomposition $W = \sum_{k \in \N} q_k \, W_{(k)}$, with each $W_{(k)} \in A_{IO}^\N$ being a process matrix compatible with party $A_k$ acting first (and such that for any CP map $M_k \in A_{II'O}^k$, $(W_{(k)}\otimes\rho)_{|M_k}$ is causally separable).
\end{proposition}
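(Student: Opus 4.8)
The plan is to reduce the case of an arbitrary (possibly mixed) ancilla $\rho$ to that of a \emph{pure} ancilla, for which the positive semidefiniteness of the terms in a decomposition of the form~\eqref{eq:our_def-CS} forces them to factorise. Accordingly, I would split the argument into two observations: that for a pure ancilla every valid decomposition automatically has product terms, and that an arbitrary $\rho$ can be purified within the class of admissible incoming extensions so that this observation applies, after which the purifying system is traced out to obtain a product decomposition for the original $\rho$.

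First I would treat pure ancillas. If $\rho = \ketbra{\phi}$ is a pure state on $A_{I'}^\N$, then $W \otimes \rho$ is supported within $\HS^{A_{IO}^\N} \otimes \mathrm{span}(\ket{\phi})$. Given any decomposition $W \otimes \rho = \sum_k q_k\, W_{(k)}^\rho$ with $q_k \ge 0$ and $W_{(k)}^\rho \ge 0$, and using that the support of a sum of positive semidefinite operators contains the support of each summand, every term $W_{(k)}^\rho$ must also be supported within $\HS^{A_{IO}^\N} \otimes \mathrm{span}(\ket{\phi})$, whence $W_{(k)}^\rho = W_{(k)} \otimes \ketbra{\phi} = W_{(k)} \otimes \rho$ for some $W_{(k)} \ge 0$ (terms with $q_k = 0$ being dropped). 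For a general $\rho$ I would take a purification $\ket{\phi}$ on $A_{I'}^\N \otimes A_{I''}^{k_0}$, assigning the purifying system $A_{I''}^{k_0}$ to the incoming space of some party $A_{k_0}$, which is again an admissible extension; applying Definition~\ref{def:our_def-CS} to the pure ancilla $\ketbra{\phi}$ and invoking the pure case yields a decomposition with terms $W_{(k)} \otimes \ketbra{\phi}$, and tracing out $A_{I''}^{k_0}$ produces the product decomposition $W \otimes \rho = \sum_k q_k\, W_{(k)} \otimes \rho$. Cancelling $\rho$ (using $\rho \neq 0$) then gives the direct decomposition $W = \sum_k q_k\, W_{(k)}$, and the compatibility of each $W_{(k)}$ with party $A_k$ acting first follows from that of $W_{(k)} \otimes \ketbra{\phi}$ upon tracing out all ancillary incoming systems, an operation that preserves compatibility with any fixed causal order (Appendix~\ref{app:trace_out}).

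The subtlest point, which I expect to be the main obstacle, is the parenthetical clause that the conditional matrices $(W_{(k)} \otimes \rho)_{|M_k}$ remain causally separable for every $M_k \in A_{II'O}^k$. For $k = k_0$ this is immediate, since conditioning $W_{(k_0)} \otimes \ketbra{\phi}$ on the CP map $M_{k_0} \otimes \id^{A_{I''}^{k_0}}$ reproduces exactly $(W_{(k_0)} \otimes \rho)_{|M_{k_0}}$, which is causally separable by Definition~\ref{def:our_def-CS}. For $k \neq k_0$, however, conditioning on $M_k$ leaves the purifying system $A_{I''}^{k_0}$ in the possession of the still-present party $A_{k_0}$, so that $(W_{(k)} \otimes \rho)_{|M_k}$ is recovered from the causally separable $(N{-}1)$-partite process $(W_{(k)} \otimes \ketbra{\phi})_{|M_k}$ only after tracing out the incoming ancilla $A_{I''}^{k_0}$. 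The proposition thus hinges on the auxiliary fact that \emph{causal separability is preserved under tracing out an incoming ancillary system}, which I would establish as a separate lemma by induction on the number of parties: decomposing the causally separable extended process, tracing out the ancilla term by term, and using that this preserves both compatibility with the first party and, via the inductive hypothesis applied to the conditional matrices, their causal separability. With this lemma in place the conditional-separability clause follows in the case $k \neq k_0$ as well, completing the argument.
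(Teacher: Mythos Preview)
Your proposal is correct and follows essentially the same approach as the paper: factorisation via extremality of pure states (the paper simply says ``from the extremality of pure states''; your support argument is the same thing), purification for mixed $\rho$, and tracing out to conclude. You are actually more careful than the paper's own proof, which does not explicitly address the parenthetical clause about $(W_{(k)}\otimes\rho)_{|M_k}$ being causally separable; the auxiliary lemma you propose (causal separability preserved under tracing out incoming ancillas) is exactly the paper's Proposition~\ref{prop:traceout}, stated and proved separately just after this one.
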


\begin{proof}
If $\rho$ is pure, then from the extremality of pure states it follows that $W_{(k)}^\rho = W_{(k)} \otimes \rho$. If $\rho$ is mixed one can first purify it by introducing an additional incoming system for some arbitrary party, obtain the appropriate decomposition~\eqref{eq:our_def-CS} for its purification, and then trace out the additional incoming space just introduced to reach the same conclusion. As $W_{(k)}^\rho$ is compatible with $A_k$ acting first and $W_{(k)} = \Tr_{A_{I'}^\N} W_{(k)}^\rho$, then $W_{(k)}$ itself must also be compatible with $A_k$ acting first (see remarks in Appendix~\ref{app:trace_out}).
\end{proof}

\begin{proposition} \label{prop:traceout}
In a scenario where the parties' incoming spaces are decomposed as $A_I^\N \otimes A_{I'}^\N$ (possibly with some trivial spaces $A_I^k$ or $A_{I'}^k$), if a process matrix $W \in A_{II'O}^\N$ is causally separable, then so is $\Tr_{I'} W \in A_{IO}^\N$ (with $\Tr_{I'} \coloneqq \Tr_{A_{I'}^\N}$).
\end{proposition}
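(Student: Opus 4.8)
The plan is to prove this by induction on the number of parties $N$, directly verifying the recursive criterion of Definition~\ref{def:our_def-CS} for $\Tr_{I'} W$. The base case $N=1$ is immediate, since every single-partite process matrix is causally separable. For the inductive step I assume the statement holds for $N-1$ parties and consider a causally separable $W \in A_{II'O}^\N$.

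To show that $\Tr_{I'} W$ is causally separable I must exhibit, for an arbitrary extension $A_{J'}^\N$ of the \emph{reduced} incoming spaces and an arbitrary ancillary state $\sigma \in A_{J'}^\N$, a suitable decomposition of $(\Tr_{I'}W)\otimes\sigma$. The key move is to recognise that $(\Tr_{I'}W)\otimes\sigma = \Tr_{A_{I'}^\N}(W\otimes\sigma)$, and then to feed the extension $\sigma$ back into the definition of causal separability of $W$ itself: applying Definition~\ref{def:our_def-CS} (with the extension taken to be $A_{J'}^\N$ and the ancillary state $\sigma$) yields a decomposition $W\otimes\sigma = \sum_{k\in\N} q_k\, U_{(k)}$, where each $U_{(k)}$ is a process matrix compatible with $A_k$ acting first and, for every CP map applied by $A_k$, the corresponding conditional $(N{-}1)$-partite matrix is causally separable. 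Tracing out $A_{I'}^\N$ then gives $(\Tr_{I'}W)\otimes\sigma = \sum_k q_k\, \bar U_{(k)}$ with $\bar U_{(k)} \coloneqq \Tr_{A_{I'}^\N} U_{(k)}$, which is the candidate decomposition.

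It then remains to check the two required properties of each $\bar U_{(k)}$. First, compatibility with $A_k$ acting first is inherited from $U_{(k)}$ under the partial trace over the incoming subsystem $A_{I'}^\N$, which is exactly the stability property recalled in Appendix~\ref{app:trace_out} (tracing out subsystems cannot create signalling). Second, for any CP map $M_k$ on $A_k$'s reduced space I must show that the conditional $(\bar U_{(k)})_{|M_k}$ is causally separable. The crucial identity here is that conditioning on $M_k$ and tracing out $A_{I'}^\N$ commute in the appropriate sense: extending $M_k$ to $M_k'' \coloneqq M_k \otimes \id^{A_{I'}^k}$, which is the CJ representation of the legitimate CP map that applies $M_k$ and simply discards the subsystem $A_{I'}^k$, one finds $(\bar U_{(k)})_{|M_k} = \Tr_{A_{I'}^{\N\backslash k}}\big[(U_{(k)})_{|M_k''}\big]$. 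Since $M_k''$ is an admissible operation for $A_k$, the matrix $(U_{(k)})_{|M_k''}$ is an $(N{-}1)$-partite causally separable process by the properties of the decomposition of $W\otimes\sigma$; applying the induction hypothesis to it (tracing out the remaining incoming subsystems $A_{I'}^{\N\backslash k}$) shows that $(\bar U_{(k)})_{|M_k}$ is causally separable, which completes the inductive step and hence the proof.

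I expect the main obstacle to be establishing the commutation identity $(\bar U_{(k)})_{|M_k} = \Tr_{A_{I'}^{\N\backslash k}}[(U_{(k)})_{|M_k''}]$ cleanly---in particular, correctly identifying that padding $M_k$ with $\id^{A_{I'}^k}$ implements ``discard $A_{I'}^k$'' (cf.\ the definition~\eqref{eq:conditional_W} of the conditional matrix) and verifying that the partial traces over $A_{I'}^\N$ pass through the conditioning on the other parties' spaces. The positivity and validity of the $\bar U_{(k)}$ require no separate argument, as they follow from positivity of $U_{(k)}$ together with the inherited first-party compatibility.
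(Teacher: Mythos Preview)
Your proof is correct and follows essentially the same approach as the paper's: both proceed by induction on $N$, apply Definition~\ref{def:our_def-CS} to $W$ with the new ancillary state as the extension, trace out $A_{I'}^\N$ termwise, and use the identity $(\Tr_{I'}U_{(k)})_{|M_k} = \Tr_{I'}\big[(U_{(k)})_{|M_k\otimes\id^{A_{I'}^k}}\big]$ together with the induction hypothesis. The only differences are notational.
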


\begin{proof}
For $N=1$ party any process matrix is by definition causally separable, so that the claim is trivial.

Suppose the claim holds true in the $(N{-}1)$-partite case.
If $W \in A_{II'O}^\N$ is causally separable then by Definition~\ref{def:our_def-CS}, for any extension $A_{I''}^\N$ of the parties' incoming systems and any ancillary quantum state $\rho \in A_{I''}^\N$, $W \otimes \rho$ has a decomposition of the form $W \otimes \rho = \sum_k q_k W_{(k)}^\rho$ with each $W_{(k)}^\rho$ a valid process matrix compatible with party $A_k$ first, and such that for any possible CP map $M_k' \in A_{II'I''O}^k$ applied by party $A_k$, the conditional $(N{-}1)$-partite process matrix $(W_{(k)}^\rho)_{|M_k'} \coloneqq \Tr_k [M_k' \otimes \id^{\N \backslash k} \, \cdot \, W_{(k)}^\rho]$ is itself causally separable.

One then has $\Tr_{I'} W \otimes \rho = \sum_k q_k (\Tr_{I'} W_{(k)}^\rho)$, with $\Tr_{I'} W_{(k)}^\rho$ a valid process matrix compatible with party $A_k$ first (see remarks in Appendix~\ref{app:trace_out}). For any possible CP map $M_k \in A_{II''O}^k$ applied by party $A_k$, one has $(\Tr_{I'} W_{(k)}^\rho)_{|M_k} \coloneqq \Tr_k [M_k \otimes \id^{\N \backslash k} \, \cdot \, (\Tr_{I'} W_{(k)}^\rho)] = \Tr_{I'} \Tr_k [M_k \otimes \id^{A_{I'}^k} \otimes \id^{\N \backslash k} \, \cdot \, W_{(k)}^\rho] = \Tr_{I'} \big[ (W_{(k)}^\rho)_{|M_k'=M_k \otimes \id^{A_{I'}^k}} \big]$. As stated above, $(W_{(k)}^\rho)_{|M_k'=M_k \otimes \id^{A_{I'}^k}}$ is causally separable, and by the induction hypothesis so is $\Tr_{I'} \big[ (W_{(k)}^\rho)_{|M_k'=M_k \otimes \id^{A_{I'}^k}} \big] = (\Tr_{I'} W_{(k)}^\rho)_{|M_k}$. We thus have a valid causally separable decomposition of $\Tr_{I'} W \otimes \rho$ for any extension $\rho$, which proves that $\Tr_{I'} W$ is causally separable, and which thus proves, by induction, the claim of Proposition~\ref{prop:traceout}.
\end{proof}

Again, this property is quite intuitive: clearly, discarding some parts of the incoming systems cannot induce some causal nonseparability where there was none previously. As for the similar statements for valid process matrices and for process matrices compatible with a fixed causal order discussed in Appendix~\ref{app:trace_out}, the converse is not necessarily true: if $\Tr_{I'} W$ is a causally separable process matrix, then $W$ may not necessarily be causally separable%
\footnote{As a counterexample, consider some causally nonseparable bipartite process matrix $W \in A_{I'O} \otimes B_{IO}$. The process matrix $\Tr_{A_{I'}}W \in A_O \otimes B_{IO}$ is then compatible with the order $A \prec B$ and thus causally separable (see Appendix~\ref{app:fixed_order_part_cases}), although $W$ is not.}%
---unless $W$ is of the product form $W = W_0^{A_{IO}^\N} \otimes \rho^{A_{I'}^\N}$, in which case by our Definition~\ref{def:our_def-CS} if $W_0$ is a causally separable process matrix then so is $W = W_0 \otimes \rho$.

\subsection{Tripartite causally separable process matrices}
\label{app:charact_CS_3}

\subsubsection{Particular tripartite case with $d_{C_O} = 1$}
\label{app:charact_CS_3_dCO1}

Let us start by considering the tripartite case where party $C$ has no outgoing system (or equivalently, a trivial outgoing system, i.e., $d_{C_O} = 1$). The following proposition directly implies (after proper re-normalisation with appropriate weights $q$, $1{-}q$) the first part of Proposition~\ref{prop:comp_3_defs}, namely the equivalence in that case between Ara\'ujo \emph{et al.}'s causal separability and OG's extensible causal separability (which, we recall, is what we simply call causal separability here).

\begin{proposition}[Characterisation of tripartite causally separable process matrices with $d_{C_O} = 1$] \label{prop:CS_charact_3_no_CO}
In a tripartite scenario where party $C$ has no outgoing system, a matrix $W \in A_{IO} \otimes B_{IO} \otimes C_I$ is a valid tripartite causally separable process matrix (as per Definition~\ref{def:our_def-CS}) if and only if it can be decomposed as
\begin{equation}
W = W_{(A,B,C)} + W_{(B,A,C)} \label{eq:decomp_CS_W_3_no_CO}
\end{equation}
where, for each permutation $(X,Y)$ of the two parties $A$ and $B$, $W_{(X,Y,C)}$ is a positive semidefinite matrix satisfying
\begin{equation}
{}_{[1-X_O]Y_{IO}C_I} W_{(X,Y,C)} = 0 , \ {}_{[1-Y_O]C_I}W_{(X,Y,C)} = 0 \label{eq:decomp_CS_W_3_no_CO_constr}
\end{equation}
(i.e., $W_{(X,Y,C)}$ is a valid process matrix compatible with the causal order $X \prec Y \prec C$).
\end{proposition}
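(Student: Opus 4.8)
The plan is to prove the two directions of the equivalence separately, after first observing that the constraints in Eq.~\eqref{eq:decomp_CS_W_3_no_CO_constr} are exactly the conditions (cf.\ Eq.~\eqref{eq:constr_order_A1__AN}, whose $C_O$-constraint is vacuous since $d_{C_O}=1$) for each $W_{(X,Y,C)}$ to be a valid process matrix compatible with the fixed order $X \prec Y \prec C$. Hence the claimed decomposition is precisely a nonnormalised convex combination of process matrices compatible with $A \prec B \prec C$ and $B \prec A \prec C$, i.e.\ Ara\'ujo \emph{et al.}'s form. For the sufficiency (``if'') direction I would first show, by induction on $N$, that any process matrix compatible with a fixed total order is causally separable per Definition~\ref{def:our_def-CS}: for any ancilla $\rho$ the tensor $W\otimes\rho$ remains compatible with the order (Appendix~\ref{app:trace_out}), so one takes the single term in Eq.~\eqref{eq:our_def-CS} compatible with the first party acting first, and conditioning on that party's CP map leaves a process compatible with the remaining fixed order, causally separable by the induction hypothesis. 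Since the decompositions~\eqref{eq:our_def-CS} simply add, causal separability is closed under sums (a parallel induction), so $W = W_{(A,B,C)} + W_{(B,A,C)}$ is causally separable.

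The necessity (``only if'') direction is the substantial part. Starting from a causally separable $W$, I would invoke Definition~\ref{def:our_def-CS} with the specific ancilla $\rho = \ketbra{\Phi^+}^{C_{I''}/A_{I'}}$, a maximally entangled state of dimension $d_{C_I}$ shared between an extension $C_{I''}$ of $C$'s and an extension $A_{I'}$ of $A$'s incoming spaces; by Proposition~\ref{prop:factor_rho} this yields $W = q_A W_{(A)} + q_B W_{(B)} + q_C W_{(C)}$ with each $W_{(k)}$ a valid process matrix compatible with $A_k$ acting first. Because $d_{C_O}=1$, every valid process matrix is also compatible with $C$ acting last (Appendix~\ref{app:fixed_order_part_cases}); combining this with $\L^{A \prec B \prec C} = \L^{A \prec \{B,C\}} \cap \L^{\{A,B\} \prec C}$ shows that $W_{(A)}$, being compatible with $A$ first ($\L^{A\prec\{B,C\}}$) and $C$ last ($\L^{\{A,B\}\prec C}$), lies in $\L^{A \prec B \prec C}$, and likewise $W_{(B)} \in \L^{B \prec A \prec C}$. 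These two terms therefore contribute directly to $W_{(A,B,C)}$ and $W_{(B,A,C)}$.

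The remaining, and hardest, step is the $C$-first term $W_{(C)}$. Here I would let $C$ perform the Bell-measurement CP map $\ketbra{\Phi^+}^{C_I/C_{I''}}$ of Lemma~\ref{lemma:teleportation}, which ``teleports'' $C$'s system $C_I$ onto $A$'s ancilla $A_{I'}$: by the lemma the resulting conditional bipartite process for $A$ and $B$ is proportional to $W_{(C)}^{C_I \to A_{I'}}$, and by Definition~\ref{def:our_def-CS} it must be causally separable, hence (Proposition~\ref{prop:CS_charact_2}) a sum $V_{(A,B)} + V_{(B,A)}$ of terms compatible with $A \prec B$ and $B \prec A$ respectively. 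Re-attributing $A_{I'}$ back to $C_I$ then gives $W_{(C)} = \tilde V_{(A,B)} + \tilde V_{(B,A)}$, and the key verification is that the bipartite constraints survive this relabelling and, after padding the (now incoming) $C_I$ system, become exactly the constraints~\eqref{eq:decomp_CS_W_3_no_CO_constr} for $A \prec B \prec C$ and $B \prec A \prec C$ (using that applying ${}_{C_I}$ to a vanishing expression still vanishes). Grouping $q_A W_{(A)} + q_C \tilde V_{(A,B)} =: W_{(A,B,C)}$ and $q_B W_{(B)} + q_C \tilde V_{(B,A)} =: W_{(B,A,C)}$, both positive semidefinite and, by linearity, satisfying~\eqref{eq:decomp_CS_W_3_no_CO_constr}, completes the decomposition. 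I expect the main obstacle to be precisely the bookkeeping of the teleportation lemma together with checking that the causal-order constraints are correctly preserved under the $C_I \to A_{I'}$ re-attribution and its inverse.
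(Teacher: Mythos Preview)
Your proposal is correct and follows essentially the same approach as the paper's own proof: the same ancillary maximally entangled state $\ketbra{\Phi^+}^{C_{I''}/A_{I'}}$, the same decomposition $W=W_{(A)}+W_{(B)}+W_{(C)}$ via Proposition~\ref{prop:factor_rho}, the same observation that $W_{(A)}$ and $W_{(B)}$ are already compatible with the fixed orders $A\prec B\prec C$ and $B\prec A\prec C$ (using $d_{C_O}=1$), and the same teleportation of $W_{(C)}$ via Lemma~\ref{lemma:teleportation} followed by re-attribution of $A_{I'}$ back to $C_I$ and grouping of terms. The only cosmetic difference is that the paper absorbs the weights $q_k$ into nonnormalised process matrices from the outset, whereas you keep them explicit until the final grouping.
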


\begin{proof}
Consider a causally separable process matrix $W \in A_{IO} B_{IO} C_I$.
Let us then introduce an extension $A_{I'} \otimes C_{I''}$ of parties $A$ and $C$'s incoming spaces, of dimensions $d_{A_{I'}} = d_{C_{I''}} = d_{C_I}$, and consider attaching to $W$ the maximally entangled ancillary state $\rho = \ketbra{\Phi^+}^{C_{I''}/A_{I'}}$.

As $W$ is assumed to be causally separable, according to Definition~\ref{def:our_def-CS} and Proposition~\ref{prop:factor_rho} it must be decomposable as
\begin{equation}
W = W_{(A)} + W_{(B)} + W_{(C)} \,, \label{eq:decomp_W_A_B_C}
\end{equation}
where each term $W_{(X)} \in A_{IO} B_{IO} C_I$ is a (nonnormalised) process matrix compatible with party $X$ acting first, and such that whatever CP map that party applies to their share of $W_{(X)} \otimes \rho$, the resulting conditional process matrix for the other two parties is causally separable.

As $W_{(A)}$ is compatible both with $A$ first and with $C$ last (since $d_{C_O} = 1$, see Appendix~\ref{app:fixed_order_part_cases}), it is compatible with the fixed causal order $A \prec B \prec C$; formally, it satisfies ${}_{[1-A_O]B_{IO}C_I} W_{(A)} = {}_{[1-B_O]C_I} W_{(A)} = 0$, see Eq.~\eqref{eq:constr_fixed_order_N}. Similarly, $W_{(B)}$ is compatible with the order $B \prec A \prec C$.

Consider now the term $W_{(C)}$. Letting party $C$ act first on $W_{(C)} \otimes \rho = W_{(C)} \otimes \ketbra{\Phi^+}^{C_{I''}/A_{I'}}$ and project his incoming systems onto the maximally entangled state $\ket{\Phi^+}^{C_I/C_I''}$, according to Lemma~\ref{lemma:teleportation} (with a trivial extra ancillary state $\tilde\rho$), parties $A$ and $B$ are then left with the conditional process matrix
\begin{equation}
(W_{(C)} \otimes \rho)_{|M_C=\ketbra{\Phi^+}^{C_I/C_I''}} \ \propto \ W_{(C)}^{C_I \to A_{I'}} \,.
\end{equation}
By assumption this conditional process matrix must be a (bipartite) causally separable process matrix: according to Proposition~\ref{prop:CS_charact_2}, there must therefore exist a decomposition for $W_{(C)}^{C_I \to A_{I'}}$ of the form
\begin{equation}
W_{(C)}^{C_I \to A_{I'}} = W_{(C),(A,B)}^{C_I \to A_{I'}} + W_{(C),(B,A)}^{C_I \to A_{I'}} \label{eq:decomp_WC_CItoAI}
\end{equation}
with $W_{(C),(A,B)}^{C_I \to A_{I'}}, W_{(C),(B,A)}^{C_I \to A_{I'}} \in A_{II'O}B_{IO}$ two process matrices compatible with the fixed orders $A \prec B$ and $B \prec A$, respectively---i.e., satisfying
\begin{align}
{}_{[1-A_O]B_{IO}} W_{(C),(A,B)}^{C_I \to A_{I'}} = {}_{[1-B_O]} W_{(C),(A,B)}^{C_I \to A_{I'}} &= 0 \,, \notag \\
{}_{[1-B_O]A_{II'O}} W_{(C),(B,A)}^{C_I \to A_{I'}} = {}_{[1-A_O]} W_{(C),(B,A)}^{C_I \to A_{I'}} &= 0 \,. \label{eq:constr_WC_CItoAI_AtoB_BtoA}
\end{align}

Recall now that $W_{(C)}^{C_I \to A_{I'}}$ is formally the same matrix as $W_{(C)}$, except that system $C_I$ is replaced by $A_{I'}$.
Changing back $A_{I'}$ into $C_I$ in Eq.~\eqref{eq:decomp_WC_CItoAI}, we obtain the decomposition 
\begin{equation}
W_{(C)} = W_{(C),(A,B)} + W_{(C),(B,A)} \label{eq:decomp_W_C}
\end{equation}
with two positive semidefinite matrices $W_{(C),(A,B)}$, $W_{(C),(B,A)} \in A_{IO}B_{IO}C_I$ satisfying
\begin{align}
 {}_{[1-A_O]B_{IO}} W_{(C),(A,B)} = {}_{[1-B_O]} W_{(C),(A,B)} &= 0 \,, \notag \\
 {}_{[1-B_O]A_{IO}C_I} W_{(C),(B,A)} = {}_{[1-A_O]} W_{(C),(B,A)} &= 0 \,, \label{eq:constr_WC_AtoB_BtoA}
\end{align}
as implied by Eq.~\eqref{eq:constr_WC_CItoAI_AtoB_BtoA} after replacing $A_{I'}$ by $C_I$.
These constraints further imply that
\begin{align}
 \hspace{-2mm} {}_{[1-A_O]B_{IO}C_I} W_{(C),(A,B)} = {}_{[1-B_O]C_I} W_{(C),(A,B)} &= 0 \,, \notag \\
 \hspace{-2mm} {}_{[1-B_O]A_{IO}C_I} W_{(C),(B,A)} = {}_{[1-A_O]C_I} W_{(C),(B,A)} &= 0 \,,  
\end{align}
i.e., that $W_{(C),(A,B)}$ and $W_{(C),(B,A)}$ are process matrices compatible with the fixed causal orders $A \prec B \prec C$ and $B \prec A \prec C$, respectively (see Eq.~\eqref{eq:constr_fixed_order_N}).

From Eqs.~\eqref{eq:decomp_W_A_B_C} and~\eqref{eq:decomp_W_C}, and by defining $W_{(A,B,C)} \coloneqq W_{(A)} + W_{(C),(A,B)} \ge 0$ and $W_{(B,A,C)} \coloneqq W_{(B)} + W_{(C),(B,A)} \ge 0$, we thus find that $W$ indeed has a decomposition of the form of Eq.~\eqref{eq:decomp_CS_W_3_no_CO}, with each term satisfying the constraints of Eq.~\eqref{eq:decomp_CS_W_3_no_CO_constr}.

\medskip

Conversely, any process matrix $W$ that can be decomposed as in Eq.~\eqref{eq:decomp_CS_W_3_no_CO}, with process matrices $W_{(A,B,C)}$ and $W_{(B,A,C)}$ satisfying the constraints of Eq.~\eqref{eq:decomp_CS_W_3_no_CO_constr}---i.e., being compatible with the causal orders $A \prec B \prec C$ and $B \prec A \prec C$---is clearly causally separable, which concludes the proof of Proposition~\ref{prop:CS_charact_3_no_CO}.
\end{proof}

\subsubsection{General tripartite causally separable process matrices}
\label{app:charact_CS_3_general}

We now turn to proving Proposition~\ref{prop:CS_charact_3}, which characterises causal separability in the general tripartite scenario where all three parties have nontrivial incoming and outgoing systems.

\begin{proof}
Consider a causally separable process matrix $W \in A_{IO}B_{IO}C_{IO}$. Let us introduce here an extension $A_{I'} \otimes A_{I''} \otimes B_{I'} \otimes B_{I''} \otimes C_{I'} \otimes C_{I''}$ of all three parties' incoming spaces, with dimensions $d_{A_{I''}} = d_{B_{I'}} = d_{A_{IO}}$, $d_{B_{I''}} = d_{C_{I'}} = d_{B_{IO}}$ and $d_{C_{I''}} = d_{A_{I'}} = d_{C_{IO}}$, and let us attach to $W$ the state $\rho = \ketbra{\Phi^+}^{A_{I''}/B_{I'}} \otimes \ketbra{\Phi^+}^{B_{I''}/C_{I'}} \otimes \ketbra{\Phi^+}^{C_{I''}/A_{I'}}$.

According to Definition~\ref{def:our_def-CS} and Proposition~\ref{prop:factor_rho}, $W$ must be decomposable as
\begin{equation}
W = W_{(A)} + W_{(B)} + W_{(C)} \,, \label{eq:decomp_Wrho_3}
\end{equation}
where each term $W_{(X)} \in A_{IO} B_{IO} C_{IO}$ is a process matrix compatible with party $X$ acting first---so that it satisfies in particular ${}_{[1-X_O]Y_{IO}Z_{IO}}W_{(X)} = 0$ (with $X \neq Y \neq Z$, see Eq.~\eqref {eq:constr_L_k_first_app}), as in Eq.~\eqref{eq:decomp_CS_W_3_constrWX}---and such that whatever that party does on $W_{(X)} \otimes \rho$, the resulting conditional process matrix for the other two parties is causally separable.

Consider the first term in Eq.~\eqref{eq:decomp_Wrho_3}. Letting party $A$ act first on $W_{(A)} \otimes \rho$ and perform the operation described by the CJ operator $M_A = \ketbra{\Phi^+}^{A_{IO}/A_{I''}} \otimes \id^{A_{I'}} \ge 0$, we find, using Lemma~\ref{lemma:teleportation} (with $\tilde\rho = \Tr_{A_{I''}B_{I'}}[\rho]$), that the remaining parties $B,C$ are left with the conditional process matrix
\begin{align}
(W_{(A)} \otimes \rho)_{|M_A = \ketbra{\Phi^+}^{A_{IO}/A_{I''}} \otimes \id^{A_{I'}}} \qquad\qquad \notag \\
\propto \ W_{(A)}^{A_{IO} \to B_{I'}} \otimes \Tr_{A_{I'I''}B_I'}[\rho]\,.
\end{align}
By assumption this conditional process matrix---and therefore $W_{(A)}^{A_{IO} \to B_{I'}}$ itself (according to Proposition~\ref{prop:traceout}, after tracing out $\Tr_{A_{I'I''}B_I'}[\rho]$ completely)---must be a (bipartite) causally separable process matrix: there must therefore exist a decomposition of the form
\begin{equation}
W_{(A)}^{A_{IO} \to B_{I'}} = W_{(A,B,C)}^{A_{IO} \to B_{I'}} + W_{(A,C,B)}^{A_{IO} \to B_{I'}}
\end{equation}
where $W_{(A,B,C)}^{A_{IO} \to B_{I'}}, W_{(A,C,B)}^{A_{IO} \to B_{I'}} \in B_{II'O}C_{IO}$ are bipartite processes compatible with the causal orders $B \prec C$ and $C \prec B$, respectively. After re-attributing the system $B_{I'}$ to $A_{IO}$, we obtain a decomposition for $W_{(A)}$,
\begin{equation}
W_{(A)} = W_{(A,B,C)} + W_{(A,C,B)},
\end{equation}
with the positive semidefinite matrices $W_{(A,B,C)}$, $W_{(A,C,B)} \in A_{IO}B_{IO}C_{IO}$ satisfying the following constraints, obtained (as we did in the previous subsection) after replacing $B_{I'}$ by $A_{IO}$ in the constraints satisfied by $W_{(A,B,C)}^{A_{IO} \to B_{I'}}$ and $W_{(A,C,B)}^{A_{IO} \to B_{I'}}$:
\begin{align}
 {}_{[1-B_O]C_{IO}} W_{(A,B,C)} = {}_{[1-C_O]} W_{(A,B,C)} &= 0 \,, \notag \\
 {}_{[1-C_O]A_{IO}B_{IO}} W_{(A,C,B)} = {}_{[1-B_O]} W_{(A,C,B)} &= 0 \,. \label{eq:WABC_WACB_constr}
\end{align}

Furthermore, since $W_{(A)}$ is compatible with $A$ acting first it satisfies in particular ${}_{[1-C_O]B_{IO}} W_{(A)} = 0$ (see Eq.~\eqref{eq:constr_L_k_first_app}), and because of Eq.~\eqref{eq:WABC_WACB_constr}, we also have ${}_{[1-C_O]B_{IO}} W_{(A,B,C)} = 0$. 
Given that $W_{(A,C,B)} = W_{(A)} - W_{(A,B,C)}$, we have ${}_{[1-C_O]B_{IO}} W_{(A,C,B)} = 0$ as well.%
\footnote{Note that this is the step where the tripartite proof does not generalise straightforwardly to $N \ge 4$ parties. 
In particular, we cannot use the same argument to prove that the constraints~\eqref{eq:explicit_NC_4partite} that appear in our necessary condition are satisfied without tracing out the $X_{IO}$ on the first and fourth lines (as one would need if the terms in Eq.~\eqref{eq:explicit_NC_4partite_2} were to satisfy Eq.~\eqref{eq:decomp_CS_suff_cond} and thus specify a decomposition satisfying also our sufficient condition for causal separability). 
One indeed obtains e.g.\ ${}_{[1-Z_O]Y_{IO}T_{IO}}W_{(X,Z)}^{^{\scriptscriptstyle [X \to Y]}} = {}_{[1-Z_O]Y_{IO}T_{IO}}\big(W_{(X)} - W_{(X,Y,Z,T)}^{^{\scriptscriptstyle [X \to Y]}} - W_{(X,Y,T,Z)}^{^{\scriptscriptstyle [X \to Y]}} - W_{(X,T,Y,Z)}^{^{\scriptscriptstyle [X \to Y]}} - W_{(X,T,Z,Y)}^{^{\scriptscriptstyle [X \to Y]}}\big) = -\big({}_{[1-Z_O]Y_{IO}T_{IO}}W_{(X,T,Z,Y)}^{^{\scriptscriptstyle [X \to Y]}}\big)$ 
which, \emph{a priori}, may still be nonzero.}
Together with Eq.~\eqref{eq:WABC_WACB_constr}, we thus find that all constraints of Eq.~\eqref{eq:decomp_CS_W_3_constrWXYZ} for $X=A$ are satisfied. One can similarly show that they are satisfied for $X = B,C$, which proves (since we noted before that Eq.~\eqref{eq:decomp_CS_W_3_constrWX} is also satisfied) that the decomposition of Proposition~\ref{prop:CS_charact_3} is indeed a necessary condition for any causally separable process matrix $W$.

\medskip

Conversely, suppose a matrix $W \in A_{IO} \otimes B_{IO} \otimes C_{IO}$ has a decomposition of the form~\eqref{eq:decomp_CS_W_3} that satisfies Eqs.~\eqref{eq:decomp_CS_W_3_constrWX}--\eqref{eq:decomp_CS_W_3_constrWXYZ}. Then as we noted right after Proposition~\ref{prop:CS_charact_3}, each term $W_{(X)}$ is a valid process matrix, compatible with party $X$ acting first.
For any CP map $M_X$ applied by party $X$ on its share of $W_{(X)}$, the resulting conditional process matrix for the other two parties $Y,Z$ is
\begin{align}
\hspace{-1mm} (W_{(X)})_{|M_X} \! &= \Tr_X[M_X \otimes \id^{Y_{IO}Z_{IO}} \cdot W_{(X)}] \notag \\
&= \Tr_X[M_X \!\otimes\! \id^{Y_{IO}Z_{IO}} \!\cdot\! (W_{(X,Y,Z)} \!+\! W_{(X,Z,Y)})] \notag \\
&= (W_{(X,Y,Z)})_{|M_X} + (W_{(X,Z,Y)})_{|M_X}
\end{align}
with $(W_{(X,Y,Z)})_{|M_X} \ge 0$ satisfying
\begin{align}
& {}_{[1-Y_O]Z_{IO}}[(W_{(X,Y,Z)})_{|M_X}] = [{}_{[1-Y_O]Z_{IO}}W_{(X,Y,Z)}]_{|M_X} = 0, \notag \\
& {}_{[1-Z_O]}[(W_{(X,Y,Z)})_{|M_X}] = [{}_{[1-Z_O]}W_{(X,Y,Z)}]_{|M_X} = 0
\end{align}
(and similarly for $(W_{(X,Z,Y)})_{|M_X}$), as follows from Eq.~\eqref{eq:decomp_CS_W_3_constrWXYZ}. This shows that $(W_{(X,Y,Z)})_{|M_X}$ and $(W_{(X,Z,Y)})_{|M_X}$ are valid bipartite process matrices compatible with the orders $Y \prec Z$ and $Z \prec Y$, respectively, so that $(W_{(X)})_{|M_X}$ is causally separable.
Note that for any ancillary state $\rho$, $W \otimes \rho$ also has a decomposition as in Eq.~\eqref{eq:decomp_CS_W_3}, obtained simply by attaching the ancillary state to every individual term in the decomposition of $W$. Therefore, the same reasoning as above applies, which implies that $W$ is causally separable.  
This thus shows that the decomposition of Proposition~\ref{prop:CS_charact_3} is also a sufficient condition for a matrix $W$ to represent a causally separable process matrix, which concludes the proof of that proposition.
\end{proof}

Let us mention here that Proposition~\ref{prop:CS_charact_3_no_CO}, for the particular tripartite case where $d_{C_O}=1$, could also be obtained as a corollary of the general tripartite case considered by Proposition~\ref{prop:CS_charact_3}.
Indeed, in the case where $d_{C_O} = 1$ the matrices $W_{(A)}$ and $W_{(B)}$ in Eq.~\eqref{eq:decomp_CS_W_3} are compatible with the fixed causal orders $A \prec B \prec C$ and $B \prec A \prec C$, respectively (as they are compatible with both $A$ or $B$ first, and $C$ last); furthermore, the matrix $W_{(C,A,B)}$ satisfies ${}_{[1-A_O]B_{IO}}W_{(C,A,B)} = {}_{[1-B_O]}W_{(C,A,B)} = 0$ and therefore ${}_{[1-A_O]B_{IO}C_I}W_{(C,A,B)} = {}_{[1-B_O]C_I}W_{(C,A,B)} = 0$, which implies that it is also compatible with $A \prec B \prec C$; and similarly, the matrix $W_{(C,B,A)}$ is also compatible with $B \prec A \prec C$. The decomposition of Eq.~\eqref{eq:decomp_CS_W_3} thus provides a decomposition in the form $W = \tilde W_{(A,B,C)} + \tilde W_{(B,A,C)}$ with $\tilde W_{(A,B,C)} = W_{(A)} + W_{(C,A,B)}$ and $\tilde W_{(B,A,C)} = W_{(B)} + W_{(C,B,A)}$ satisfying the constraints of Eq.~\eqref{eq:decomp_CS_W_3_no_CO_constr}.

\subsubsection{Particular tripartite case with $d_{A_I} = 1$}
\label{app:charact_CS_3_dAI1}

Another particular tripartite case of interest is one where one party, say now $A$, has no \emph{incoming} space (or a trivial one, with $d_{A_I} = 1$). The following characterisation is also obtained as a corollary of the general tripartite case above.

\begin{proposition}[Characterisation of tripartite causally separable process matrices with $d_{A_I} = 1$] \label{prop:CS_charact_3_no_AI}
In a tripartite scenario where party $A$ has no incoming system, a matrix $W \in A_{O} \otimes B_{IO} \otimes C_{IO}$ is a valid tripartite causally separable process matrix (as per Definition~\ref{def:our_def-CS}) if and only if
\begin{equation}
{}_{[1-A_O]B_{IO}C_{IO}}W = 0 \label{eq:decomp_CS_W_3_no_AI_constr0A}
\end{equation}
and $W$ can be decomposed as
\begin{equation}
W = W_{(A,B,C)} + W_{(A,C,B)} \label{eq:decomp_CS_W_3_no_AI}
\end{equation}
where, for each permutation $(X,Y)$ of the two parties $B$ and $C$, $W_{(A,X,Y)}$ is a positive semidefinite matrix satisfying
\begin{equation}
{}_{[1-X_O]Y_{IO}} W_{(A,X,Y)} = 0 , \ {}_{[1-Y_O]}W_{(A,X,Y)} = 0 \,. \label{eq:decomp_CS_W_3_no_AI_constr}
\end{equation}
\end{proposition}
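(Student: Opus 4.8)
The plan is to derive this characterisation as a corollary of the general tripartite result, Proposition~\ref{prop:CS_charact_3}, exploiting two features of the $d_{A_I}=1$ scenario: that $A_{IO}=A_O$, so that the map $W\mapsto{}_{A_O}W$ acts as the identity on any operator already proportional to $\id^{A_O}$, and that every valid process matrix is then automatically compatible with $A$ acting first (Eq.~\eqref{eq:LN_dAIf_1}). The latter immediately yields ${}_{[1-A_O]B_{IO}C_{IO}}W=0$, i.e.\ Eq.~\eqref{eq:decomp_CS_W_3_no_AI_constr0A}, for any valid $W$; this condition therefore does not discriminate between causally separable and nonseparable matrices here and only needs to be carried along.

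For sufficiency I would simply exhibit a decomposition of the form of Eq.~\eqref{eq:decomp_CS_W_3}. Given $W=W_{(A,B,C)}+W_{(A,C,B)}$ obeying Eqs.~\eqref{eq:decomp_CS_W_3_no_AI_constr0A}--\eqref{eq:decomp_CS_W_3_no_AI_constr}, I set $W_{(A)}\coloneqq W$, with its two summands serving as the base terms for the orders $A\prec B\prec C$ and $A\prec C\prec B$, and $W_{(B)}\coloneqq W_{(C)}\coloneqq 0$. One then checks directly that Eq.~\eqref{eq:decomp_CS_W_3_constrWX} reduces to Eq.~\eqref{eq:decomp_CS_W_3_no_AI_constr0A} and that Eq.~\eqref{eq:decomp_CS_W_3_constrWXYZ} reduces to Eq.~\eqref{eq:decomp_CS_W_3_no_AI_constr}, so Proposition~\ref{prop:CS_charact_3} guarantees that $W$ is causally separable.

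The substantive direction is necessity. Here I would apply Proposition~\ref{prop:CS_charact_3} to write $W=W_{(A)}+W_{(B)}+W_{(C)}$ with $W_{(A)}=V_{(A,B,C)}+V_{(A,C,B)}$ and the constraints of Eqs.~\eqref{eq:decomp_CS_W_3_constrWX}--\eqref{eq:decomp_CS_W_3_constrWXYZ}, and then show that the two ``wrong-order'' terms $W_{(B)}$ and $W_{(C)}$ can be absorbed into the $A$-first terms. The main obstacle — and the only place where $d_{A_I}=1$ is genuinely used — is to prove that $W_{(B)}$ already satisfies the $(B,C)$ constraints ${}_{[1-B_O]C_{IO}}W_{(B)}=0$ and ${}_{[1-C_O]}W_{(B)}=0$ of Eq.~\eqref{eq:decomp_CS_W_3_no_AI_constr}, even though Proposition~\ref{prop:CS_charact_3} only grants it compatibility with $B$ first together with the weaker per-term constraints on $W_{(B,A,C)}$ and $W_{(B,C,A)}$. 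The constraint ${}_{[1-A_O]}W_{(B,C,A)}=0$ forces $W_{(B,C,A)}\propto\id^{A_O}$, so ${}_{A_O}$ is the identity on it and ${}_{[1-C_O]A_{IO}}W_{(B,C,A)}=0$ collapses to ${}_{[1-C_O]}W_{(B,C,A)}=0$; combined with ${}_{[1-C_O]}W_{(B,A,C)}=0$ this gives ${}_{[1-C_O]}W_{(B)}=0$. Likewise, since ${}_{[1-A_O]C_{IO}}W_{(B,A,C)}=0$ makes ${}_{C_{IO}}W_{(B,A,C)}$ proportional to $\id^{A_O}$, the first-party constraint ${}_{[1-B_O]A_{IO}C_{IO}}W_{(B)}=0$ (using $A_{IO}=A_O$) collapses term by term to ${}_{[1-B_O]C_{IO}}W_{(B)}=0$. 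By the symmetric argument, $W_{(C)}$ satisfies the $(C,B)$ constraints ${}_{[1-C_O]B_{IO}}W_{(C)}=0$ and ${}_{[1-B_O]}W_{(C)}=0$.

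Finally I would conclude by defining the two terms of Eq.~\eqref{eq:decomp_CS_W_3_no_AI} as $W_{(A,B,C)}\coloneqq V_{(A,B,C)}+W_{(B)}$ and $W_{(A,C,B)}\coloneqq V_{(A,C,B)}+W_{(C)}$. Both are positive semidefinite and sum to $W_{(A)}+W_{(B)}+W_{(C)}=W$; moreover each summand of $W_{(A,B,C)}$ obeys the $(B,C)$ constraints (the base term $V_{(A,B,C)}$ by Eq.~\eqref{eq:decomp_CS_W_3_constrWXYZ}, and $W_{(B)}$ by the step above) while each summand of $W_{(A,C,B)}$ obeys the $(C,B)$ constraints, so Eq.~\eqref{eq:decomp_CS_W_3_no_AI_constr} holds. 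Together with Eq.~\eqref{eq:decomp_CS_W_3_no_AI_constr0A}, already established from validity, this completes the reduction and hence the proof of Proposition~\ref{prop:CS_charact_3_no_AI}.
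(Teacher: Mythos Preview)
Your proof is correct and follows essentially the same route as the paper's: both derive the result from Proposition~\ref{prop:CS_charact_3}, establish that $W_{(B)}$ and $W_{(C)}$ already satisfy the $(B,C)$- and $(C,B)$-constraints of Eq.~\eqref{eq:decomp_CS_W_3_no_AI_constr} by exploiting that ${}_{[1-A_O]}W_{(B,C,A)}=0$ and ${}_{[1-A_O]C_{IO}}W_{(B,A,C)}=0$ force the relevant pieces to be proportional to $\id^{A_O}$, and then absorb $W_{(B)}$ and $W_{(C)}$ into the $A$-first base terms. The only cosmetic difference is that you obtain ${}_{[1-A_O]B_{IO}C_{IO}}W=0$ directly from Eq.~\eqref{eq:LN_dAIf_1}, while the paper derives it by showing it holds for each $W_{(X)}$ separately and summing.
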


Note already that contrary to the decomposition of Proposition~\ref{prop:CS_charact_3_no_CO}, the two summands $W_{(A,B,C)}$ and $W_{(A,C,B)}$ above are not necessarily valid process matrices: indeed, they are not required to satisfy ${}_{[1-A_O]B_{IO}C_{IO}}W_{(A,X,Y)} = 0$ (only their sum must satisfy Eq.~\eqref{eq:decomp_CS_W_3_no_AI_constr0A}). This allows for dynamical causal orders, where $A$ (incoherently) controls the causal order between the next parties $B$ and $C$.

\begin{proof}
According to Proposition~\ref{prop:CS_charact_4_no_AI_no_DO}, a causally separable process matrix $W \in A_{O} \otimes B_{IO} \otimes C_{IO}$ must have a decomposition of the form~\eqref{eq:decomp_CS_W_3} that satisfies the constraints~\eqref{eq:decomp_CS_W_3_constrWX}--\eqref{eq:decomp_CS_W_3_constrWXYZ}.

In particular, the constraints ${}_{[1-A_O]}W_{(B,C,A)}={}_{[1-C_O]A_O}W_{(B,C,A)}=0$ imply ${}_{[1-C_O]}W_{(B,C,A)}=0$. The constraints on $W_{(B,A,C)}$ and $W_{(B,C,A)}$ in turn imply that ${}_{[1-A_O]C_{IO}}W_{(B)}={}_{[1-C_O]}W_{(B)}=0$, which, together with ${}_{[1-B_O]A_OC_{IO}}W_{(B)}=0$, further imply that ${}_{[1-A_O]B_{IO}C_{IO}}W_{(B)}={}_{[1-B_O]C_{IO}}W_{(B)}=0$. Similarly, one also has ${}_{[1-A_O]B_{IO}C_{IO}}W_{(C)}={}_{[1-C_O]B_{IO}}W_{(C)}={}_{[1-B_O]}W_{(C)}=0$.

It then follows, since $W=W_{(A)}+W_{(B)}+W_{(C)}$ and ${}_{[1-A_O]B_{IO}C_{IO}}W_{(A)}=0$, that ${}_{[1-A_O]B_{IO}C_{IO}}W=0$ as well; furthermore, by defining $\tilde W_{(A,B,C)} = W_{(A,B,C)} + W_{(B)}$ and $\tilde W_{(A,C,B)} = W_{(A,C,B)} + W_{(C)}$, we obtain the decomposition $W = \tilde W_{(A,B,C)} + \tilde W_{(A,C,B)}$ of the form~\eqref{eq:decomp_CS_W_3_no_AI}, with $\tilde W_{(A,B,C)}$ and $\tilde W_{(A,C,B)}$ satisfying the constraints~\eqref{eq:decomp_CS_W_3_no_AI_constr}.

\medskip

Conversely, it is clear that the decomposition of Proposition~\ref{prop:CS_charact_3_no_AI} is a particular case of that of Proposition~\ref{prop:CS_charact_3} (with $W_{(B)} = W_{(C)} = 0$), so that any process matrix that can be decomposed as in Eq.~\eqref{eq:decomp_CS_W_3_no_AI} and satisfies Eqs.~\eqref{eq:decomp_CS_W_3_no_AI_constr0A} and~\eqref{eq:decomp_CS_W_3_no_AI_constr} is causally separable according to Proposition~\ref{prop:CS_charact_3}.
\end{proof}

\subsubsection{Allowed and forbidden terms in a Hilbert–Schmidt basis decomposition}

We note that Ref.~\cite{oreshkov16} already provided a characterisation of general tripartite causally separable process matrices---or ``extensibly'' causally separable process matrices, in their terminology. Let us prove here the equivalence with our own characterisation (Proposition~\ref{prop:CS_charact_3}) explicitly.

According to Proposition~3.3 in Ref.~\cite{oreshkov16}, every tripartite (extensibly) causally separable process matrix $W \in A_{IO} \otimes B_{IO} \otimes C_{IO}$ can be written in the form 
\begin{equation}\label{eq:dec_ECS_tripartite_oreshkov16}
W = W_{(A)} + W_{(B)} + W_{(C)} 
\end{equation}
where each $W_{(X)}$ contains only Hilbert-Schmidt terms (see Appendix~\ref{app:allowed_forbidden_terms}) that are allowed in a process matrix compatible with party $X$ acting first as per Proposition~3.2 in~\cite{oreshkov16}---i.e., $W_{(X)} \in \L^{X \prec \{Y,Z\}}$ in our language---and has the form
\begin{equation}\label{eq:dec_ECS_tripartite_oreshkov16_2}
W_{(X)} = \Omega_{(X,Y,Z)} \otimes \id^{Z_O} + \Omega_{(X,Z,Y)} \otimes \id^{Y_O}
\end{equation}
where $\Omega_{(X,Y,Z)} \in X_{IO} \otimes Y_{IO} \otimes Z_{I}$ and $\Omega_{(X,Z,Y)} \in X_{IO} \otimes Y_{I} \otimes Z_{IO}$ are positive semidefinite. (We changed here the notations of Ref.~\cite{oreshkov16} to match ours; note in particular that unlike in~\cite{oreshkov16}, we again ignore the normalisation constraints in the decomposition of $W$.)

Any such process matrix can thus be decomposed as in Proposition~\ref{prop:CS_charact_3}, with $W_{(X,Y,Z)} = \Omega_{(X,Y,Z)} \otimes \id^{Z_O} \ge 0$ (which also implies $W_{(X)} = W_{(X,Y,Z)} + W_{(X,Z,Y)} \ge 0$). 
As $W_{(X)} \in \L^{X \prec \{Y,Z\}}$, it satisfies in particular ${}_{[1-X_O]Y_{IO}Z_{IO}}W_{(X)} = 0$ (see Eq.~\eqref{eq:constr_L_k_first_app}), as in Eq.~\eqref{eq:decomp_CS_W_3_constrWX}. It is furthermore immediate to see that each $W_{(X,Y,Z)} = \Omega_{(X,Y,Z)} \otimes \id^{Z_O}$ satisfies the second constraint in~\eqref{eq:decomp_CS_W_3_constrWXYZ}, i.e., ${}_{[1-Z_O]}W_{(X,Y,Z)} = 0$. Finally, one has ${}_{[1-Y_O]Z_{IO}}W_{(X)} = 0$ (see again Eq.~\eqref{eq:constr_L_k_first_app}) and ${}_{[1-Y_O]}W_{(X,Z,Y)} = 0$, which implies ${}_{[1-Y_O]Z_{IO}}W_{(X,Y,Z)} = {}_{[1-Y_O]Z_{IO}}(W_{(X)}-W_{(X,Z,Y)}) = 0$, i.e., the first constraint in Eq.~\eqref{eq:decomp_CS_W_3_constrWXYZ}. Thus, any process matrix that satisfies the characterisation of Proposition 3.3 from Ref.~\cite{oreshkov16} also satisfies that of our Proposition~\ref{prop:CS_charact_3}.
Conversely, let $W$ be a process matrix that has a decomposition as in Proposition~\ref{prop:CS_charact_3}. As discussed after that proposition, the conditions~\eqref{eq:decomp_CS_W_3_constrWX}--\eqref{eq:decomp_CS_W_3_constrWXYZ} imply that $W_{(X)} \in \L^{X \prec \{Y,Z\}}$.
Furthermore, $W_{(X,Y,Z)} \ge 0$ and ${}_{[1-Z_O]}W_{(X,Y,Z)} = 0$ implies that $W_{(X,Y,Z)}$ is of the form $\Omega_{(X,Y,Z)} \otimes \id^{Z_O}$ with $\Omega_{(X,Y,Z)} \ge 0$, so that $W_{(X)} = W_{(X,Y,Z)} + W_{(X,Z,Y)}$ is of the form~\eqref{eq:dec_ECS_tripartite_oreshkov16_2}.
This indeed establishes the equivalence of Proposition~3.3 in Ref.~\cite{oreshkov16} with our Proposition~\ref{prop:CS_charact_3}.

\medskip

As emphasised before, the matrices $W_{(X,Y,Z)} = \Omega_{(X,Y,Z)} \otimes \id^{Z_O}$ need not be valid process matrices (the only requirement is that $\Omega_{(X,Y,Z)} \ge 0$). Both individual summands in Eq.~\eqref{eq:dec_ECS_tripartite_oreshkov16_2} can thus contain terms that are forbidden in a process matrix compatible with party $X$ acting first, as long as these terms cancel out in the sum. More precisely, in addition to the terms that are allowed in a process matrix with $X$ first, $W_{(X,Y,Z)}$ and $W_{(X,Z,Y)}$ can contain terms of the form $\sigma_\mu^{X_I} \sigma_\nu^{X_O} \id^{Y_{IO}Z_{IO}}$ with $\sigma_\nu^{X_O} \neq \id^{X_O}$ (i.e., $\nu \ge 1$). Any other term that is forbidden in a process matrix with $X$ first has a nontrivial $\sigma$ operator on either $Y_O$ or $Z_O$, and thus cannot appear in $\Omega_{(X,Y,Z)}$ or $\Omega_{(X,Z,Y)}$, and cannot be cancelled out in Eq.~\eqref{eq:dec_ECS_tripartite_oreshkov16_2}.
In the explicit example of Eq.~\eqref{eq:TrD_Wswicth} given at the end of Sec.~\ref{subsec:examples}, for instance, on can check indeed that $W_{(A,B,C)}$ and $W_{(A,C,B)}$ contain the Hilbert-Schmidt term $\hat{\textsc{z}}^{A_O^c}\id^{B_{IO}^tC_{IO}^t}$, which come with opposite signs and cancel out in the sum.

\subsection{General multipartite causally separable process matrices}
\label{app:charact_CS_N}

\subsubsection{Necessary condition for causal separability}
\label{app:charact_CS_N_necessary}

Let us now prove the necessary condition for general multipartite causal separability given by Proposition~\ref{prop:CS_necessary}.

\begin{proof}
Consider an $N$-partite causally separable process matrix $W \in A_{IO}^\N$. Let us introduce now, for each party $A_k$, an extension $\bigotimes_{k' \in \N \backslash k} \big(A_{I_{(\gets k')}'}^k\otimes A_{I_{(\to k')}''}^k\big)$ of their incoming space, with dimensions $d_{A_{I_{(\gets k')}'}^k} = d_{A_{IO}^{k'}}$ and $d_{A_{I_{(\to k')}''}^k} = d_{A_{IO}^k}$, and let us attach to $W$ the state $\rho = \bigotimes_{k,k' \in \N, k\neq k'} \big( \ketbra{\Phi^+}^{A_{I_{(\to k')}''}^k/A_{I_{(\gets k)}'}^{k'}} \otimes \ketbra{\Phi^+}^{A_{I_{(\to k)}''}^{k'}/A_{I_{(\gets k')}'}^k} \big)$---i.e., we provide each pair of parties with two maximally entangled states, which will allow us to use the teleportation technique in either direction.

According to Definition~\ref{def:our_def-CS} and Proposition~\ref{prop:factor_rho}, $W$ must be decomposable as
\begin{equation}
W = \sum_{k \in \N} W_{(k)} \,, \label{eq:decomp_Wrho_N}
\end{equation}
where each term $W_{(k)}$ is a process matrix compatible with party $A_k$ acting first, and such that whatever that party does on $W_{(k)} \otimes \rho$, the resulting conditional process matrix for the other $N-1$ parties is causally separable.

Consider, for a given $k \in \N$, the process matrix $W_{(k)} \otimes \rho$, and let party $A_k$ perform, for a given $k' \neq k$, the CP map $M_k = \ketbra{\Phi^+}^{A_{IO}^k/A_{I_{(\to k')}''}^k} \otimes \id^{A_{I_\text{rest}'}^k} \ge 0$, with $A_{I_\text{rest}'}^k = A_{I_{(\gets k')}'}^k \bigotimes_{k'' \in \N \backslash \{k,k'\}} \big(A_{I_{(\gets k'')}'}^k\otimes A_{I_{(\to k'')}''}^k\big)$.
The resulting conditional process matrix for the other $N-1$ parties is then, according to Lemma~\ref{lemma:teleportation} (with $\tilde\rho = \Tr_{A_{I_{(\to k')}''}^k A_{I_{(\gets k)}'}^{k'}} [\rho]$),
\begin{align}
(W_{(k)} \otimes \rho)_{|M_k} \, \propto \, W_{(k)}^{A_{IO}^k \to A_{I_{(\gets k)}'}^{k'}} \otimes \Tr_k \Tr_{A_{I_{(\gets k)}'}^{k'}}\![\rho]\,.
\end{align}
As this conditional process matrix must be causally separable, it then follows from Proposition~\ref{prop:traceout} that $W_{(k)}^{A_{IO}^k \to A_{I_{(\gets k)}'}^{k'}}$ itself must be causally separable, which concludes the proof of Proposition~\ref{prop:CS_necessary} (where $W_{(k)}^{A_{IO}^k \to A_{I_{(\gets k)}'}^{k'}}$ was simply denoted $W_{(k)}^{A_{IO}^k \to A_{I'}^{k'}}$).
\end{proof}

\subsubsection{Sufficient condition for causal separability}
\label{app:charact_CS_N_sufficient}

Here we shall prove the sufficient condition for general multipartite causal separability given by Proposition~\ref{prop:CS_sufficient}.
Let us however first prove the claim that was made just after that proposition, namely that if Eq.~\eqref{eq:decomp_CS_suff_cond} is satisfied for all $(k_1, \ldots, k_n)$, then one also has, for all $(k_1, \ldots, k_n)$ with $1 \le n < N$, that
\begin{align}
& \forall \ \X \subseteq \N \backslash \{k_1, \ldots, k_n\}, \X \neq \emptyset, \notag \\
& {}_{\prod_{i \in \X}[1-A_O^i]A_{IO}^{\N \backslash \{k_1, \ldots, k_n\} \backslash \X}}W_{(k_1, \ldots, k_n)} = 0. \label{eq:decomp_CS_suff_cond_2_app}
\end{align}

\begin{proof}
This can be seen (by induction) as follows. Assume that Eq.~\eqref{eq:decomp_CS_suff_cond} is satisfied for all $(k_1, \ldots, k_n)$.

Eq.~\eqref{eq:decomp_CS_suff_cond_2_app} is indeed satisfied for all $(k_1, \ldots, k_n)$ for $n = N-1$, as in that case $W_{(k_1, \ldots, k_{N-1})} = W_{(k_1, \ldots, k_{N-1},k_N)}$ satisfies ${}_{[1-A_O^{k_N}]}W_{(k_1, \ldots, k_{N-1})} = 0$ by assumption~\eqref{eq:decomp_CS_suff_cond}.

Suppose then that for a given value of $n \ge 2$, Eq.~\eqref{eq:decomp_CS_suff_cond_2_app} is satisfied for all $(k_1, \ldots, k_n)$. Consider a given ordered subset of parties $(k_1, \ldots, k_{n-1})$ and a given nonempty subset $\X \subseteq \N \backslash \{k_1, \ldots, k_{n-1}\}$, and define for ease of notation the linear function $f_{\{k_1, \ldots, k_{n-1}\}}^\X(W) \coloneqq {}_{\prod_{i \in \X}[1-A_O^i]A_{IO}^{\N \backslash \{k_1, \ldots, k_{n-1}\} \backslash \X}}W$.

Let then $k_n \in \N \backslash \{k_1, \ldots, k_{n-1}\}$. If $\X = \{k_n\}$ then $f_\X(W_{(k_1, \ldots, k_n)}) = {}_{[1-A_O^{k_n}]A_{IO}^{\N \backslash \{k_1, \ldots, k_{n-1}, k_n\}}}W_{(k_1, \ldots, k_n)} = 0$ according to Eq.~\eqref{eq:decomp_CS_suff_cond}.
If on the other hand $\X \neq \{k_n\}$, then depending on whether $k_n \in \X$ or $k_n \notin \X$, we have $f_\X(W_{(k_1, \ldots, k_n)}) = {}_{[1-A_O^{k_n}]}({}_{\prod_{i \in \X'}[1-A_O^i]A_{IO}^{\N \backslash \{k_1, \ldots, k_{n-1}, k_n\} \backslash \X'}}W_{(k_1, \ldots, k_n)})$ with $\X' = \X \backslash k_n$, or $f_\X(W_{(k_1, \ldots, k_n)}) = {}_{A_{IO}^{k_n}}({}_{\prod_{i \in \X}[1-A_O^i]A_{IO}^{\N \backslash \{k_1, \ldots, k_{n-1}, k_n\} \backslash \X}}W_{(k_1, \ldots, k_n)})$. In both cases, $\X^{(\prime)} \subseteq \N \backslash \{k_1, \ldots, k_n\}, \X^{(\prime)} \neq \emptyset$, so that by the induction hypothesis $f_\X(W_{(k_1, \ldots, k_n)}) = 0$, which thus holds for all $k_n \in \N \backslash \{k_1, \ldots, k_{n-1}\}$.
As $W_{(k_1, \ldots, k_{n-1})} = \sum_{k_n \in \N \backslash \{k_1, \ldots, k_{n-1}\}} W_{(k_1, \ldots, k_{n-1}, k_n)}$, we also have $f_\X(W_{(k_1, \ldots, k_{n-1})}) = 0$, which, by induction, concludes the proof of Eq.~\eqref{eq:decomp_CS_suff_cond_2_app} for all ordered subsets $(k_1, \ldots, k_n)$ with $1 \le n < N$.
\end{proof}

In particular for the case $n = 1$, Eq.~\eqref{eq:decomp_CS_suff_cond_2_app} together with Eq.~\eqref{eq:decomp_CS_suff_cond} imply that each matrix $W_{(k_1)}$ is a valid process matrix compatible with party $A_{k_1}$ acting first (see Eq.~\eqref{eq:constr_L_k_first_app}).

Let us now prove Proposition~\ref{prop:CS_sufficient} by induction.

\begin{proof}
Clearly, it trivially holds for $N = 1$ (in which case Eq.~\eqref{eq:decomp_CS_suff_cond} ensures in particular that $W$ is a valid process matrix). (Note also that for $N = 2$ and $3$, it reduces to the sufficient conditions of Propositions~\ref{prop:CS_charact_2} and~\ref{prop:CS_charact_3}, respectively.)

Suppose Proposition~\ref{prop:CS_sufficient} holds in the $(N{-}1)$-partite case, and consider a matrix $W \in A_{IO}^\N$ that can be decomposed as in Eq.~\eqref{eq:decomp_CS_W}, with all partial sums $W_{(k_1, \ldots, k_n)}$ satisfying Eq.~\eqref{eq:decomp_CS_suff_cond}. Then we have
\begin{equation}
W = \sum_{k_1 \in \N} W_{(k_1)}
\end{equation}
with (as noted above) each $W_{(k_1)}$ being a valid process matrix compatible with party $A_{k_1}$ acting first.

Consider a CP map $M_{k_1}$ applied by party $A_{k_1}$ on $W_{(k_1)} = \sum_{\pi \in \Pi_{(k_1)}} W_\pi$. The resulting conditional process matrix for the remaining $N-1$ parties is
\begin{align}
(W_{(k_1)})_{|M_{k_1}} \coloneqq& \Tr_{k_1} [M_{k_1} \otimes \id^{\N \backslash k_1} \cdot W_{(k_1)}] \notag \\
=& \sum_{\pi \in \Pi_{(k_1)}} (W_\pi)_{|M_{k_1}} \label{eq:decomp_Wk1_Mk1_v0}
\end{align}
with $(W_\pi)_{|M_{k_1}} \coloneqq \Tr_{k_1} [M_{k_1} \otimes \id^{\N \backslash k_1} \cdot W_\pi]$.
By denoting by $\Pi^{\N \backslash k_1}$ the set of permutations of $\N \backslash k_1$ (and by $\Pi^{\N \backslash k_1}_{(k_2,\ldots,k_n)}$ the set of those that start with $k_2,\ldots,k_n$), by writing any permutation $\pi$ of $\N$ that starts with $k_1$ as $\pi = (k_1,\pi')$ with $\pi' \in \Pi^{\N \backslash k_1}$, and by defining $[(W_{(k_1)})_{|M_{k_1}}]_{\pi'} \coloneqq (W_{(k_1,\pi')})_{|M_{k_1}}$, we can re-write Eq.~\eqref{eq:decomp_Wk1_Mk1_v0} as
\begin{align}
(W_{(k_1)})_{|M_{k_1}} = \sum_{\pi' \in \Pi^{\N \backslash k_1}} [(W_{(k_1)})_{|M_{k_1}}]_{\pi'}, \label{eq:decomp_Wk1_Mk1}
\end{align}
in a similar form to Eq.~\eqref{eq:decomp_CS_W}.
For $n = 2, \ldots, N$, and for any ordered subset of parties $(k_2, \ldots, k_n)$ of $\N \backslash \{k_1\}$, the partial sums
\begin{align}
[(W_{(k_1)})_{|M_{k_1}}]_{(k_2, \ldots, k_n)} \coloneqq& \sum_{\pi' \in \Pi^{\N \backslash k_1}_{(k_2,\ldots,k_n)}} [(W_{(k_1)})_{|M_{k_1}}]_{\pi'} \notag \\
=& \sum_{\pi \in \Pi_{(k_1,k_2,\ldots,k_n)}} (W_\pi)_{|M_{k_1}} \notag \\
=& \ (W_{(k_1,k_2, \ldots, k_n)})_{|M_{k_1}}
\end{align}
then satisfy
\begin{align}
{}_{[1-A_O^{k_n}]A_{IO}^{\N \backslash \{k_1\} \backslash \{k_2,\ldots,k_n\}}}[(W_{(k_1)})_{|M_{k_1}}]_{(k_2, \ldots, k_n)} \notag \\
= {}_{[1-A_O^{k_n}]A_{IO}^{\N \backslash \{k_1,k_2,\ldots,k_n\}}}[(W_{(k_1,k_2, \ldots, k_n)})_{|M_{k_1}}] & \notag \\
= ({}_{[1-A_O^{k_n}]A_{IO}^{\N \backslash \{k_1,k_2,\ldots,k_n\}}}W_{(k_1,k_2, \ldots, k_n)})_{|M_{k_1}} &= 0
\end{align}
by assumption~\eqref{eq:decomp_CS_suff_cond}.
Thus, Eq.~\eqref{eq:decomp_Wk1_Mk1} provides a decomposition of the $(N{-}1)$-partite process matrix $(W_{(k_1)})_{|M_{k_1}}$ of the same form as in Eq.~\eqref{eq:decomp_CS_W}, with positive semidefinite matrices $[(W_{(k_1)})_{|M_{k_1}}]_{\pi'}$ and with all partial sums satisfying the analogue constraints as those of Eq.~\eqref{eq:decomp_CS_suff_cond}. By the induction hypothesis, this implies that $(W_{(k_1)})_{|M_{k_1}}$ is causally separable.

Note that the exact same reasoning also goes through if instead of $W$ we consider $W \otimes \rho$ with any ancillary state $\rho$.
Indeed, $W \otimes \rho$ also has a decomposition as in Eq.~\eqref{eq:decomp_CS_W} obtained simply by attaching the $\rho$ to every individual term in the decomposition of $W$. This shows that $W$ is causally separable, and by induction this proves that the decomposition of Proposition~\ref{prop:CS_sufficient} is indeed a sufficient condition for $W$ to be causally separable.
\end{proof}

For clarity and to get some better intuition on how it generalises the characterisation of Proposition~\ref{prop:CS_charact_3} for the tripartite case, let us write the sufficient condition of Proposition~\ref{prop:CS_sufficient} explicitly in the fourpartite case:

\begin{widetext}
\begin{proposition}[Sufficient condition for fourpartite causally separable process matrices] \label{prop:CS_sufficient_4}
If a matrix $W \in A_{IO} \otimes B_{IO} \otimes C_{IO} \otimes D_{IO}$ can be decomposed as
\begin{equation}
\begin{array}{rcccccccccccc}
W \! &\!=\!& && W_{(A)} && &\!+\!& W_{(B)} &\!+\!& W_{(C)} &\!+\!& W_{(D)} \\
&& \multicolumn{5}{c}{\overbrace{\hspace{102mm}}} && \overbrace{\hspace{8mm}} && \overbrace{\hspace{8mm}} && \overbrace{\hspace{8mm}} \\[-1mm]
&\!=\!& W_{(A,B)} &\!+\!& W_{(A,C)} &\!+\!& W_{(A,D)} &\!+\!& \cdots &\!+\!& \cdots &\!+\!& \cdots \\
&& \overbrace{\hspace{38mm}} && \overbrace{\hspace{38mm}} && \overbrace{\hspace{38mm}} && \resizebox{2.5mm}{5pt}{$\overbrace{}$} \, \resizebox{2.5mm}{5pt}{$\overbrace{}$} \, \resizebox{2.5mm}{5pt}{$\overbrace{}$} && \resizebox{2.5mm}{5pt}{$\overbrace{}$} \, \resizebox{2.5mm}{5pt}{$\overbrace{}$} \, \resizebox{2.5mm}{5pt}{$\overbrace{}$} && \resizebox{2.5mm}{5pt}{$\overbrace{}$} \, \resizebox{2.5mm}{5pt}{$\overbrace{}$} \, \resizebox{2.5mm}{5pt}{$\overbrace{}$} \\[-1mm]
&\!=\!& W_{(A,B,C,D)} \!+\! W_{(A,B,D,C)} &\!+\!& W_{(A,C,B,D)} \!+\! W_{(A,C,D,B)} &\!+\!& W_{(A,D,B,C)} \!+\! W_{(A,D,C,B)} &\!+\!& \cdots &\!+\!& \cdots &\!+\!& \cdots
\end{array} \label{eq:decomp_CS_W_4}
\end{equation}
with, for each permutation of the four parties $(X,Y,Z,T)$, positive semidefinite matrices $W_{(X,Y,Z,T)}$, $W_{(X,Y)} \coloneqq W_{(X,Y,Z,T)} + W_{(X,Y,T,Z)}$ and $W_{(X)} \coloneqq W_{(X,Y)} + W_{(X,Z)} + W_{(X,T)}$ satisfying
\begin{gather}
{}_{[1-X_O]Y_{IO}Z_{IO}T_{IO}}W_{(X)} = 0 \,, \label{eq:decomp_CS_W_4_constrX} \\
{}_{[1-Y_O]Z_{IO}T_{IO}}W_{(X,Y)} = 0 \,, \label{eq:decomp_CS_W_4_constrXY} \\
{}_{[1-Z_O]T_{IO}}W_{(X,Y,Z,T)} = 0 \,, \quad {}_{[1-T_O]}W_{(X,Y,Z,T)} = 0 \,, \label{eq:decomp_CS_W_4_constrXYZT}
\end{gather}
then $W$ is a valid fourpartite causally separable process matrix (as per our Definition~\ref{def:our_def-CS}).
\end{proposition}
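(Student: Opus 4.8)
The plan is to recognise that Proposition~\ref{prop:CS_sufficient_4} is nothing but the $N=4$ instance of the general sufficient condition of Proposition~\ref{prop:CS_sufficient}: the constraints \eqref{eq:decomp_CS_W_4_constrX}, \eqref{eq:decomp_CS_W_4_constrXY} and \eqref{eq:decomp_CS_W_4_constrXYZT} are precisely Eq.~\eqref{eq:decomp_CS_suff_cond} evaluated for ordered subsets of length $n=1,2,3,4$ (noting that for $n=3$ and $n=4$ the partial sum $W_{(X,Y,Z)} = W_{(X,Y,Z,T)}$ contains only a single permutation term), so the statement follows at once. Since the general proposition is already established, the cleanest route is simply to invoke it. I would nonetheless record the self-contained recursive argument specialised to four parties, as this makes the layered structure of the decomposition \eqref{eq:decomp_CS_W_4} transparent.

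For the self-contained version I would first invoke the auxiliary lemma, Eq.~\eqref{eq:decomp_CS_suff_cond_2}, specialised to $N=4$: the stated constraints imply the full set of ``trace-out-and-replace'' conditions on each partial sum. Concretely, for each party $X$ this upgrades \eqref{eq:decomp_CS_W_4_constrX} to the complete set of conditions of Eq.~\eqref{eq:constr_L_k_first} for $A_k=X$, so that each $W_{(X)}\ (\ge 0)$ is a valid process matrix compatible with $X$ acting first. As $W=\sum_X W_{(X)}$ is then a sum of valid process matrices, $W$ is itself valid.

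The heart of the argument is the recursive step. Fixing a party $X$ and an arbitrary CP map $M_X \in A_{IO}^X$, I would write $(W_{(X)})_{|M_X} = \sum (W_{(X,Y,Z,T)})_{|M_X}$, a sum over permutations of the three remaining parties $\{Y,Z,T\}$. The key observation is that the conditioning map $W \mapsto W_{|M_X}$ commutes with every ${}_{[1-A_O^i]\cdots}$ operation acting only on the non-$X$ systems, i.e.\ $({}_{[\cdots]}W_{(X,Y,Z,T)})_{|M_X} = {}_{[\cdots]}[(W_{(X,Y,Z,T)})_{|M_X}]$, because $M_X$ is traced against $X$'s systems, which are disjoint from those appearing in the constraints. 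Hence the conditional terms inherit exactly the tripartite constraints of Proposition~\ref{prop:CS_charact_3} for the parties $Y,Z,T$, and $(W_{(X)})_{|M_X}$ is a tripartite causally separable process matrix. Together with the fact that attaching any ancillary state $\rho$ preserves the decomposition \eqref{eq:decomp_CS_W_4} term by term, Definition~\ref{def:our_def-CS} then yields causal separability, the base of the recursion ($N=3$) being Proposition~\ref{prop:CS_charact_3}.

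The only genuinely delicate points are the two that already arise in the general proof: deriving the extended constraints of Eq.~\eqref{eq:decomp_CS_suff_cond_2} from the stated ones, and verifying the commutation of conditioning with the trace-and-replace maps. Neither presents a real difficulty here, precisely because the systems acted on by $M_X$ are disjoint from those constrained. I therefore expect no substantive obstacle — the main ``work'' is purely bookkeeping over the $4!=24$ permutation terms, which is exactly what the general proof of Proposition~\ref{prop:CS_sufficient} dispatches once and for all.
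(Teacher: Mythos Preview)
Your proposal is correct and matches the paper's treatment exactly: Proposition~\ref{prop:CS_sufficient_4} is presented in the paper precisely as the $N=4$ instance of Proposition~\ref{prop:CS_sufficient}, with no separate proof given, and the self-contained recursive argument you sketch (deriving the extended constraints via Eq.~\eqref{eq:decomp_CS_suff_cond_2}, then conditioning on $M_X$ and landing on the tripartite characterisation of Proposition~\ref{prop:CS_charact_3}) is exactly the inductive step of the general proof specialised to $N=4$.
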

\end{widetext}

It follows from Eqs.~\eqref{eq:decomp_CS_W_4_constrXY}--\eqref{eq:decomp_CS_W_4_constrXYZT} that for each party $X$, $W_{(X)}$ also satisfies ${}_{[1-Y_O]Z_{IO}T_{IO}}W_{(X)} = {}_{[1-Y_O][1-Z_O]T_{IO}}W_{(X)} = {}_{[1-Y_O][1-Z_O][1-T_O]}W_{(X)} = 0$ for all $X \neq Y \neq Z \neq T$ (see Eq.~\eqref{eq:decomp_CS_suff_cond_2_app}). This, together with Eq.~\eqref{eq:decomp_CS_W_4_constrX} and the fact that $W_{(X)} \ge 0$, implies that $W_{(X)}$ is a valid process matrix, compatible with party $X$ acting first (see Eq.~\eqref{eq:constr_L_k_first_app}).

Similarly, it follows from Eq.~\eqref{eq:decomp_CS_W_4_constrXYZT} that for each pair of parties $X,Y$, $W_{(X,Y)}$ also satisfies ${}_{[1-Z_O]T_{IO}}W_{(X,Y)} = {}_{[1-Z_O][1-T_O]}W_{(X,Y)} = 0$ for all $X \neq Y \neq Z \neq T$. This, together with Eq.~\eqref{eq:decomp_CS_W_4_constrXY} and the fact that $W_{(X,Y)} \ge 0$, implies that whatever CP map $M_X$ party $X$ applies, the conditional process matrix $(W_{(X,Y)})_{|M_X}$ is a valid tripartite process matrix for parties $Y,Z,T$, compatible with party $Y$ acting first.

Finally, Eq.~\eqref{eq:decomp_CS_W_4_constrXYZT} implies that whatever CP maps $M_X$, $M_Y$ parties $X$ and $Y$ apply, the conditional matrix $(W_{(X,Y,Z,T)})_{|M_X \otimes M_Y}$ is a valid bipartite process matrix for parties $Z,T$, compatible with party $Z$ acting first.

\subsection{Fourpartite causally separable process matrices in the particular case with $d_{D_O} = 1$}
\label{app:particular_4partite}

Consider now a fourpartite situation where party $D$ has no outgoing system (or a trivial one, with $d_{D_O} = 1$). It turns out that in such a case our sufficient condition above is also necessary, and it simplifies as follows (note the similarity with Proposition~\ref{prop:CS_charact_3}).

\begin{widetext}
\begin{proposition}[Characterisation of fourpartite causally separable process matrices with $d_{D_O} = 1$] \label{prop:CS_charact_4_no_DO}
In a fourpartite scenario where party $D$ has no outgoing system, a matrix $W \in A_{IO} \otimes B_{IO} \otimes C_{IO} \otimes D_I$ is a valid fourpartite causally separable process matrix (as per Definition~\ref{def:our_def-CS}) if and only if it can be decomposed as
\begin{equation}
\begin{array}{rcccccc}
W &=& W_{(A)} &+& W_{(B)} &+& W_{(C)} \\[1mm]
&=& \overbrace{W_{(A,B,C,D)} + W_{(A,C,B,D)}} &+& \overbrace{W_{(B,A,C,D)} + W_{(B,C,A,D)}} &+& \overbrace{W_{(C,A,B,D)} + W_{(C,B,A,D)}}
\end{array} \label{eq:decomp_CS_W_4_no_DO}
\end{equation}
where, for each permutation $(X,Y,Z)$ of the three parties $A$, $B$ and $C$, $W_{(X,Y,Z,D)}$ and $W_{(X)} \coloneqq W_{(X,Y,Z,D)} + W_{(X,Z,Y,D)}$ are positive semidefinite matrices satisfying
\begin{gather}
 {}_{[1-X_O]Y_{IO}Z_{IO}D_I} W_{(X)} = 0 \,, \label{eq:decomp_CS_W_4_no_DO_constrWX} \\[1mm]
 {}_{[1-Y_O]Z_{IO}D_I} W_{(X,Y,Z,D)} = 0 \,, \quad {}_{[1-Z_O]D_I}W_{(X,Y,Z,D)} = 0 \,. \label{eq:decomp_CS_W_4_no_DO_constrWXYZD}
\end{gather}
\end{proposition}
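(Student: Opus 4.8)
The plan is to obtain both directions from the general conditions (Propositions~\ref{prop:CS_necessary} and~\ref{prop:CS_sufficient}) combined with the two tripartite characterisations (Propositions~\ref{prop:CS_charact_3} and~\ref{prop:CS_charact_3_no_CO}), the whole simplification resting on the fact that $d_{D_O}=1$ forces $D$ to act last. For \emph{sufficiency}, I would observe that the decomposition~\eqref{eq:decomp_CS_W_4_no_DO}--\eqref{eq:decomp_CS_W_4_no_DO_constrWXYZD} is precisely the general sufficient condition of Proposition~\ref{prop:CS_sufficient}, in its explicit fourpartite form (Proposition~\ref{prop:CS_sufficient_4}), with every permutation term $W_\pi$ that does \emph{not} place $D$ last set to zero. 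Keeping only the six surviving terms $W_{(X,Y,Z,D)}$, the partial sums $W_{(X)}$ and $W_{(X,Y)}$ reduce to those in~\eqref{eq:decomp_CS_W_4_no_DO}, and the constraints~\eqref{eq:decomp_CS_W_4_constrX}--\eqref{eq:decomp_CS_W_4_constrXYZT} collapse onto~\eqref{eq:decomp_CS_W_4_no_DO_constrWX}--\eqref{eq:decomp_CS_W_4_no_DO_constrWXYZD} after using $D_{IO}=D_I$; the only extra constraint, ${}_{[1-D_O]}W_{(X,Y,Z,D)}=0$, is automatic because $W\mapsto{}_{D_O}W$ is the identity when $d_{D_O}=1$. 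Hence any such $W$ is causally separable.

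For \emph{necessity}, I would start from Proposition~\ref{prop:CS_necessary}, giving $W=W_{(A)}+W_{(B)}+W_{(C)}+W_{(D)}$ with each $W_{(X)}\ge0$ compatible with $X$ first and each teleported matrix $W_{(X)}^{X_{IO}\to Y_{I'}}$ tripartite causally separable. For $X\in\{A,B,C\}$ I would teleport $X_{IO}$ onto a \emph{non}-$D$ party; the resulting tripartite process then lives on a set still containing $D$, which has a trivial outgoing system, so Proposition~\ref{prop:CS_charact_3_no_CO} applies and produces a two-term decomposition compatible with the two fixed orders ending in $D$. Re-attributing the teleported system yields $W_{(X)}=W_{(X,Y,Z,D)}+W_{(X,Z,Y,D)}$, where $W_{(X,Y,Z,D)}$ satisfies both constraints of~\eqref{eq:decomp_CS_W_4_no_DO_constrWXYZD} directly, but the analogous first constraint of the \emph{second} term comes out only in the weaker form ${}_{[1-Z_O]X_{IO}Y_{IO}D_I}W_{(X,Z,Y,D)}=0$, with $X_{IO}$ spuriously traced out.

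The crux is recovering the untraced constraint ${}_{[1-Z_O]Y_{IO}D_I}W_{(X,Z,Y,D)}=0$. This is exactly the step that fails for general $N\ge4$ (see the footnote to the proof of Proposition~\ref{prop:CS_charact_3}), but here it goes through because, $D$ being pinned last, $W_{(X)}$ is a sum of \emph{exactly two} terms. Concretely, compatibility of $W_{(X)}$ with $X$ first gives ${}_{[1-Z_O]Y_{IO}D_I}W_{(X)}=0$ (Eq.~\eqref{eq:constr_L_k_first_app} with $\X=\{Z\}$), while ${}_{[1-Z_O]D_I}W_{(X,Y,Z,D)}=0$ implies ${}_{[1-Z_O]Y_{IO}D_I}W_{(X,Y,Z,D)}=0$; subtracting and using $W_{(X,Z,Y,D)}=W_{(X)}-W_{(X,Y,Z,D)}$ delivers the required ${}_{[1-Z_O]Y_{IO}D_I}W_{(X,Z,Y,D)}=0$. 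I would then carry out the same reasoning for $X=B,C$.

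It remains to absorb the $D$-first term $W_{(D)}$, which I expect to be the most delicate bookkeeping. Teleporting $D_I$ onto party $A$ turns $W_{(D)}$ into a tripartite causally separable process on $\{A,B,C\}$, none of whose outputs is trivial, so I would apply the \emph{general} tripartite Proposition~\ref{prop:CS_charact_3} and teleport back to write $W_{(D)}=\tilde U_{(A)}+\tilde U_{(B)}+\tilde U_{(C)}$; each $\tilde U_{(X)}$ inherits compatibility with $X$ first (the constraints involving $D_O$ being trivial), and the same two-term strengthening gives it a $D$-last decomposition obeying~\eqref{eq:decomp_CS_W_4_no_DO_constrWXYZD}. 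Merging $\tilde U_{(X)}$ into $W_{(X)}$ for $X\in\{A,B,C\}$, and checking that this preserves positivity and all the (linear) strengthened constraints, finally produces the three-term decomposition~\eqref{eq:decomp_CS_W_4_no_DO} with~\eqref{eq:decomp_CS_W_4_no_DO_constrWX}--\eqref{eq:decomp_CS_W_4_no_DO_constrWXYZD}, completing the proof.
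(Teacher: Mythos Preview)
Your approach is essentially identical to the paper's: both directions rest on Propositions~\ref{prop:CS_necessary} and~\ref{prop:CS_sufficient}, with the $X\in\{A,B,C\}$ terms handled via teleportation to a non-$D$ party and Proposition~\ref{prop:CS_charact_3_no_CO} (plus the subtraction trick you describe), and $W_{(D)}$ via teleportation to $A$ and the general tripartite Proposition~\ref{prop:CS_charact_3}. One small simplification worth noting: for $W_{(D)}$ the paper uses Proposition~\ref{prop:CS_charact_3}'s full six-term decomposition directly, and after re-attributing $A_{I'}\to D_I$ (and applying ${}_{D_I}\cdot$ where needed) the constraints~\eqref{eq:decomp_CS_W_4_no_DO_constrWX}--\eqref{eq:decomp_CS_W_4_no_DO_constrWXYZD} already hold with no subtraction step required---so this piece is actually less delicate than you anticipate.
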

\end{widetext}

\begin{proof}
According to the necessary condition of Proposition~\ref{prop:CS_necessary}, a causally separable process matrix $W \in A_{IO} \otimes B_{IO} \otimes C_{IO} \otimes D_I$ must have a decomposition of the form $W = W_{(A)} + W_{(B)} + W_{(C)} + W_{(D)}$ where each $W_{(X)}$ is a process matrix compatible with $X$ first, such that for any $Y \neq X$, $W_{(X)}^{X_{IO} \to Y_{I'}}$ is causally separable.

Consider first $X=A$, and note already that as $W_{(A)}$ is compatible with $A$ first, one has, from Eq.~\eqref{eq:constr_L_k_first_app},
\begin{equation}
{}_{[1-A_O]B_{IO}C_{IO}D_I} W_{(A)} = 0.
\end{equation}
Taking now $Y=B$, we have that $W_{(A)}^{A_{IO} \to B_{I'}} \in B_{II'O} \otimes C_{IO} \otimes D_I$ is a tripartite causally separable process matrix in a scenario where one party ($D$) has no outgoing space. Using the characterisation of Proposition~\ref{prop:CS_charact_3_no_CO}, and re-attributing the system $B_{I'}$ back to $A_{IO}$ (as we did, e.g., in the proof of Proposition~\ref{prop:CS_charact_3_no_CO}), we obtain that $W_{(A)}$ must have a decomposition of the form
\begin{equation}
W_{(A)} = W_{(A,B,C,D)} + W_{(A,C,B,D)}
\end{equation}
with $W_{(A,B,C,D)}, W_{(A,C,B,D)} \ge 0$ satisfying
\begin{align}
{}_{[1-B_O]C_{IO}D_I} W_{(A,B,C,D)} = {}_{[1-C_O]D_I} W_{(A,B,C,D)} = 0, \notag \\
{}_{[1-C_O]A_{IO}B_{IO}D_I} W_{(A,C,B,D)} = {}_{[1-B_O]D_I} W_{(A,C,B,D)} = 0.
\end{align}
The first line further implies that ${}_{[1-C_O]B_{IO}D_I} W_{(A,B,C,D)} = 0$; noting that ${}_{[1-C_O]B_{IO}D_I} W_{(A)} = 0$ as well (as $W_{(A)}$ is compatible with $A$ first, see again Eq.~\eqref{eq:constr_L_k_first_app}) and that $W_{(A,C,B,D)} = W_{(A)} - W_{(A,B,C,D)}$, we also have
\begin{equation}
{}_{[1-C_O]B_{IO}D_I} W_{(A,C,B,D)} = 0.
\end{equation}
Hence, the term $W_{(A)}$ can be decomposed as in Eq.~\eqref{eq:decomp_CS_W_4_no_DO}, with the corresponding constraints being satisfied. The same holds, in a similar way, for the terms $W_{(B)}$ and $W_{(C)}$.

Consider now $W_{(D)}$. Taking e.g.\ $Y=A$, we have that $W_{(D)}^{D_I \to A_{I'}} \in A_{II'O} \otimes B_{IO} \otimes C_{IO}$ is a tripartite causally separable process matrix, which must have a decomposition as in Proposition~\ref{prop:CS_charact_3}. After relabelling the matrices $W_{(X,Y,Z)}$ from Eq.~\eqref{eq:decomp_CS_W_3} in the decomposition thus obtained to $W_{(X,Y,Z,D)}^{D_I \to A_{I'}}$, re-attributing the system $A_{I'}$ back to $D_I$ and applying the map ${}_{D_I}\cdot$ to all constraints of Eqs.~\eqref{eq:decomp_CS_W_3_constrWX}--\eqref{eq:decomp_CS_W_3_constrWXYZ}, we find that $W_{(D)}$ also has a decomposition as in Eq.~\eqref{eq:decomp_CS_W_4_no_DO} that satisfies the constraints~\eqref{eq:decomp_CS_W_4_no_DO_constrWX}--\eqref{eq:decomp_CS_W_4_no_DO_constrWXYZD}.

Altogether, $W$ is thus a combination of terms that have a decomposition as in Proposition~\ref{prop:CS_charact_4_no_DO}; combining these decompositions, it directly follows that $W$ itself has a decomposition of the form of Eq.~\eqref{eq:decomp_CS_W_4_no_DO} that satisfies the required constraints.

\medskip

Conversely, it is easy to see that if a matrix $W$ has a decomposition of the form~\eqref{eq:decomp_CS_W_4_no_DO}, then it is also of the form~\eqref{eq:decomp_CS_W_4} (where all terms $W_{(X,Y,Z,T)}$ with $D \neq T$ are $0$, and thus only the terms $W_{(X,Y,Z,D)} = W_{(X,Y)}$ remain). Furthermore, if the decomposition satisfies the constraints of Eqs.~\eqref{eq:decomp_CS_W_4_no_DO_constrWX}--\eqref{eq:decomp_CS_W_4_no_DO_constrWXYZD}, then it also satisfies those of Eqs.~\eqref{eq:decomp_CS_W_4_constrX}--\eqref{eq:decomp_CS_W_4_constrXYZT}. According to Proposition~\ref{prop:CS_sufficient}, this implies that such a process matrix $W$ is causally separable.
\end{proof}

One can further simplify the characterisation above in the particular fourpartite case where, in addition to one party ($D$) having no outgoing system, one also has a party ($A$) with no incoming system. We then obtain the following:

\begin{proposition}[Characterisation of fourpartite causally separable process matrices with $d_{A_I} = 1$ and $d_{D_O} = 1$] \label{prop:CS_charact_4_no_AI_no_DO}
In a fourpartite scenario where party $A$ has no incoming system and party $D$ has no outgoing system, a matrix $W \in A_O \otimes B_{IO} \otimes C_{IO} \otimes D_I$ is a valid fourpartite causally separable process matrix (as per Definition~\ref{def:our_def-CS}) if and only if
\begin{equation}
{}_{[1-A_O]B_{IO}C_{IO}D_I}W = 0 \label{eq:decomp_CS_W_4_no_AI_no_DO_constr0A}
\end{equation}
and $W$ can be decomposed as
\begin{equation}
W = W_{(A,B,C,D)} + W_{(A,C,B,D)} \label{eq:decomp_CS_W_4_no_AI_no_DO}
\end{equation}
where, for each permutation $(X,Y)$ of the two parties $B$ and $C$, $W_{(A,X,Y,D)}$ is a positive semidefinite matrix satisfying
\begin{equation}
{}_{[1-X_O]Y_{IO}D_I} W_{(A,X,Y,D)} = 0 , \ {}_{[1-Y_O]D_I}W_{(A,X,Y,D)} = 0 \,. \label{eq:decomp_CS_W_4_no_AI_no_DO_constr}
\end{equation}
\end{proposition}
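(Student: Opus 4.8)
The plan is to derive Proposition~\ref{prop:CS_charact_4_no_AI_no_DO} as a corollary of the characterisation with $d_{D_O}=1$ alone (Proposition~\ref{prop:CS_charact_4_no_DO}), exploiting the fact that the additional condition $d_{A_I}=1$ forces party $A$ always to act first (cf.\ Eq.~\eqref{eq:LN_dAIf_1}), so that the $B$- and $C$-first terms in the general decomposition can be reabsorbed into the two $A$-first ordered terms. This parallels exactly the way Proposition~\ref{prop:CS_charact_3_no_AI} was obtained from Proposition~\ref{prop:CS_charact_3} in the tripartite case.

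For \emph{necessity}, I would start from a causally separable $W$: since $d_{D_O}=1$, Proposition~\ref{prop:CS_charact_4_no_DO} gives $W=W_{(A)}+W_{(B)}+W_{(C)}$ with each $W_{(X)}=W_{(X,Y,Z,D)}+W_{(X,Z,Y,D)}$ obeying Eqs.~\eqref{eq:decomp_CS_W_4_no_DO_constrWX}--\eqref{eq:decomp_CS_W_4_no_DO_constrWXYZD}. Now I would invoke $d_{A_I}=1$, i.e.\ $A_{IO}=A_O$: by Eq.~\eqref{eq:LN_dAIf_1} every valid process matrix, in particular each of $W_{(B)}$, $W_{(C)}$ and $W$ itself, is compatible with $A$ acting first. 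Applied to $W$ this directly yields Eq.~\eqref{eq:decomp_CS_W_4_no_AI_no_DO_constr0A}; applied to $W_{(B)}$ it yields ${}_{[1-B_O]C_{IO}D_I}W_{(B)}=0$, and symmetrically ${}_{[1-C_O]B_{IO}D_I}W_{(C)}=0$.

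The key remaining step, which is the point where the argument must be carried out by hand, is to establish ${}_{[1-C_O]D_I}W_{(B)}=0$ (and its $B\leftrightarrow C$ mirror). For the summand $W_{(B,A,C,D)}$ this is already one of the constraints in Eq.~\eqref{eq:decomp_CS_W_4_no_DO_constrWXYZD}. For $W_{(B,C,A,D)}$ I would combine ${}_{[1-A_O]D_I}W_{(B,C,A,D)}=0$ with ${}_{[1-C_O]A_{IO}D_I}W_{(B,C,A,D)}=0$: applying ${}_{[1-C_O]}$ to the former and using $A_{IO}=A_O$ to read the latter as ${}_{[1-C_O]A_O D_I}W_{(B,C,A,D)}=0$ gives ${}_{[1-C_O]D_I}W_{(B,C,A,D)}=0$; summing the two summands yields ${}_{[1-C_O]D_I}W_{(B)}=0$. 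Hence $W_{(B)}$ is compatible with $A\prec B\prec C\prec D$ and $W_{(C)}$ with $A\prec C\prec B\prec D$ (cf.\ Eq.~\eqref{eq:constr_fixed_order_N}). Setting $\tilde W_{(A,B,C,D)}\coloneqq W_{(A,B,C,D)}+W_{(B)}\ge0$ and $\tilde W_{(A,C,B,D)}\coloneqq W_{(A,C,B,D)}+W_{(C)}\ge0$ then produces a two-term decomposition of the form~\eqref{eq:decomp_CS_W_4_no_AI_no_DO} whose summands satisfy the constraints~\eqref{eq:decomp_CS_W_4_no_AI_no_DO_constr}.

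For \emph{sufficiency}, the plan is to observe that any $W$ obeying Eqs.~\eqref{eq:decomp_CS_W_4_no_AI_no_DO_constr0A}, \eqref{eq:decomp_CS_W_4_no_AI_no_DO} and~\eqref{eq:decomp_CS_W_4_no_AI_no_DO_constr} is simply an instance of the decomposition of Proposition~\ref{prop:CS_charact_4_no_DO} with $W_{(B)}=W_{(C)}=0$ and $W_{(A)}=W$: Eq.~\eqref{eq:decomp_CS_W_4_no_AI_no_DO_constr0A} is precisely Eq.~\eqref{eq:decomp_CS_W_4_no_DO_constrWX} for $X=A$, while Eq.~\eqref{eq:decomp_CS_W_4_no_AI_no_DO_constr} reproduces Eq.~\eqref{eq:decomp_CS_W_4_no_DO_constrWXYZD}, so the sufficiency of Proposition~\ref{prop:CS_charact_4_no_DO} applies. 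I expect essentially all of the work to be the one-line verification of ${}_{[1-C_O]D_I}W_{(B)}=0$ in the necessity direction; the rest is bookkeeping, and the collapse to only two terms is what makes this scenario (unlike the general fourpartite case) one in which the necessary and sufficient conditions coincide.
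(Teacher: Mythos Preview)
Your proposal is correct and follows essentially the same route as the paper, which omits the proof and simply states that it ``follows that of Proposition~\ref{prop:CS_charact_3_no_AI} very closely''---i.e., derive the characterisation from the $d_{D_O}=1$ case (Proposition~\ref{prop:CS_charact_4_no_DO}) by absorbing $W_{(B)}$ and $W_{(C)}$ into the two $A$-first ordered terms. Your one minor streamlining is invoking Eq.~\eqref{eq:LN_dAIf_1} on the valid process matrix $W_{(B)}$ to get ${}_{[1-B_O]C_{IO}D_I}W_{(B)}=0$ directly, whereas the tripartite template in the paper obtains the analogous constraint by combining ${}_{[1-A_O]C_{IO}}W_{(B)}=0$ (from the summands) with ${}_{[1-B_O]A_OC_{IO}}W_{(B)}=0$ via the identity $[1-A_O]+A_O=1$; both arguments are equivalent here.
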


We emphasise again that the two summands $W_{(A,B,C,D)}$ and $W_{(A,C,B,D)}$ above are not necessarily valid process matrices, thus allowing for dynamical causal orders.
We omit the proof of Proposition~\ref{prop:CS_charact_4_no_AI_no_DO} here, as it follows that of Proposition~\ref{prop:CS_charact_3_no_AI} very closely.
We note, as an aside, that both Propositions~\ref{prop:CS_charact_3_no_CO} and~\ref{prop:CS_charact_3_no_AI} could be obtained as corollaries of Proposition~\ref{prop:CS_charact_4_no_AI_no_DO} after removing one party. Namely, by imposing $d_{A_O} = 1$ above (which, in particular, makes Eq.~\eqref{eq:decomp_CS_W_4_no_AI_no_DO_constr0A} trivial), ignoring $A$ and relabelling $(B,C,D) \to (A,B,C)$ we obtain Proposition~\ref{prop:CS_charact_3_no_CO}; by imposing $d_{D_I} = 1$ instead in Proposition~\ref{prop:CS_charact_4_no_AI_no_DO} we directly obtain Proposition~\ref{prop:CS_charact_3_no_AI}.

\medskip

To conclude this section, we further note that Propositions~\ref{prop:CS_charact_4_no_DO} and~\ref{prop:CS_charact_4_no_AI_no_DO} generalise straightforwardly to cases with more parties $D, E, \ldots$ that have no outgoing spaces (by simply replacing $D_I$ by $D_IE_I\cdots$). Hence, we can give necessary and sufficient conditions for causal separability in any $N$-partite scenario in which at most 3 parties have nontrivial outgoing spaces.

\section{Explicit witness of causal nonseparability for $W^\text{act.}$}
\label{app:witness_Wactiv}

In this appendix we provide an explicit witness of causal nonseparability for the process matrix $W^\text{act.}$ introduced in Sec.~\ref{subsec:comparison}.

According to Eq.~\eqref{eq:Scone_N3_no_CO} in Appendix~\ref{app:witness_examples} below, in the tripartite scenario in which $W^\text{act.}$ is defined, where $d_{C_O} = 1$, the cone of causal witnesses can be characterised as
\begin{widetext}
\begin{align}
{\cal S} &= \big\{ S \in A_{IO}B_{IO}C_I \mid S = S_A^{(+)} + S_A^{(1)} + S_A^{(2)} \text{ with } S_A^{(+)} \ge 0, {}_{[1-A_O]B_{IO}C_I}S_A^{(1)} = S_A^{(1)}, {}_{[1-B_O]C_I}S_A^{(2)} = S_A^{(2)}, \notag \\
& \hspace{33mm} S = S_B^{(+)} + S_B^{(1)} + S_B^{(2)} \text{ with } S_B^{(+)} \ge 0, {}_{[1-B_O]A_{IO}C_I}S_B^{(1)} = S_B^{(1)}, {}_{[1-A_O]C_I}S_B^{(2)} = S_B^{(2)} \, \big\} \,. \label{eq:Scone_part_N3_bis}
\end{align}
Using the approach of Sec.~\ref{subsec:witnesses}, we obtained the following causal witness for $W^\text{act.}$, written, as in the definition~\eqref{eq:def_W_activ} of $W^\text{act.}$, in the order $C_IA_IB_IA_OB_O$:
\begin{align}
S^\text{act.} &= \frac14 \Big[ \id(\id\id{-}\hat{\textsc{z}}\hat{\textsc{z}})(\id\id{-}\hat{\textsc{z}}\hat{\textsc{z}}) - \frac23 \id (\hat{\textsc{x}}\hat{\textsc{x}}{+}\hat{\textsc{y}}\hat{\textsc{y}})(\id \hat{\textsc{z}}{+}\hat{\textsc{z}}\id) \notag \\[-1mm]
& \hspace{8mm} +\frac{1}{\sqrt{3}} \hat{\textsc{z}}(\id \hat{\textsc{z}}{-}\hat{\textsc{z}}\id)(\id\id{-}\hat{\textsc{z}}\hat{\textsc{z}}) +\frac{1}{\sqrt{3}} \hat{\textsc{x}}(\hat{\textsc{x}}\hat{\textsc{y}}{-}\hat{\textsc{y}}\hat{\textsc{x}})(\id \hat{\textsc{z}}{-}\hat{\textsc{z}}\id) + \frac13 \, \hat{\textsc{y}}(\hat{\textsc{x}}\hat{\textsc{x}}{+}\hat{\textsc{y}}\hat{\textsc{y}})(\id \hat{\textsc{z}}{-}\hat{\textsc{z}}\id) \Big] \, . \label{eq:def_S_activ}
\end{align}
\end{widetext}

To see that $S^\text{act.}$ indeed defines a valid causal witness, one can verify that it admits decompositions as in Eq.~\eqref{eq:Scone_part_N3_bis} above, with (still written in the same order) $S_A^{(1)} = 0,$ $S_A^{(2)} = \frac14 [-\frac43 \id (\hat{\textsc{x}}\hat{\textsc{x}}{+}\hat{\textsc{y}}\hat{\textsc{y}})\id \hat{\textsc{z}} ],$ $S_A^{(+)} = S^\text{act.} - S_A^{(2)}$ and similarly $S_B^{(1)} = 0,$ $S_B^{(2)} = \frac14 [-\frac43 \id (\hat{\textsc{x}}\hat{\textsc{x}}{+}\hat{\textsc{y}}\hat{\textsc{y}})\hat{\textsc{z}}\id ],$ $S_B^{(+)} = S^\text{act.} - S_B^{(2)}$. One can easily check that all constraints in Eq.~\eqref{eq:Scone_part_N3_bis} are satisfied.

With $S^\text{act.}$ thus defined, one finds $\Tr[S^\text{act.} \cdot W^\text{act.}] = -(\frac{4}{\sqrt{3}} - 2) < 0$, which proves that $W^\text{act.}$ is indeed causally nonseparable according to our Definition~\ref{def:our_def-CS}---or equivalently, to Ara\'ujo \emph{et al.}'s Definition~\ref{def:AB+-CS}, or ``extensibly causally nonseparable'' according to Definition~\ref{def:OG-ECS}---even though, as proven in Sec.~\ref{subsec:comparison}, it is causally separable according to OG's Definition~\ref{def:OG-CS}.

Since the causal witness $S^\text{act.}$ above was obtained with the SDP optimisation technique described in Sec.~\ref{subsec:witnesses}, it allows us to determine the robustness of $W^\text{act.}$ to white noise.
From Eq.~\eqref{eq:rSoptReln} we thus find that its random robustness is $r^* = - \Tr[S^\text{act.} \cdot W^\text{act.}] = \frac{4}{\sqrt{3}} - 2 \simeq 0.31$.

\subsubsection*{``Activation'' of causal nonseparability with $W^\text{act.}$}

It is instructive to see explicitly how causal nonseparability can be ``activated'' by attaching an entangled ancillary state to $W^\text{act.}$.

Recall that $W^\text{act.}$ is compatible with party $C$ acting first. As shown in Sec.~\ref{subsec:comparison}, it is such that for any CP map (or POVM element) $M_{\vec c}$ applied by $C$, the conditional bipartite process matrix $(W^\text{act.})_{|M_{\vec c}}$ is causally separable. This is precisely why $W^\text{act.}$ is considered to be causally separable according to OG's Definition~\ref{def:OG-CS}.

Consider now attaching an ancillary maximally entangled state $\rho = \ketbra{\Phi^+}^{A_{I'}/C_{I'}}$, shared by $A$ and $C$ with dimensions $d_{A_{I'}} = d_{C_{I'}} = d_{C_I}$, and letting $C$ project his two incoming systems onto $\ketbra{\Phi^+}^{C_I/C_{I'}}$. The resulting conditional process matrix $(W^\text{act.}\otimes \rho)_{|M_C = \ketbra{\Phi^+}}$ shared by $A$ and $B$ is then (up to normalisation) $(W^\text{act.})^{C_I \to A_{I'}}$, i.e., it is formally represented by the same matrix as $W^\text{act.}$, Eq.~\eqref{eq:def_W_activ}, with party $C$'s incoming system now given to party $A$ (see Lemma~\ref{lemma:teleportation} of Appendix~\ref{app:characterisation_CS_Ws}). One can verify that $(W^\text{act.}\otimes \rho)_{|M_C = \ketbra{\Phi^+}}$ thus obtained is causally nonseparable by constructing a (bipartite) causal witness using, for instance, the explicit characterisation of Eq.~\eqref{eq:Scone_N2} below, in a similar way to what we did for $W^\text{act.}$ above.

Note, however, that this argument is not sufficient to conclude that $W^\text{act.}$ is (extensibly) causally nonseparable: one indeed needs to prove that \emph{for any} possible decomposition of the form $W^\text{act.}\otimes \rho = W_{(A)}^\rho + W_{(B)}^\rho + W_{(C)}^\rho$ with each $W_{(X)}^\rho$ compatible with party $X$ acting first, there exist CP maps $M_A$, $M_B$, or $M_C$, that make either $(W_{(A)}^\rho)_{|M_A}$, $(W_{(B)}^\rho)_{|M_B}$ or $(W_{(C)}^\rho)_{|M_C}$ causally nonseparable.%
\footnote{Indeed, a process matrix compatible with $C$ first (in short, of the form $W = W_{(C)}$), and such that for some CP map $M_C$ the conditional process matrix $W_{|M_C}$ is causally nonseparable, may still be causally separable if it also has another, causally separable, decomposition of the form $W = W_{(A)}' + W_{(B)}' + W_{(C)}'$.
An example is for instance $W = W_0 \otimes \ketbra{\Phi^+}^{A_{I'}/C_{I'}}$ with (written again in the order $C_IA_IB_IA_OB_O$)
\begin{align}
W_0 = \frac18 \Big[ & (\id\id\id + \id \hat{\textsc{z}} \hat{\textsc{z}} + \hat{\textsc{z}} \id \hat{\textsc{z}} + \hat{\textsc{z}} \hat{\textsc{z}} \id) \id \id \notag \\[-1mm]
& + {\textstyle \frac{1}{\sqrt{2}}} ( \hat{\textsc{x}} \hat{\textsc{x}} \hat{\textsc{x}} - \hat{\textsc{x}} \hat{\textsc{y}} \hat{\textsc{y}} - \hat{\textsc{y}} \hat{\textsc{x}} \hat{\textsc{y}} - \hat{\textsc{y}} \hat{\textsc{y}} \hat{\textsc{x}} ) \hat{\textsc{x}} \id \notag \\[-1mm]
& + {\textstyle \frac{1}{\sqrt{2}}} ( \hat{\textsc{y}} \hat{\textsc{y}} \hat{\textsc{y}} - \hat{\textsc{y}} \hat{\textsc{x}} \hat{\textsc{x}} - \hat{\textsc{x}} \hat{\textsc{y}} \hat{\textsc{x}} - \hat{\textsc{x}} \hat{\textsc{x}} \hat{\textsc{y}} ) \id \hat{\textsc{y}}  \Big] \, .
\end{align}
One can check that $W \in \L^{C \prec \{A,B\}}$ and that with $M_C = \ketbra{\Phi^+}^{C_I/C_{I'}}$, the bipartite conditional process matrix $W_{|M_C}$ is causally nonseparable---even though $W$ is also compatible with the fixed order $A \prec B \prec C$ (and is hence causally separable). (Here the ancillary entangled state attached to $W_0$ and the CP map $M_C$ allow party $C$ to ``teleport'' their incoming system in $W_0$ to $A$; the same observation holds if $C$ teleports his system to $B$ instead.)
\\
A similar observation can be made at the level of correlations: a tripartite correlation $P(a,b,c|x,y,z)$ compatible with $C$ first and such that the bipartite conditional correlation $P_{z,c}(a,b|x,y) \coloneqq P(a,b|x,y,z,c)$ is noncausal for some $z,c$ may in general still be causal. An example with binary inputs and outputs $0,1$ is $P(a,b,c|x,y,z) \coloneqq \frac12 \delta_{b,x} \delta_{c,a \oplus y}$, which is indeed compatible with $C$ first (as $P(c|x,y,z) = \frac12$ does not depend on $x,y$) and is such that conditioned on $C$'s output $c = 0$, the resulting conditional bipartite correlation $P_{z,c=0}(a,b|x,y) = \delta_{b,x} \delta_{a,y}$ shared by $A$ and $B$ is noncausal (it violates the ``Guess Your Neighbour's Input'' inequality~\cite{branciard16} maximally). Nevertheless, $P$ is clearly also compatible with the fixed order $A \prec B \prec C$, and is hence causal.
\\
Note that the argument given by OG in Ref.~\cite{oreshkov16} to show activation of causal nonseparability consisted precisely in proving that, after attaching an ancillary state, the correlations generated by a given tripartite process matrix were compatible with $C$ first and such that the bipartite conditional correlation $P_{z,c}(a,b|x,y)$ was noncausal. In that case, however, $C$ was performing a deterministic operation (i.e., $c$ could only take a single fixed value), so this argument was enough, in their case, to prove that the tripartite correlation under consideration was indeed noncausal~\cite{abbott16,oreshkov18a}.
}
Our construction of a causal witness for $W^\text{act.}$ confirms nonetheless that this must indeed be the case, which allows us conclude, using OG's terminology, that the entangled ancillary state $\rho$ introduced here indeed ``activates'' the causal nonseparability of $W^\text{act.}$.

\section{Equivalence between Oreshkov and Giarmatzi's extensible causal (non)separability and our definition of causal (non)separability}
\label{app:equiv_OGECS_our_def}

\renewcommand{\theproposition}{D\arabic{proposition}}
\setcounter{proposition}{0}

In this appendix we prove that OG's Definition~\ref{def:OG-ECS} of extensible causal (non)separability and our Definition~\ref{def:our_def-CS} of multipartite causal (non)separability are equivalent. 

\begin{proof}
Let $W$ be an $N$-partite process matrix that is causally separable as per our Definition~\ref{def:our_def-CS}. The conditional $(N{-}1)$-partite process matrices $(W_{(k)}^\rho)_{|M_k} = \Tr_k [M_k \otimes \id^{\N \backslash k} \, \cdot \, W_{(k)}^\rho]$ in Definition~\ref{def:our_def-CS} are again causally separable (as per our definition), and thus fulfil in particular Definition~\ref{def:OG-CS}. Therefore, $W \otimes \rho$ is causally separable as per OG's Definition~\ref{def:OG-CS} (OG-CS) for any $A_{I'}^\N$ and any ancillary quantum state $\rho \in A_{I'}^\N$. That is, $W$ is extensibly causally separable as per OG's Definition~\ref{def:OG-ECS} (OG-ECS).

\medskip

The proof of the converse is more involved.
The idea is to consider, for an $N$-partite OG-ECS process matrix $W$ and two ancillary quantum states $\rho'$ and $\rho''$, the extended process matrices $W \otimes \rho'$ and $W \otimes \rho' \otimes \rho''$, which are both OG-CS. By comparing the corresponding decompositions we will show that the conditional $(N{-}1)$-partite process matrices obtained from the decomposition of $W \otimes \rho'$ are not only OG-CS, but also OG-ECS. From there, one can conclude by induction that $W$ then also satisfies our Definition~\ref{def:our_def-CS}.

The difficulty here is that the causally separable decomposition of $W \otimes \rho$ (for $\rho = \rho'$ or $\rho = \rho' \otimes \rho''$ in our case here) depends, \emph{a priori}, on $\rho$. The following proposition states, however, that there exists a decomposition of $W$ that provides a unique causally separable decomposition of $W \otimes \rho$ for any $\rho$.

\begin{proposition} \label{prop:convexDec}
Any $N$-partite extensibly causally separable (OG-ECS) process matrix $W$, as per Definition~\ref{def:OG-ECS}, can be decomposed as
\begin{equation}
 W = \sum_{k \in \N} q_k \, W_{(k)} , \label{eq:convexDec}
\end{equation}
with $q_k\ge 0$, $\sum_k q_k = 1$, and where for each $k$, $W_{(k)}$ is a process matrix compatible with party $A_k$ acting first, and is such that for any extension  $A_{I'}^\N$, any ancillary quantum state $\rho \in A_{I'}^\N$ and any possible CP map $M_k \in A_{II'O}^k$ applied by party $A_k$, the conditional $(N{-}1)$-partite process matrix $(W_{(k)} \otimes \rho)_{|M_k} \coloneqq \Tr_k [M_k \otimes \id^{\N \backslash k} \, \cdot \, W_{(k)} \otimes \rho]$ is causally separable (OG-CS) as per Definition~\ref{def:OG-CS}.
\end{proposition}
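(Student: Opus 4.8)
The plan is to prove Proposition~\ref{prop:convexDec} by a compactness argument that swaps the quantifiers ``for every $\rho$ there exists a good decomposition'' into ``there exists a decomposition that is good for every $\rho$.'' First I would fix the compact convex set of candidate decompositions of $W$,
\[
\Delta \coloneqq \Big\{ (W_{(k)})_{k\in\N} \ \Big|\ W_{(k)} \ge 0,\ W_{(k)} \in \L^{A_k \prec (\N\backslash A_k)},\ {\textstyle\sum_{k}} W_{(k)} = W \Big\},
\]
which is nonempty since $W$ is already OG-CS (take a trivial $\rho$ in Definition~\ref{def:OG-ECS}), closed by closedness of the cone and of $\L^{A_k \prec (\N\backslash A_k)}$, and bounded because $0 \le W_{(k)} \le W$. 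For each extension and state $\rho$, let $G(\rho)\subseteq\Delta$ collect the decompositions for which $(W_{(k)}\otimes\rho)_{|M_k}$ is OG-CS for every $k$ and every CP map $M_k$. Since the set of OG-CS matrices is closed (which follows by induction on $N$, passing decompositions to a limit via compactness) and $(W_{(k)})_k \mapsto (W_{(k)}\otimes\rho)_{|M_k}$ is linear, each $G(\rho)$ is closed, hence compact. The goal then becomes $\bigcap_\rho G(\rho)\neq\emptyset$, which by the finite-intersection property reduces to showing that every finite intersection (in particular each single $G(\rho)$) is nonempty.

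The technical engine is a reduction-monotonicity lemma: if $(W_{(k)})_k \in G(\hat\rho)$ for a state $\hat\rho$ on an ancilla space $A_{I'}^\N \otimes E$, and $\rho = \Tr_E \hat\rho$, then $(W_{(k)})_k \in G(\rho)$. I would prove this by splitting $E = E_k \otimes E_{\bar k}$, with $E_k$ the part assigned to $A_k$, and writing
\[
(W_{(k)}\otimes\rho)_{|M_k} = \Tr_{E_{\bar k}}\big[(W_{(k)}\otimes\hat\rho)_{|M_k\otimes\id^{E_k}}\big],
\]
using that the conditional over $A_k$'s systems commutes with the partial trace over the other parties' ancillas $E_{\bar k}$, and that conditioning with $M_k$ after discarding $E_k$ is the same as conditioning with $M_k\otimes\id^{E_k}$. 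The right-hand side is a partial trace over incoming ancillas of an OG-CS matrix, hence OG-CS by the OG-CS analogue of Proposition~\ref{prop:traceout} (proved by the same induction, since discarding incoming ancillary systems cannot create causal nonseparability). In words: being good for $\hat\rho$ implies being good for any reduction of $\hat\rho$.

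With the lemma in hand, nonemptiness of $G(\rho)$ follows from the pure case. For pure $\rho$, OG-ECS gives $W\otimes\rho = \sum_k q_k U_{(k)}$ with each $U_{(k)}$ compatible with $A_k$ first and all conditionals OG-CS; by the extremality argument of Proposition~\ref{prop:factor_rho} each summand factorises as $U_{(k)}=W_{(k)}\otimes\rho$, so $(q_k W_{(k)})_k\in G(\rho)$. For mixed $\rho$ I would purify it to a pure $\hat\rho$ on $A_{I'}^\N\otimes E$ (assigning $E$ to one party's incoming space), obtain $G(\hat\rho)\neq\emptyset$ from the pure case, and apply the lemma to $\rho=\Tr_E\hat\rho$ to conclude $G(\rho)\neq\emptyset$. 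Finally, for a finite family $\rho_1,\dots,\rho_m$ I would set $\rho\coloneqq\rho_1\otimes\cdots\otimes\rho_m$ on the tensor product of the respective ancilla spaces; each $\rho_i$ is a reduction of $\rho$, so the lemma gives $G(\rho)\subseteq\bigcap_i G(\rho_i)$, which is then nonempty by the mixed case. This establishes the finite-intersection property, and compactness of $\Delta$ yields a single $(W_{(k)})_k\in\bigcap_\rho G(\rho)$—precisely the $\rho$-independent decomposition claimed. I expect the reduction-monotonicity lemma, and in particular checking that the conditional commutes correctly with the two partial traces, to be the main obstacle; the pure-state factorisation of Proposition~\ref{prop:factor_rho} is the crucial simplification that makes the single-$\rho$ nonemptiness tractable and lets the whole argument go through.
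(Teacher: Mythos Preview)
Your proposal is correct and follows essentially the same strategy as the paper: define, for each ancillary state $\rho$, the compact set of ``good'' decompositions of $W$; establish the finite-intersection property by tensoring finitely many $\rho_i$'s together and using that tracing out incoming ancillas preserves OG-CS (the paper invokes Proposition~\ref{prop:traceout}, your reduction-monotonicity lemma); then apply the compactness theorem for nested families (the paper cites Rudin's Theorem~2.36). Your packaging of the trace-out step as an explicit reduction-monotonicity lemma and your separate treatment of the mixed-$\rho$ case via purification are slightly more explicit than the paper's presentation, but the underlying argument is identical.
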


\begin{proof}
Consider an $N$-partite OG-ECS process matrix $W$. By Definition~\ref{def:OG-ECS}, for any extension $A_{I'}^\N$ and any state $\rho \in A_{I'}^\N$, the extended $N$-partite process matrix $W \otimes \rho$ must have a decomposition of the form
\begin{equation}\label{eq:convexDec1}
 W \otimes \rho = \sum_{k \in \N} q_k \, W_{(k)}^\rho,
\end{equation}
where for each $k$, $W_{(k)}^\rho$ is a process matrix compatible with party $A_k$ acting first, and such that whatever that party does, the resulting conditional process matrix for the other $(N{-}1)$ parties is OG-CS. 

By an argument similar to that of Proposition~\ref{prop:factor_rho}, it is easy to see that $W_{(k)}^\rho$ can without loss of generality be taken to be of the form $W_{(k)}^\rho = W_{(k)} \otimes \rho$.
We emphasise again that the convex decomposition of $W = \sum_{k \in \N} q_k \, W_{(k)}$ that then follows from Eq.~\eqref{eq:convexDec1} could \emph{a priori} depend on the ancillary state $\rho$. We will however now show that for all extensions and ancillary quantum states one can choose the same decomposition of $W$.

First, note that for any finite set of extensions $\{A_{I_1}^\N, A_{I_2}^\N ,\ldots, A_{I_n}^\N \}$ and ancillary quantum states  $\{ \rho_1 \in A_{I_1}^\N, \rho_2 \in A_{I_2}^\N ,\ldots, \rho_n \in A_{I_n}^\N \}$ under consideration, one can indeed choose the same decomposition---consider the ancillary state $\rho_1 \otimes \cdots \otimes \rho_n \in A_{I_1}^\N \otimes \cdots \otimes A_{I_n}^\N$, and the corresponding decomposition
\begin{equation}\label{eq:decTensorProduct}
 W \otimes \rho_1 \otimes \cdots \otimes \rho_n = \sum_{k \in \N} q_k \, W_{(k)} \otimes \rho_1 \otimes \cdots \otimes \rho_n, 
\end{equation}
with each $W_{(k)} \otimes \rho_1 \otimes \cdots \otimes \rho_n$---and therefore, each $W_{(k)}$---a process matrix compatible with party $A_k$ acting first, and such that for any operation $M_k$ applied by $A_k$ the resulting conditional process matrix $(W_{(k)} \otimes \rho_1 \otimes \cdots \otimes \rho_n)_{|M_k}$ for the other $(N{-}1)$ parties is OG-CS. 
Proposition~\ref{prop:traceout} now implies that these conditional process matrices remain causally separable when tracing out all but one ancillary states in the tensor product. Therefore, the decomposition of $W$ obtained from Eq.~\eqref{eq:decTensorProduct} can be chosen for any of the individual $\rho_j \in A_{I_j}^\N$.

Next, one uses the following result from basic topology:
 \begin{customthm}{2.36 in Ref.~\cite{rudin76}}
    If $\{K_\alpha\}$ is a collection of compact subsets of a metric space $X$ such that the intersection of every finite subcollection of $\{K_\alpha \}$ is nonempty, then $\bigcap K_\alpha$ is nonempty.
    \end{customthm}
    
    Here, let the index set be the set of all possible ancillary quantum states (of any dimension), and the set $K_\rho$, indexed by some quantum state $\rho$, be the set of possible causally separable decompositions of $W$ corresponding to the ancillary state $\rho$. 
	The finite intersection property follows from the observation above---for any finite set of ancillary states $\{\rho_1, \ldots, \rho_n \}$, there exists a common decomposition, that is, the intersection $K_{\rho_1} \cap \cdots \cap K_{\rho_n}$ is nonempty. 
	As the conditions of the above Theorem are satisfied,%
\footnote{More precisely, let $X$ be the space of $N$-tuples of Hermitian matrices $\vec{W} = (W_{(1)},\ldots,W_{(N)})$, equipped with the standard Euclidean metric, and let the sets $K_\rho$ be defined as
\begin{align}
\label{eq:char_K_rho}
& K_\rho \coloneqq \Big\{(W_{(1)},\ldots,W_{(N)})\,\,\Big|\, \Big( {\textstyle \sum_{k=1}^N} W_{(k)} = W \Big) \ \text{and} \notag\\[-1mm]
& \hspace{12mm} \Big( W_{(k)} \in \L^{A_k \prec (\N \backslash A_k)} \ \forall k \Big) \ \text{and} \ \left(W_{(k)} \ge 0 \ \forall k\right) \notag\\[-1mm]
& \hspace{22mm} \text{and} \ \Big( (W_{(k)} \otimes \rho)_{|M_k} \ \text{is OG-CS} \ \forall k, M_k \Big) \Big\} . 
\end{align}
It follows from the positivity of the $W_{(k)}$'s and the normalisation of $W$ that the sets $K_\rho$ are bounded. One can further easily convince oneself that the sets characterised by the four individual conditions in Eq.~\eqref{eq:char_K_rho} are closed, and thus, as it is the intersection of closed sets, that $K_\rho$ is closed. 
The sets $K_\rho$ being bounded and closed, it follows from the Heine-Borel theorem that they are compact, as required for Theorem 2.36 in Ref.~\cite{rudin76} to be applicable here.}
it guarantees that the intersection $\bigcap K_\rho$ over all quantum states $\rho$ is nonempty.
That is, there exists indeed a convex decomposition of $W$,
\begin{equation}
 W = \sum_{k \in \N} q_k \, W_{(k)} , 
\end{equation}
with $q_k\ge 0$, $\sum_k q_k = 1$, and where for each $k$, $W_{(k)}$ is a process matrix compatible with party $A_k$ acting first, and is such that for any extension $A_{I'}^\N$, any ancillary quantum state $\rho \in A_{I'}^\N$ and any possible CP map $M_k \in A_{II'O}^k$ applied by party $A_k$, the conditional $(N{-}1)$-partite process matrix $(W_{(k)} \otimes \rho)_{|M_k}$ is OG-CS.
\end{proof} 

One can now prove by induction that any OG-ECS process matrix is causally separable according to our Definition~\ref{def:our_def-CS}. In the single-partite case ($N{=}1$), the claim is trivial. Suppose, for $N \ge 2$, that the claim holds true in the $(N{-}1)$-partite case.
Let then $W$ be an $N$-partite OG-ECS process matrix.
According to Proposition~\ref{prop:convexDec}, $W$ has a decomposition of the form~\eqref{eq:convexDec}, such that for any $k$, for any arbitrary extensions $A_{I'}^\N$, $A_{I''}^{\N \backslash k}$, any ancillary quantum states $\rho' \in A_{I'}^\N$, $\rho'' \in A_{I''}^{\N \backslash k}$, and any CP map $M_k \in A_{II'O}^k$ applied by party $A_k$, the conditional $(N{-}1)$-partite process matrices $(W_{(k)} \otimes \rho')_{|M_k}$ and $(W_{(k)} \otimes \rho' \otimes \rho'')_{|M_k} = (W_{(k)} \otimes \rho')_{|M_k} \otimes \rho''$ are OG-CS. That is, for any $A_{I'}^\N$, $\rho'$ and $M_k$, $(W_{(k)} \otimes \rho')_{|M_k}$ is OG-ECS---and therefore, by the induction hypothesis, it is causally separable according to our Definition~\ref{def:our_def-CS}.
Summing up, we thus have, for any $A_{I'}^\N$ and any $\rho' \in A_{I'}^\N$, a decomposition of the form $W \otimes \rho' = \sum_k q_k W_{(k)} \otimes \rho'$ such that for any $M_k \in A_{II'O}^k$, $(W_{(k)} \otimes \rho')_{|M_k}$ is causally separable.
This means that $W$ itself is causally separable as per our Definition~\ref{def:our_def-CS}, which concludes the proof.
\end{proof}

\section{Causally separable process matrices can only generate causal correlations}
\label{app:causal_correl}

\renewcommand{\thedefinition}{E\arabic{definition}}
\setcounter{definition}{0}

Here we show explicitly that causally separable process matrices, according to our Definition~\ref{def:our_def-CS}, can only generate so-called ``causal correlations'' (even when attaching ancillary entangled states).

Let us first recall the definition of $N$-partite causal correlations given in Ref.~\cite{abbott16} (which is equivalent to that first introduced in Ref.~\cite{oreshkov16}):

\begin{definition}[$N$-partite causal correlations] \label{def:causal_correl}
For \mbox{$N=1$}, any correlation $P(a_1|x_1)$ is causal. For $N \ge 2$, an $N$-partite correlation $P(\vec a|\vec x)$ is said to be \emph{causal} if and only if it can be decomposed as 
\begin{equation}
 P(\vec a|\vec x) = \sum_{k \in \N} q_k \, P_k(a_k|x_k) \, P_{k,x_k,a_k}(\vec a_{\N \backslash k} | \vec x_{\N \backslash k}), \label{eq:causal_correl}
\end{equation}
with $q_k\ge 0$, $\sum_k q_k = 1$, where (for each $k$) $P_k(a_k|x_k)$ is a single-partite (and hence causal) correlation and (for each $k,x_k,a_k$) $P_{k,x_k,a_k}(\vec a_{\N \backslash k} | \vec x_{\N \backslash k})$ is a causal $(N{-}1)$-partite correlation.
\end{definition}

By this definition, for $N = 1$, any correlation---and in particular, any correlation generated by a (trivially) causally separable single-partite process matrix---is causal.

Assume, for $N > 1$, that any correlation generated by a $(N{-}1)$-partite causally separable process matrix is causal. Consider then an $N$-partite process matrix $W \in A_{IO}^\N$, some ancillary state $\rho \in A_{I'}^\N$, and some CP maps $M_{a_k|x_k} \in A_{II'O}^\N$ (for any $k,x_k,a_k$), which all together generate the probability distribution
\begin{equation}
P(\vec a | \vec x) = \Tr[ M_{a_1|x_1} \otimes \cdots \otimes M_{a_N|x_N} \cdot W \otimes \rho],
\end{equation}
as in Eq.~\eqref{eq:born_rule_Npartite}.
Assuming that $W$ is causally separable, according to Definition~\ref{def:our_def-CS} $W \otimes \rho$ can be decomposed as in Eq.~\eqref{eq:our_def-CS}, which allows us to write
\begin{align}
P(\vec a | \vec x) &= \sum_{k \in \N} q_k \, \Tr[ M_{a_1|x_1} \otimes \cdots \otimes M_{a_N|x_N} \cdot W_{(k)}^\rho] . \label{eq:P_proof_causal}
\end{align}

Here $W_{(k)}^\rho$ is compatible with party $A_k$ acting first, so that for any set of CPTP maps $M_{x_{k'}}$ with $k' \neq k$,
\begin{align}
& \Tr[ M_{a_k|x_k} \bigotimes_{k' \in \N \backslash k}\!\! M_{x_{k'}} \cdot W_{(k)}^\rho] \notag \\
& = \Tr[ M_{a_k|x_k} \!\!\!\bigotimes_{k' \in \N \backslash k}\!\!\!\! {\textstyle \frac{\id}{d_{A_O^{k'}}}} \cdot W_{(k)}^\rho] = \frac{1}{d_{A_O^{\N \backslash k}}} \!\Tr[ (W_{(k)}^\rho)_{|M_{a_k|x_k}} ], \label{eq:Pk_proof_causal}
\end{align}
which does not depend on the choice of CPTP maps $M_{x_{k'}}$, and defines a probability distribution $P_k(a_k|x_k)$ for party $A_k$.

The conditional process matrix $(W_{(k)}^\rho)_{|M_{a_k|x_k}}$ for parties in $\N \backslash k$ can be renormalised (when nonzero) by defining $(\widetilde W_{(k)}^\rho)_{|M_{a_k|x_k}} \coloneqq \frac{1}{P_k(a_k|x_k)} (W_{(k)}^\rho)_{|M_{a_k|x_k}}$, so that $(\widetilde W_{(k)}^\rho)_{|M_{a_k|x_k}}$ is now a properly normalised process matrix (according to Eq.~\eqref{eq:Pk_proof_causal} above that defines $P_k(a_k|x_k)$, we indeed have $\Tr[ (\widetilde W_{(k)}^\rho)_{|M_{a_k|x_k}} ] = d_{A_O^{\N \backslash k}}$, as required by Eq.~\eqref{eq:constr_norm_W}). We can then write Eq.~\eqref{eq:P_proof_causal} as
\begin{align}
P(\vec a | \vec x) &= \sum_{k \in \N} q_k \, \Tr[ \bigotimes_{k' \in \N \backslash k} M_{a_{k'}|x_{k'}} \cdot (W_{(k)}^\rho)_{|M_{a_k|x_k}}] \notag \\
&= \sum_{k \in \N} q_k \, P_k(a_k|x_k) \, P_{k,x_k,a_k}(\vec a_{\N \backslash k} | \vec x_{\N \backslash k}) \label{eq:causal_correl_proof}
\end{align}
with
\begin{align}
P_{k,x_k,a_k}(\vec a_{\N \backslash k} | \vec x_{\N \backslash k}) \coloneqq \Tr[ \!\! \bigotimes_{k' \in \N \backslash k} \!\!\!\! M_{a_{k'}|x_{k'}} \!\cdot\! (\widetilde W_{(k)}^\rho)_{|M_{a_k|x_k}}] .
\end{align}

Now, by assumption and according to Definition~\ref{def:our_def-CS} $(\widetilde W_{(k)}^\rho)_{|M_{a_k|x_k}}$ must a causally separable process matrix; by the induction hypothesis it can only generate causal correlations, which implies that $P_{k,x_k,a_k}(\vec a_{\N \backslash k} | \vec x_{\N \backslash k})$ is causal. Eq.~\eqref{eq:causal_correl_proof} thus provides a causal decomposition of $P(\vec a | \vec x)$ as in Eq.~\eqref{eq:causal_correl} of Definition~\ref{def:causal_correl}, which proves that the correlation $P(\vec a | \vec x)$ obtained from the $N$-partite causally separable process matrix $W$ is causal, and which, by induction, concludes the proof.

\section{Relationship between our necessary and sufficient conditions for causal separability} 
 
\subsection{A necessary but not sufficient condition}
\label{app:gap_nec_suff}

In our recursive necessary condition of Proposition~\ref{prop:CS_necessary} for general multipartite causal separability, we require the $(N{-}1)$-partite process matrices $W_{(k)}^{A_{IO}^k \to A_{I'}^{k'}}$ to be causally separable for each $k' \neq k$. 
In the tripartite case, it is not necessary to impose this explicitly, since considering the teleportation of $A_k$'s systems to some arbitrary $A_{k'}$ yields necessary conditions that already coincide with the sufficient conditions for tripartite causal separability (see the proof in Appendix~\ref{app:charact_CS_3_general}). In the general case, however, considering the teleportation to just one or some of the parties yields weaker necessary conditions that may not be sufficient. In this appendix we present an explicit fourpartite example. 

We consider the fourpartite scenario where $A$ has a trivial incoming space ($d_{A_I} = 1$) and $D$ has a trivial outgoing space ($d_{D_O} = 1$), and define the following matrix in $A_O \otimes B_{IO} \otimes C_{IO} \otimes D_{I}$:
\begin{align}
W^\text{gap} &\coloneqq \frac18 \Big[\id^{\otimes 6} + \frac{1}{\sqrt{2}} \, \hat{\textsc{z}} (\id\hat{\textsc{z}}\hat{\textsc{z}}\id + \hat{\textsc{z}}\id\hat{\textsc{x}}\hat{\textsc{z}})\id\Big] . 
\end{align}
It is easy to verify that $W^\text{gap}$ satisfies Eq.~\eqref{eq:constr_L_k_first} for $A_k=A$, i.e., that it is a valid process matrix compatible with party $A$ acting first (note its similarity with the original process matrix of Oreshkov, Costa and Brukner~\cite{oreshkov12}). Furthermore, it satisfies ${}_{[1-B_O]A_OC_{IO}D_I}W^\text{gap} = {}_{[1-C_O]A_OD_I}W^\text{gap} = {}_{[1-D_O]}W^\text{gap} = 0$ (as well as ${}_{[1-C_O]A_OB_{IO}D_I}W^\text{gap} = {}_{[1-B_O]A_OD_I}W^\text{gap} = 0$).
Thus, $W^\text{gap}$ can be decomposed as in Eqs.~\eqref{eq:decomp_CS_WA_4}--\eqref{eq:explicit_NC_4partite_2} with $Y=D$ and a single term in the decomposition, $W^\text{gap} = W_{(A)} = W_{(A,B)}^{[A \to D]} = W_{(A,B,C,D)}^{[A \to D]}$ (or $W^\text{gap} = W_{(A)} = W_{(A,C)}^{[A \to D]} = W_{(A,C,B,D)}^{[A \to D]}$) satisfying Eq.~\eqref{eq:explicit_NC_4partite}.
In other words, the tripartite conditional process matrix that we obtain by teleporting $A_O$ to $D_{I'}$ is causally separable (it is compatible with both fixed causal orders $B \prec C \prec D$ and $C \prec B \prec D$).

However, this is not the case when teleporting $A_O$ to $B_{I'}$, or to $C_{I'}$. $W^\text{gap}$ indeed cannot be decomposed as in Eq.~\eqref{eq:explicit_NC_4partite_2} with $Y = B$ or $C$, and is thus causally nonseparable. This can be certified by the causal witness (obtained as described in Sec.~\ref{subsec:witnesses}, with the characterisation of Eq.~\eqref{eq:Scone_N4_no_AI_no_DO} in Appendix~\ref{app:witness_examples})
\begin{align}
S^\text{gap} &\coloneqq \frac18 \Big[\id^{\otimes 6} - \hat{\textsc{z}} (\id\hat{\textsc{z}}\hat{\textsc{z}}\id + \hat{\textsc{z}}\id\hat{\textsc{x}}\hat{\textsc{z}})\id\Big] , 
\end{align}
for which we obtain $\Tr[W^\text{gap}\cdot S^\text{gap}] = 1-\sqrt{2} < 0$.

This shows that, in the general multipartite case, there is indeed a gap between the necessary conditions obtained by teleporting to just some of the parties and those obtained by teleporting to each of the parties, and that the former are not sufficient.

(Note that in the example above $D_I$ did not play any role, as we always had $\id^{D_I}$ in all terms. We could in fact consider the case where $D_I$ is also trivial, $d_{D_I}=1$. We kept here a nontrivial system $D_I$ to clarify the fact that $W^\text{gap}$ was defined in a fourpartite scenario, and that party $D$ does play a role in the argument.)

\subsection{Numerically investigating the (in)equivalence of our necessary and sufficient conditions}
\label{app:numerical_search}

In order to investigate whether the (full version of the) necessary condition in Proposition~\ref{prop:CS_necessary} and the sufficient condition in Proposition~\ref{prop:CS_sufficient} differ in general, we conducted numerical testing to see whether we could find process matrices contained in the cone $\W_+^\sep$ but not in $\W_-^\sep$ (i.e.\ the outer and inner approximations of $\W^\sep$ arising from the necessary and sufficient conditions, respectively).
To this end, we considered the following general approach: we first generated a large number of random process matrices.
For each process matrix $W$, we then solved the primal SDP optimisation problem~\eqref{eq:primalSDP} over the cones $\W_\pm^\sep$ to obtain the corresponding random robustnesses $r^*_\pm$.
If we were to find $r^*_+ \neq r^*_-$ (up to numerical error; note that, since $\W^\sep_- \subseteq \W^\sep_+$, one always has $r^*_+ \le r^*_-$), this would imply the cones differ since one would have $W + r^*_+ \id^\circ \in \W^\sep_+$ but $W + r^*_+ \id^\circ \notin W^\sep_-$.

The size of the SDP problems associated with finding the random robustness of a process matrix meant that we could not solve these problems for the ``complete'' fourpartite scenario with qubit incoming and outgoing spaces for each party (recall that, for three parties, the conditions are already known to coincide).
We therefore considered the restricted scenario in which $d_{A_I}=1$ while the remaining Hilbert spaces are two-dimensional, so that $W$ is thus $(128{\times}128)$-dimensional.
We note that in any simpler scenario, the necessary and sufficient conditions can be be proven to coincide, making this the simplest case of interest.
Indeed, in Appendix~\ref{app:particular_4partite} we already showed that they coincide if one of the four parties has a trivial outgoing space.
If, on the other hand, a second party were to have a trivial incoming space (e.g., $d_{B_I}=1$), it is not difficult to show they again coincide by writing explicitly the necessary and sufficient conditions of Propositions~\ref{prop:CS_necessary} and~\ref{prop:CS_sufficient}, by using the fact that they simplify to Proposition~\ref{prop:CS_charact_3_no_AI} in a tripartite case where (at least) one party has a trivial incoming space, and by using the linearity of the subspaces appearing in the constraints.
We leave the explicit proof of this as an exercise for the reader.

To generate random process matrices, one could follow the hit-and-run approach of Ref.~\cite{feix16}.
Although this approach is guaranteed to sample process matrices uniformly, the high dimensionality of the space of valid process matrices (in this scenario it is $7597$-dimensional) renders this approach intractable.
Instead, forgoing uniformity, we generated matrices by randomly sampling Hermitian positive semidefinite matrices, projecting them onto the space $\L^\N$ of valid process matrices before adding white noise (i.e., $\id^\circ$) until the resulting matrix was again positive semidefinite.

We solved the SDP optimisation problems for the necessary and sufficient conditions for approximately 1000 randomly generated process matrices (including several hundreds in which an additional constraint, namely the symmetry of $W$ between permutations of the parties $B$, $C$ and $D$, was imposed).
These numerical tests failed to provide any potential counterexamples: in all cases we found $r^*_+ = r^*_-$ up to numerical precision.

However, since the space of valid process matrices is so high-dimensional and our sampling method non-uniform, we do not believe that our results on this number of samples provide enough evidence to reasonably conjecture that the necessary and sufficient conditions coincide in this scenario.

\section{Construction of witnesses of causal nonseparability through SDP}
\label{app:witnesses_SDP}

In this appendix we give some further details relating to the construction of witnesses of causal nonseparability through SDP.
Firstly, we discuss the duality of the two SDP problems given in Sec.~\ref{subsec:witnesses}, showing that they are indeed dual and that the Strong Duality Theorem is satisfied.
We then give some additional details on how the characterisations of causal separability can be explicitly translated into SDP constraints in order to find witnesses in practice, giving some explicit examples that both illustrate this and, at the same time, allow the results in Sec.~\ref{subsec:examples} to be readily verified.

\subsection{Duality of SDP problems}
\label{app:witness_duality}

Since both the set of causally separable process matrices $\mathcal{W}^\text{sep}$ and its dual $\mathcal{S}=(\mathcal{W}^\text{sep})^*$ (or the inner and outer approximations $\mathcal{W}_\pm^\text{sep}$ of $\mathcal{W}^\text{sep}$ arising from Propositions~\ref{prop:CS_necessary} and~\ref{prop:CS_sufficient} and their respective duals $\mathcal{S}_\pm=(\mathcal{W}_\pm^\text{sep})^*$, see Sec.~\ref{subsec:witnesses}) are convex cones, the problems of minimising the amount of white noise that must be added to make a process matrix causally separable and finding the witness of causal separability with the most negative value for a given process matrix can be formulated as SDP problems as in Eqs.~\eqref{eq:primalSDP} and~\eqref{eq:dualSDP}, respectively.
For these problems to be efficiently solvable with standard algorithmic techniques for SDP, however, one must show that they have no duality gap (i.e., no difference between the optimal values of an SDP problem and its dual).
Here, we will show that Eqs.~\eqref{eq:primalSDP} and~\eqref{eq:dualSDP} are indeed a primal-dual pair, and that the Strong Duality Theorem holds~\cite{nesterov94}, implying that that their optimal solutions indeed coincide and can therefore be efficiently obtained. This shows, in particular, that the solution to the SDP problem~\eqref{eq:dualSDP} is the optimal witness with respect to the random robustness.

Ref.~\cite{araujo15} showed the duality of two variations of the SDP problems~\eqref{eq:primalSDP} and~\eqref{eq:dualSDP} in the bipartite case: rather than consider the robustness to white noise of a process matrix, they considered the robustness of mixing a given $W$ with any valid process matrix. The optimal solutions to the corresponding SDP problems give the generalised robustness of $W$.
Nevertheless, their approach to proving duality, and the applicability of the Strong Duality Theorem, is easily adapted to (and even simpler for) the random robustness, and the bipartite and some restricted tripartite versions of Eqs.~\eqref{eq:primalSDP} and~\eqref{eq:dualSDP} were already given in Ref.~\cite{branciard16a}.
The same approach can be used in the more general multipartite case to show that these problems (considering the cones $\mathcal{W}^\text{sep}$ or $\mathcal{W}_\pm^\text{sep}$) satisfy the required properties.
Rather than repeating these (somewhat technical and lengthy) arguments, we instead refer the reader to Appendix~E of Ref.~\cite{araujo15} and prove explicitly only the main technical lemma needed to generalise their approach.

\medskip

First, as noted already in~\cite{araujo15,branciard16a}, it is sufficient just to consider the restriction $\mathcal{S}^\N\coloneqq \mathcal{S}\cap\mathcal{L}^\N$ of witnesses in $\mathcal{L}^\N$.
Indeed, for any $S^\perp$ in the orthogonal subspace $(\mathcal{L}^\N)^\perp$ of $\mathcal{L}^\N$ and any process matrix $W$ one has $\Tr[S^\perp \cdot W]=0$, and thus for any $S\in\mathcal{S}$ there exists $S'\in\mathcal{S}^\N$ such that $\Tr[S\cdot W]=\Tr[S'\cdot W]$ for all $W\in\mathcal{L}^\N$.
The formulations given in Eqs.~\eqref{eq:primalSDP} and~\eqref{eq:dualSDP} are only formally dual when $\mathcal{W}^\sep$ and $\mathcal{S}$ are considered as subsets of the vector space $\mathcal{L}^\N$ [or when $\mathcal{S}$ is replaced by $\mathcal{S}^\N$ in Eq.~\eqref{eq:dualSDP}].
However, the fact that the restriction to $\mathcal{S}^\N$ does not change the optimal value of the problem ensures that the optimal solutions coincide in the more general formulation.

The primary element of the proof in Ref.~\cite{araujo15} which needs to be generalised beyond two parties is the need to show that $\mathcal{W}^\sep$ has a nonempty interior (within $\L^\N$, cf.\ their Lemma~7; we also need to check that it is pointed, which is trivial).
To this end, it is sufficient to show that the white noise process matrix $\id^\circ$ is in the interior of $\mathcal{W}^\sep$, i.e., that there exists $\varepsilon>0$ such that for any $W \in \L^\N$ with $\lVert W \rVert_\text{HS} < \varepsilon$ (where $\lVert\cdot\rVert_\text{HS}$ is the Hilbert-Schmidt norm), one has $\id^\circ + W \in \mathcal{W}^\sep$.

Recalling from Appendix~\ref{app:allowed_forbidden_terms} the characterisation of $\mathcal{L}^\N$ in terms of ``allowed'' terms in a Hilbert-Schmidt basis decomposition, let us first note that any allowed Hilbert-Schmidt term $T_k$ which contains $\cdots \sigma_{\mu_k}^{A_I^k}\id^{A_O^k}\cdots$ (with $\sigma^{A_I^k}_{\mu_k}\neq \id$) for some $k\in \N$ is compatible with any fixed causal order where party $A_k$ comes last---i.e., that $T_k \in \L^{A_{\pi_k(1)}\prec\cdots\prec A_{\pi_k(N-1)}\prec A_k}$ for any permutation $\pi_k$ of parties such that $\pi_k(N)=k$ (the same also trivially holds for the allowed identity term $\id^\N$). Indeed, $_{[1-A_O^k]}T_k=0$ and $_{A_{IO}^k}T_k=0$, so that Eq.~\eqref{eq:constr_order_A1__AN} holds for any such order.
It follows that any $W\in\mathcal{L}^\N$ can be written as $W=\sum_{k=1}^N \Omega_k$, where each $\Omega_k\in\mathcal{L}^{A_{\pi_k(1)}\prec\cdots\prec A_{\pi_k(N-1)}\prec A_k}$ (for some arbitrary $\pi_k$ for each $k$); furthermore, the terms $\Omega_k$ can be taken to be orthogonal, so that $\lVert W \rVert_\text{HS}^2 = \sum_{k=1}^N \lVert \Omega_k \rVert_\text{HS}^2$.

Note that the $\Omega_k$'s may not, in general, be positive semidefinite. Nevertheless, if we take $W$ such that $\lVert W \rVert_\text{HS} < \varepsilon \coloneqq \frac{1}{N d_I}$, with $d_I \coloneqq \prod_{k \in \N} d_{A_I^k}$, then each $\lVert \Omega_k \rVert \leq \lVert \Omega_k \rVert_\text{HS} < \frac{1}{N d_I}$ (where $\lVert\cdot\rVert$ is now the spectral norm), so that $\frac{1}{N d_I}\id + \Omega_k \ge 0$. For any such $W$, we thus obtain a decomposition
\begin{equation}
	\id^\circ + W = \sum_{k=1}^N \Big( \frac{1}{N d_I}\id + \Omega_k \Big)
\end{equation}
with $\frac{1}{N d_I}\id + \Omega_k \in \P \cap \mathcal{L}^{A_{\pi_k(1)}\prec\cdots\prec A_{\pi_k(N-1)}\prec A_k}$, which proves that $\id^\circ + W$ is the sum of (valid) process matrices compatible with fixed causal orders, and hence is causally separable: $\id^\circ + W \in \mathcal{W}^\sep$, as desired.

With this verified, the approach of Ref.~\cite{araujo15} can be applied, with the appropriate modifications for the random robustness,%
\footnote{Namely, one can change Eqs.~(E.3)--(E.7) in Ref.~\cite{araujo15} to $E=\L^\N$, $\K=\mathcal{W}^\sep$, $\L = \{r \id^\circ\mid r \in \mathbb{R}\}$, $b=W$ and $c=\id/\prod_{k \in \N} d_{A_O^k}$ (using their notations for $E,\K,\L,b,c$) and then adapt the proof accordingly.}
to show that the required duality indeed holds and that the conditions of the Strong Duality Theorem are satisfied.

\subsection{Explicit SDP constraints and example constructions}
\label{app:witness_examples}

In order to characterise $\mathcal{S}$ more explicitly for a given scenario, as well as to solve both the primal and dual SDP problem using convex optimisation algorithms~\cite{cvx}, it is helpful to write $\mathcal{W}^\sep$ explicitly as intersections and Minkowski sums of convex cones corresponding to individual constraints on causally separable process matrices.
The duality relations~\eqref{eq:duality_relations} can then be exploited to describe $\mathcal{S}$.
Here we give some examples to illustrate this procedure.

\medskip

The simplest example is the bipartite scenario.
From the definition in Eq.~\eqref{def:caus_sep_bi} we see that $\mathcal{W}^\sep=\mathcal{W}^{A\prec B}+\mathcal{W}^{B\prec A}$, where $\mathcal{W}^{A\prec B}=\mathcal{P} \cap \mathcal{L}^{A\prec B}$ and similarly for $\mathcal{W}^{B\prec A}$.
Using Eq.~\eqref{eq:constr_order_A1__AN} to write $\mathcal{L}^{A\prec B}$ and $\mathcal{L}^{B\prec A}$ in terms of spaces defined by individual linear constraints, or directly referring to Proposition~\ref{prop:CS_charact_2}, we see that
\begin{align}
\W^\text{sep} &= {\cal P} \cap \L_{[1-A_O]B_{IO}} \cap \L_{[1-B_O]} \notag \\
& \quad + {\cal P} \cap \L_{[1-B_O]A_{IO}} \cap \L_{[1-A_O]} \,,
\end{align}
with $\L_{[1-A_O]B_{IO}} \coloneqq \{ W \in A_{IO}B_{IO} \mid {}_{[1-A_O]B_{IO}}W = 0 \}$, $\L_{[1-B_O]} \coloneqq \{ W \in A_{IO}B_{IO} \mid {}_{[1-B_O]}W = 0 \}$, etc.
It follows that
\begin{align}
{\cal S} = (\W^\text{sep})^* &= ({\cal P} + \L_{[1-A_O]B_{IO}}^\perp + \L_{[1-B_O]}^\perp) \notag \\
& \quad \cap ({\cal P} + \L_{[1-B_O]A_{IO}}^\perp + \L_{[1-A_O]}^\perp) \,, \label{eq:Scone_N2}
\end{align}
where we used the fact that ${\cal P}$ is self-dual, and where $\L_{[1-A_O]B_{IO}}^\perp = \{ S \in A_{IO}B_{IO} \mid {}_{[1-A_O]B_{IO}}S = S \}$ is the orthogonal subspace of $\L_{[1-A_O]B_{IO}}$, $\L_{[1-B_O]}^\perp = \{ S \in A_{IO}B_{IO} \mid {}_{[1-B_O]}S = S \}$ is the orthogonal subspace of $\L_{[1-B_O]}$, etc.

Note that a slightly different, but equivalent, characterisation was given for the bipartite scenario in Refs.~\cite{araujo15,branciard16a}.
Although their formulation is slightly simpler, we choose to give the above form as it shows more clearly the procedure of obtaining explicit SDP characterisations from the characterisations of causally separable process matrices given in the main text, and it generalises more directly to the multipartite scenario.

\medskip

The next simplest case is the tripartite scenario with $d_{C_O}=1$. 
In this case, causally separable process matrices are characterised by Proposition~\ref{prop:CS_charact_3_no_CO}, from which it follows that
\begin{align}
\W^\text{sep} &= {\cal P} \cap \L_{[1-A_O]B_{IO}C_I} \cap \L_{[1-B_O]C_I} \notag \\
& \quad + {\cal P} \cap \L_{[1-B_O]A_{IO}C_I} \cap \L_{[1-A_O]C_I} \,,
\end{align}
with similar notations for $\L_{[1-A_O]B_{IO}C_I}$, $\L_{[1-B_O]C_I}$, etc. as before.
Similarly to the bipartite case, this leads to
\begin{align}
{\cal S} &= ({\cal P} + \L_{[1-A_O]B_{IO}C_I}^\perp + \L_{[1-B_O]C_I}^\perp) \notag \\
& \quad \ \cap ({\cal P} + \L_{[1-B_O]A_{IO}C_I}^\perp + \L_{[1-A_O]C_I}^\perp) \,. \label{eq:Scone_N3_no_CO}
\end{align}
We note again that two slightly different, but once again equivalent, characterisations were given in Refs.~\cite{araujo15,branciard16a} for this particular tripartite case.

\medskip

In the tripartite scenario with $d_{A_I}=1$ instead (as, e.g., in the example of ``activation of causal nonseparability'' given by Oreshkov and Giarmatzi~\cite{oreshkov16}), Proposition~\ref{prop:CS_charact_3_no_AI} leads to
\begin{align}
\W^\text{sep} =& \, \L_{[1-A_O]B_{IO}C_{IO}} \notag \\
& \ \cap \big[ {\cal P} \cap \L_{[1-B_O]C_{IO}} \cap \L_{[1-C_O]} \notag \\
& \qquad + {\cal P} \cap \L_{[1-C_O]B_{IO}} \cap \L_{[1-B_O]} \big] \,.
\end{align}
It follows that
\begin{align}
{\cal S} =& \, \L_{[1-A_O]B_{IO}C_{IO}}^\perp \notag \\
& \ + ( {\cal P} + \L_{[1-B_O]C_{IO}}^\perp + \L_{[1-C_O]}^\perp ) \notag \\
& \qquad \cap ( {\cal P} + \L_{[1-C_O]B_{IO}}^\perp + \L_{[1-B_O]}^\perp ) \,. \label{eq:Scone_N3_no_AI}
\end{align}

\medskip

In the general tripartite case, the characterisation of Proposition~\ref{prop:CS_charact_3} shows that we can write $\mathcal{W}^\sep$ as
\begin{align}\label{eq:Wsep_N3}
\W^\text{sep} &= \L_{[1-A_O]B_{IO}C_{IO}} \cap \big( {\cal P} \cap \L_{[1-B_O]C_{IO}} \cap \L_{[1-C_O]} \notag \\ 
& \hspace{32mm} + {\cal P} \cap \L_{[1-C_O]B_{IO}} \cap \L_{[1-B_O]} \big) \notag \\
& \ + \L_{[1-B_O]A_{IO}C_{IO}} \cap \big( {\cal P} \cap \L_{[1-A_O]C_{IO}} \cap \L_{[1-C_O]} \notag \\ 
& \hspace{32mm} + {\cal P} \cap \L_{[1-C_O]A_{IO}} \cap \L_{[1-A_O]} \big) \notag \\
& \ + \L_{[1-C_O]A_{IO}B_{IO}} \cap \big( {\cal P} \cap \L_{[1-A_O]B_{IO}} \cap \L_{[1-B_O]} \notag \\ 
& \hspace{32mm} + {\cal P} \cap \L_{[1-B_O]A_{IO}} \cap \L_{[1-A_O]} \big) ,
\end{align}
from which it follows that cone of witnesses is
\begin{align}\label{eq:Scone_N3}
{\cal S} &= \Big( \L_{[1-A_O]B_{IO}C_{IO}}^\perp + ({\cal P} + \L_{[1-B_O]C_{IO}}^\perp + \L_{[1-C_O]}^\perp) \notag \\[-2mm]
& \hspace{33mm} \cap ({\cal P} + \L_{[1-C_O]B_{IO}}^\perp + \L_{[1-B_O]}^\perp) \Big) \notag \\[-1mm]
& \ \cap \Big( \L_{[1-B_O]A_{IO}C_{IO}}^\perp + ({\cal P} + \L_{[1-A_O]C_{IO}}^\perp + \L_{[1-C_O]}^\perp) \notag \\[-2mm]
& \hspace{33mm} \cap ({\cal P} + \L_{[1-C_O]A_{IO}}^\perp + \L_{[1-A_O]}^\perp) \Big) \notag \\[-1mm]
& \ \cap \Big( \L_{[1-C_O]A_{IO}B_{IO}}^\perp + ({\cal P} + \L_{[1-A_O]B_{IO}}^\perp + \L_{[1-B_O]}^\perp) \notag \\[-2mm]
& \hspace{33mm} \cap ({\cal P} + \L_{[1-B_O]A_{IO}}^\perp + \L_{[1-A_O]}^\perp) \Big).
\end{align}

\medskip

Finally, in the fourpartite scenario with $d_{A_I} = d_{D_O} = 1$ (as, e.g., in the version of the quantum switch in Eq.~\eqref{eq:4partiteswitch}), Proposition~\ref{prop:CS_charact_4_no_AI_no_DO} leads to
\begin{align}
\W^\text{sep} =& \, \L_{[1-A_O]B_{IO}C_{IO}D_I} \notag \\
& \ \cap \big[ {\cal P} \cap \L_{[1-B_O]C_{IO}D_I} \cap \L_{[1-C_O]D_I} \notag \\
& \qquad + {\cal P} \cap \L_{[1-C_O]B_{IO}D_I} \cap \L_{[1-B_O]D_I} \big]
\end{align}
and
\begin{align}
{\cal S} =& \, \L_{[1-A_O]B_{IO}C_{IO}D_I}^\perp \notag \\
& \ + ( {\cal P} + \L_{[1-B_O]C_{IO}D_I}^\perp + \L_{[1-C_O]D_I}^\perp ) \notag \\
& \qquad \cap ( {\cal P} + \L_{[1-C_O]B_{IO}D_I}^\perp + \L_{[1-B_O]D_I}^\perp ) \,. \label{eq:Scone_N4_no_AI_no_DO}
\end{align}

\medskip

Once again, for more general scenarios it remains an open question whether the necessary and sufficient conditions of Propositions~\ref{prop:CS_necessary} and~\ref{prop:CS_sufficient} coincide.
Nonetheless, the same approach here can be applied to our necessary condition, which defines the cone $\mathcal{W}_+^\sep$ that is an outer approximation of $\mathcal{W}^\sep$, to characterise a subset $\mathcal{S}_+$ of causal witnesses.
Solving the dual SDP problem~\eqref{eq:dualSDP} over this set allows one to find valid witnesses of causal nonseparability for a given process matrix $W$, even though (without proof that the necessary and sufficient conditions coincide) such a witness may not be optimal amongst the full set of causal witnesses $\mathcal{S}$.

\bibliography{bib_multi_causal_nonsep}

\end{document}